\DeclarePairedDelimiter\ceil{\lceil}{\rceil}
\DeclarePairedDelimiter\floor{\lfloor}{\rfloor}
\theoremstyle{plain}
\newtheorem{theorem}{Theorem}
\newtheorem{proposition}[theorem]{Proposition}
\newtheorem{corollary}[theorem]{Corollary}
\theoremstyle{definition}
\newtheorem{definition}[theorem]{Definition}
\newtheorem{example}[theorem]{Example}
\newtheorem{remark}[theorem]{Remark}
\newcommand{\B}{{\mathcal B}}
\newcommand{\C}{{\mathcal C}}
\newcommand{\X}{{\mathcal X}}
\newcommand{\Y}{{\mathcal Y}}
\newcommand{\R}{{\mathcal R}}
\DeclareMathAlphabet{\mathbfsl}{OT1}{ppl}{b}{it} %{OT1}{cmr}{bx}{it}
\newcommand{\ba}{{\mathbfsl a}}
\newcommand{\bb}{{\mathbfsl b}}
\newcommand{\bc}{{\mathbfsl c}}
\newcommand{\barc}{\overline{\mathbfsl c}}
\newcommand{\bu}{{\mathbfsl u}}
\newcommand{\by}{{\mathbfsl y}}
\newcommand{\bx}{{\mathbfsl{x}}}
\newcommand{\bz}{{\mathbfsl{z}}}
\newcommand{\bbZ}{{\mathbb Z}}
\renewcommand{\ge}{\geqslant}
\renewcommand{\le}{\leqslant}
\newcommand{\dec}{\textsc{Dec}}
\newcommand{\inv}{{\rm Inv}}
\newcommand{\BS}{{\cal B}^{\rm S}}
\newcommand{\BSt}{{\cal B}^{{\rm S}(t)}}
\newcommand{\BDt}{{\cal B}^{{\rm D}(t)}}
\newcommand{\BI}{{\cal B}^{\rm I}}
\newcommand{\BD}{{\cal B}^{\rm D}}
\newcommand{\BSI}{{\cal B}^{\rm SI}}
\newcommand{\BSD}{{\cal B}^{\rm SD}}
\newcommand{\BID}{{\cal B}^{\rm ID}}
\newcommand{\Bedit}{{\cal B}^{\rm edit}}
\newcommand{\CD}{{\cal C}_{\rm D}}
\newcommand{\CSD}{{\cal C}_{\rm SD}}
\newcommand{\Cedit}{{\cal C}_{\rm edit}}
\newcommand{\Nsys}{{N}_{\rm sys}}
\newcommand{\dH}{{d}_{\rm H}}
\newcommand{\etal}{{\em et al.}}
\begin{document}

\pagestyle{empty}

%\title{Coding for Sequence Reconstruction for the Binary Edit Channel}
\title{Coding for Sequence Reconstruction for Single Edits}
%\title{Generalization of Sequence Reconstruction Problem and the Applications in DNA Data Storage  \\[-4mm]}
%\author{}
\author{\IEEEauthorblockN{
Kui Cai\IEEEauthorrefmark{1},
Han Mao Kiah\IEEEauthorrefmark{2}, 
Tuan Thanh Nguyen\IEEEauthorrefmark{1},
and Eitan Yaakobi\IEEEauthorrefmark{3}}\\[2mm]
%and }\\%[-3mm]
\IEEEauthorblockA{
\small
\IEEEauthorrefmark{1}%
Singapore University of Technology and Design, Singapore 487372\\
\IEEEauthorrefmark{2}%
School of Physical and Mathematical Sciences, Nanyang Technological University, Singapore 637371\\
\IEEEauthorrefmark{3}%
Department of Computer Science, Technion --- Israel Institute of Technology, Haifa, 32000 Israel\\
Emails: cai\_kui@sutd.edu.sg, hmkiah@ntu.edu.sg, tuanthanh\_nguyen@sutd.edu.sg, yaakobi@cs.technion.ac.il}

\thanks{The material in this paper will be presented in part at the 
	IEEE International Symposium on Information Theory in June 2020. 
	See~\cite{Kiah.ISIT.2020}.}
}

\maketitle

\vspace{-5mm}

%----------------------------------------------------------------------------------------------------------------------------------------------------------------------------
%SECTION ABSTRACT 

\hspace{-3.5mm}\begin{abstract}
The sequence reconstruction problem, introduced by Levenshtein in 2001, 
considers a communication scenario where the sender {transmits} a codeword from some codebook 
and the receiver obtains multiple noisy reads of the codeword.  
%in which a sequence is transmitted over several noisy channels, and the decoder receives the outputs from every channel. The problem has been studied in a variety of storage and communication scenarios.
The common setup assumes the codebook to be the entire space and 
the problem is to determine the minimum number of distinct reads that is required to reconstruct the transmitted codeword. 

Motivated by modern storage devices, we study a variant of the problem 
where the number of noisy reads $N$ is fixed.
Specifically, we design {\em reconstruction codes} that reconstruct a codeword from $N$ distinct noisy reads.
We focus on channels that introduce single edit {error} (i.e. a single substitution, insertion, or deletion) and their variants,
and design reconstruction codes for all values of $N$.
In particular, for the case of a single edit, we show that as the number of noisy reads increases, 
the number of redundant symbols required can be gracefully reduced from 
$\log_q n+O(1)$ to $\log_q \log_q n+O(1)$, and then to $O(1)$,
where $n$ denotes the length of a codeword.
We also show that these reconstruction codes are asymptotically optimal.
Finally, via computer simulations, we demonstrate that in certain cases,
reconstruction codes can achieve similar performance as classical error-correcting codes with less redundant symbols.  
%We also show that the redundancy of certain reconstruction codes is within one bit of optimality.
%{\color{red}TO BE WRITTEN: As an application for DNA data storage, we study the case for $q=4$.}
\end{abstract}

%\hspace{-3.5mm}\begin{abstract}
%\todo{}
%The {\em Sequence Reconstruction Problem (SRP)}, first proposed by Levenshtein in 2001, considers a model in which a sequence is transmitted over several noisy channels, and the decoder receives the outputs from every channel. The problem has been studied in a variety of storage and communication scenarios. The common setup assumes the codebook to be the entire space and the main problem of interest is to determine the minimum number of channels required to reconstruct the transmitted sequence. %However, in most practical communication and storage systems, the number of noisy reads is a fixed system parameter.
%
%In this paper, we extend the study of the SRP and investigate the problem when the number of channels is limited. In particular, for a given number of channels, we determine the largest size of codebook that can be used and design an efficient corresponding decoding algorithm to reconstruct the original sequence. In addition, motivated by the recent applications that store digital information in DNA and the error behavior in the DNA storage channel, we focus on the case where the channels are prone to substitutions, insertions, and deletions.
%\end{abstract}

% INTRODUCTION

\section{Introduction}
\label{sec:intro}

As our data needs surge, new technologies emerge to store these huge datasets.
Interestingly, besides promising ultra-high storage density, 
certain emerging storage media rely on technologies that provide users with {\em multiple cheap, albeit noisy, reads}. 
In this paper, we leverage on these multiple reads to increase the information capacity of these next-generation devices, or equivalently, {\em reduce the number of redundant bits}. 
Before we formally state our problem, we list two storage scenarios where multiple cheap reads are available to the user.

\begin{enumerate}[(a)]
\item {\bf DNA-based data storage}. 
In these data systems \cite{Church.etal:2012, Goldman.etal:2013, Yazdi.etal:2015b, Yazdi2017.portable, Organick.2018}, digital information is stored in native or synthetic DNA strands and 
to read the information, a user typically employs a sequencing platform like the popular Illumina sequencer or more recently, a nanopore sequencer.
During both synthesis and sequencing process, many copies of the same DNA strand are generated. 
For example, in most sequencing platforms, a DNA strand undergoes {\em polymerase chain reaction} (PCR) and multiple copies of the same strand are created. 
The sequencer then reads all copies and provides multiple (possibly) erroneous reads to the user (see Figure~\ref{fig:nano}).
In nanopore sequencers, these reads are often inaccurate and high-complexity read-alignment and consensus algorithms are required to reconstruct the original DNA strand from these noisy reads. %\cite{Laver2015}. 

To reduce the read-alignment complexity and improve the read accuracy, 
one may employ various coding strategies to design DNA information strands.
Yazdi \etal{} \cite{Yazdi2017.portable} proposed a simple coding strategy and verified it experimentally. 
Later, Cheraghchi \etal{} \cite{Cheraghchi.2019} provided a marker-based coding strategy that has provable reconstruction guarantees.

\item {\bf Racetrack memories}. 
Based on spintronic technology,  a racetrack memory, also known as {\em domain wall memory}, is composed of cells, also called {\em domains}, which are positioned on a tape-like strip (see Figure~\ref{fig:racetrack})
and are separated by {\em domain walls} \cite{Parkin.2008, Zhang.2016}.
%\cite{Parkin.2008, Zhang.2015,Zhang.2016}. 
The magnetization of a domain is programmed to store a single bit value, which can be read by sensing its magnetization direction. 
The reading mechanism is operated by a read-only port, called a {\em head}, together with a reference domain. 
Since the head is fixed, a shift operation is required in order to read all the domains and 
this is accomplished by applying shift current which moves the domain walls in one direction. 
%Thus, shift operations move all the domains one step either to the right or to the left. 
%It is also possible to shift by more than a single step by applying a stronger current. 
%When doing so, it is required to have {\em more than a single head} to read the domain walls. 
Multiple heads can also be used in order to significantly reduce the read access latency 
of the memory. 

When these heads read overlapping segments, we have multiple noisy reads.
Recently, Chee \etal{} \cite{chee2018coding} leveraged on these noisy reads to correct shift errors in racetrack memories. 
They designed an arrangement of heads and devised a corresponding coding strategy to correct such errors with a {\em constant number} of redundant bits.
\end{enumerate}

\begin{figure*}[!t]
	
	\begin{center}
		\footnotesize
		\begin{tikzpicture}
		
		\tikzset{read/.style = {rectangle, draw=red, dashed, line width=1pt, inner sep=3pt}}
		\tikzset{word/.style = {rectangle, draw=blue, line width=1pt, inner sep=3pt}}
		\tikzset{channel/.style = {rectangle, draw, line width=2pt, text width=15mm,align=center, inner sep = 3pt}}
		\tikzset{arrow/.style = {->,> = latex',black,thick}}
		%\tikzset{arrow2/.style = {->,> = latex',black,thick}}
		\node[word](c) at (0,0)    {{\tt AGTCCAGATACCTTTGATGT}};
		\node[channel](pcr) at (3.5,0) {polymerase\\chain\\reaction};
		
		%	\node[word](c1) at (6,1)    {{\tt AGTCCAGATACCTTTGATGT}};
		%	\node[word](c2) at (6,0)    {{\tt AGTCCAGATACCTTTGATGT}};
		%	\node[word](c3) at (6,-1)    {{\tt AGTCCAGATACCTTTGATGT}};
		\node[word](c1) at (7,1)    {{\tt AGTC{\color{blue}CAG}ATACCTT{\color{blue}TG}ATGT}};
		\node[word](c2) at (7,0)    {{\tt A{\color{red}G}{\color{blue}TC}CAGATACCTTTGATGT}};
		\node[word](c3) at (7,-1)    {{\tt AGTCCAGATACCT{\color{blue}TT}GATG{\color{red}T}}};
		
		\node[channel](ch1) at (10.5,1) {nanopore\\sequencer};
		\node[channel](ch2) at (10.5,0) {nanopore\\sequencer};
		\node[channel](ch3) at (10.5,-1) {nanopore\\sequencer};
		
		%	\node[read](r1) at (16,1)  {{\tt AGTC{-}{-}{-}ATACCTT{-}{-}ATGT}};
		%	\node[read](r2) at (16,0)  {{\tt A{\color{red}C}{-}{-}CAGATACCTTTGATGT}};
		%	\node[read](r3) at (16,-1) {{\tt AGTCCAGATACCT{-}{-}GATG{\color{red}G}}};
		\node[read](r1) at (13.5,1)  {{\tt AGTCATACCTTATGT}};
		\node[read](r2) at (13.5,0)  {{\tt ACCAGATACCTTTGATGT{\color{magenta}A}}};
		\node[read](r3) at (13.5,-1) {{\tt AGTCCAGATACCTGATGG}};

		%AGTCCAGATACCTTTGATGT
		%AGTC        ATACCTT     ATGT
		%AC?? CAGATACCTTTGATGT
		%AGTCCAGATACCT??GATGG
		\draw[arrow](c) to (pcr.west);
		\draw[arrow](pcr) to (c1.west);
		\draw[arrow](pcr) to (c2.west);
		\draw[arrow](pcr) to (c3.west);
		
		\draw[arrow](c1) to (ch1.west);
		\draw[arrow](c2) to (ch2.west);
		\draw[arrow](c3) to (ch3.west);
		
		\draw[arrow](ch1) to (r1.west);
		\draw[arrow](ch2) to (r2.west);
		\draw[arrow](ch3) to (r3.west);
		
		%\draw [decorate,decoration={brace,amplitude=10pt},xshift=-4pt,yshift=0pt] (14,1.5) -- (14,-1.5) node [black,midway,xshift=12pt] {codeword?};
		\end{tikzpicture}
		\vspace{2mm}
		\caption{\small Noisy reads from a nanopore sequencer. A DNA strand undergoes PCR and multiple copies of the same strand are created. The sequencer then reads all copies and provides multiple errononeous reads. Here, the basepairs coloured in {\color{blue}blue} were {\color{blue}deleted}, the basepairs coloured in {\color{magenta}magenta} were {\color{magenta}inserted}, while those coloured in {\color{red}red} have {\color{red}substitution errors}.
		\vspace{-2mm}
		}
		\label{fig:nano}
	\end{center}
\end{figure*}

\vspace{2mm}

Motivated by these applications, we study the following coding problem in a general setting.
Consider a data storage scenario where $N$ distinct noisy reads are provided. 
Our task is to design a {\em codebook} such that every codeword can be uniquely reconstructed from any $N$ distinct  noisy reads.
Hence, our fundamental problem is then: how large can this codebook be? 
Or equivalently, what is the {\em minimum number of redundancy {bits}}?

In this paper, we study in detail the case where the reads are affected by a single {\em edit} (a substitution, deletion, or insertion) and its variants.
In particular, for the case of a single edit, we show that as the number of noisy reads increases, 
the number of redundant bits required can be gracefully reduced from 
$\log_q n+O(1)$ to $\log_q \log_q n+O(1)$, and then to $O(1)$,
where $n$ denotes the length of a codeword.

\section{Problem Statement and Contributions}

%\noindent {\bf Levenshtein's Sequence Reconstruction Problem}. 
Consider a data storage scenario described by an error-ball function.
Formally, given an input space $\X$ and {an} output space $\Y$, 
an {\em error-ball} function $B$ maps a {\em word} $\bx\in \X$ 
to a subset of {\em noisy reads} $B(\bx)\subset \Y$.
Given a code $\C\subseteq \X$, we define the {\em read coverage} of $\C$, 
denoted by $\nu(\C;B)$, to be the quantity
\begin{equation}\label{eq:read}
\nu(\C;B)\triangleq \max \Big\{|B(\bx)\cap B(\by)| : \, \bx,\by\in \C, \, \bx\ne\by \Big\}\, .
\end{equation}
\noindent 
In other words, $\nu(\C;B)$ is the maximum intersection between the error-balls of any two codewords in $\C$. 
The quantity $\nu(\C;B)$ was introduced by Levenshtein \cite{Levenshtein.2001}, 
where he showed that the number of reads%
\footnote{In the original paper, Levenshtein used the term ``channels'', instead of reads. 
	Here, we used the term ``reads'' to reflect the data storage scenario.}
required to reconstruct a codeword from $\C$ is at least $\nu(\C;B)+1$.
The problem to determine $\nu(\C;B)$ is referred to as the {\em sequence reconstruction problem}.

The sequence reconstruction problem was studied in a variety of storage and communication scenarios \cite{chee2018coding, Cheraghchi.2019, Gabrys.2018, Konstantinova:2008, Levenshtein.2009, Sala.2017, yehezkeally2018reconstruction}.
In these cases, $\C$ is usually assumed to be the entire input space, $\X$, (all words of some fixed length) or a classical error-correcting code.

However, in most storage scenarios, the number of noisy reads $N$ is a fixed system parameter and 
when $N$ is at most $\nu(\X;B)$, we are unable to {uniquely} reconstruct the codeword. 
This work looks at this regime where we design codes whose read coverage is strictly less than $N$.
Specifically, we say that $\C$ is an $(n,N;B)$-{\em reconstruction code} if 
%$\C\subseteq \{0,1\}^n$ and $\nu(\C;B)<N$.
$\C\subseteq \X$ and $\nu(\C;B)<N$.

This gives rise to a {\em new quantity of interest} that measures the 
{\em trade-off between codebook redundancy and read coverage}. 
Let $\Sigma_q$ denote the alphabet $\{0,1,\ldots, q-1\}$ with $q$ symbols.
Specifically, given $N$ and an error-ball function $B:\X \subseteq \Sigma_q^n \to \Y$, we study the quantity
	\begin{equation}\label{eq:code.red}
	\rho(n,N;B)\triangleq \min \Big\{n - \log_q|\C| : \C\subseteq \X,\, \nu(\C;B) < N \Big\}.
	\end{equation}
Note that the case $N=1$ is the classical model which has been studied for years in the design of error-correcting codes. 
Thus, we see the framework studied in this work as a natural extension of this classical model.

For a word $\bx\in \Sigma_q^n$, we consider the following error-ball functions.
%For a word $\bx\in \{0,1\}^n$, we consider the following error-ball functions.
Let $\BI_q(\bx)$, $\BD_q(\bx)$, and $\BS_q(\bx)$,
	denote the set of all words obtained from $\bx$ via one insertion, deletion, and at most one substitution, respectively.
	In this work, we study in detail the following error-balls:
%\vspace{-2mm}
\[
\BSD_q(\bx) \triangleq \BS_q(\bx)\cup \BD_q(\bx),~ 
\BSI_q(\bx) \triangleq \BS_q(\bx)\cup \BI_q(\bx),~
\BID_q(\bx) \triangleq \BI_q(\bx)\cup \BD_q(\bx), \mbox{ and }
\Bedit_q(\bx) \triangleq \BS_q(\bx)\cup \BI_q(\bx)\cup \BD_q(\bx).
\]
%
%{\small
%\begin{align*}
%\BSD(\bx) & \triangleq \BS(\bx)\cup \BD(\bx), & 
%\BSI(\bx) & \triangleq \BS(\bx)\cup \BI(\bx), \\
%\BID(\bx) & \triangleq \BI(\bx)\cup \BD(\bx), &
%\Bedit(\bx) & \triangleq \BS(\bx)\cup \BI(\bx)\cup \BD(\bx).
%\end{align*}
%}
%\begin{align*}
%\B^{\rm D}(\bx) & \triangleq \{\bx\}\cup \{ \by : \by \mbox{ is obtained from $\bx$ via a deletion}\}, \\
%\B^{\rm I}(\bx) &\triangleq \{\bx\}\cup \{ \by : \by \mbox{ is obtained from $\bx$ via an insertion}\}, \\
%\B^{\rm S}(\bx) & \triangleq \{\bx\}\cup \{ \by : \by \mbox{ is obtained from $\bx$ via a substitution}\}.
%\end{align*}
\vspace{-5mm}

\begin{example}\label{exa:toy}
	Consider the input space $\X=\Sigma_2^n$ with $q=2$.
	We consider the single-deletion error-ball $\BD_2$ and two different codebooks. 
	%First, let $\C_{\rm all}$ be the input space $\X$.
	First, we look at the uncoded case, that is, when the codebook is the {input space $\Sigma_q^n$.
	Levenshtein} in his seminal work\cite{Levenshtein.2001} showed that $\nu\left(\Sigma_q^n;\BD_2\right)=2$.
	In other words, three distinct noisy versions of $\bx$ allow us to uniquely reconstruct $\bx$. 
	Hence, we have that $\rho(n,N;\BD_2)=0$ for $N\ge 3$.
	
	In contrast, to correct a single deletion, 
	we have the {classical} binary Varshamov-Tenengolts (VT) code $\C_V$ %${\rm VT}(n)$ 
	whose redundancy is at most $\log_2(n+1)$~\cite{Levenshtein.1966}. % (see also Theorem~\ref{thm:vt}). %,Varshamov.1965}.
	%In this case, $\nu\left({\rm VT}(n);\BD_2 \right)=0$ and one noisy read is sufficient to recover a codeword.
	In this case, $\nu\left(\C_V;\BD_2 \right)=0$ and one noisy read is sufficient to recover a codeword.
	%Furthermore, it can be shown that ${\rm VT}(n)$ is asymptotically optimal~\cite{Levenshtein.1966}, or,
	Furthermore, it can be shown that $\C_V$ is asymptotically optimal~\cite{Levenshtein.1966}, or,
	$\rho\left(n,1;\BD_2\right)=\log_2 n +\Theta(1)$ (see also Theorem~\ref{thm:ecc}).
	
	A natural question is then: how should we design the codebook when we have only two noisy reads? 
	Or, what is the value of $\rho\left(n,2;\BD_2\right)$?
	%The corresponding practical question is: what if the storage system only provides two noisy versions of a codeword?
	
	Recently, Chee \etal{} constructed a $\left(n,2;\BD_2\right)$-reconstruction code with $\log_2\log_2 n+O(1)$ redundant bits \cite{chee2018coding}.
	Hence, $\rho\left(n,2;\BD_2\right) \le  \log_2\log_2 n+O(1)$.
	In other words, even though {there are} only two noisy reads, {it is possible} to employ a coding strategy that encodes approximately {$\log n-\log\log n$} bits of information more than that of the VT code ${\rm VT}(n)$.
	We also show that this coding strategy is asymptotically optimal in Section~\ref{sec:exact}.
	In this paper, we extend this analysis and design such reconstruction codes for other error-balls.
\end{example}

\begin{figure}[!t]	
\footnotesize

\hspace{30mm}(a) Representation of bit-string $(0,0,1,1,0,1,0,1,1)$ using nine domains.	
	
  \begin{center}
  	
    \begin{tikzpicture}
		\tikzstyle{every path}=[very thick]
		
		\edef\sizetape{0.7cm}
		\tikzstyle{tmtape}=[draw,minimum size=\sizetape]
		\tikzstyle{tmhead}=[arrow box,draw,minimum size=.5cm,arrow box
		arrows={east:.25cm, west:0.25cm}]
		
		\tikzset{Arrow/.style = {line width=2mm, draw=gray, 
                -{Triangle[length=3mm,width=4mm]},
                shorten >=1mm, shorten <=1mm},}
		
		\tikzset{head/.style = {rectangle, draw, line width=2pt, text width=10mm,align=center, inner sep = 3pt}}
		
		%% Draw TM tape
		\begin{scope}[start chain=1 going right,node distance=-0.15mm]
		    \node [on chain=1,tmtape,draw=none] (start){};
		    \node [on chain=1,tmtape,draw=none] {};
		    \node [on chain=1,tmtape] (input1){$\rightarrow$};
		    \node [on chain=1,tmtape] (input2){$\leftarrow$};
		    \node [on chain=1,tmtape] {$\leftarrow$};
		    \node [on chain=1,tmtape] (input3){$\rightarrow$};
		    \node [on chain=1,tmtape] {$\leftarrow$};
		    \node [on chain=1,tmtape] {$\rightarrow$};
		    \node [on chain=1,tmtape] {$\leftarrow$};
		    \node [on chain=1,tmtape] (tapend){$\leftarrow$};
		    \node [on chain=1,tmtape,draw=none] (end){};
		    \node [on chain=1] (end){};
		    %\node [on chain=1] (end){\textbf{Racetrack movement}};
		\end{scope}
		
		\draw[Arrow](input1)--(start);
		\draw[Arrow](end)--(tapend);
		
		\node[head](head1) at (1,1.5) {Head 1};
		\node[head](head2) at (2.5,1.5) {Head 2};
		\node[head](head3) at (4,1.5) {Head 3};
		\draw (7,-0.7) node {\textbf{Racetrack movement}};
		
		\path[->,draw] (head1.south) .. controls (0,1) and (1.2,1) ..  (input1.north);
		\path[->,draw] (head2.south) .. controls (2,1) and (1.9,1) ..  (input2.north);
		\path[->,draw] (head3.south) .. controls (3,1) and (3.5,1) ..  (input3.north);
	
	\end{tikzpicture}
	
	\vspace{1mm}
		
	\hspace{-7mm} (b) Readings of the three heads as the racetrack (or domains) shift.
	\newline
	
	\begin{tabular}{|c|ccc ccc ccc ccc|}
	\hline
	Shifts & $-2$ & $-1$ & $0$ & $1$ & $2$ & {\color{blue}$3$} & $4$ & $5$ & $6$ & $7$ & $8$ & $9$ \\ \hline
	Head 1 & -- & -- & -- & 0 & 0 &  {\color{blue}1} & 1 & 0 & 1& 0 & 1 & 1 \\
	Head 2 & -- & -- &  0 & 0 & 1 &  {\color{blue}1} & 0 & 1 & 0& 1 & 1 & -- \\
	Head 3 & 0 & 0 & 1 & 1 & 0 & {\color{blue}1} & 0 & 1 & 1& -- & -- & -- \\
	\hline
	\end{tabular}
	
	\vspace{5mm}

	\hspace{-11mm} (c) Corresponding noisy reads when {\color{blue}Shift 3} is not measured. \newline
	
	{$\!
	\begin{aligned}
	\text{Head 1 reading}: & (0,0,1,0,1,0,1,1),\\
	\text{Head 2 reading}: & (0,0,1,0,1,0,1,1),\\
	\text{Head 3 reading}: & (0,0,1,1,0,0,1,1).
	\end{aligned}$	
	}

    \caption{\footnotesize Noisy reads from a racetrack memory. 
    In a racetrack memory, the heads (reading mechanisms) are stationary, while the domains or cells are moved by a shift current applied in one direction. A stronger current may cause an {\em overshift}, resulting in a deleted bit in the readings.
    Here, we have an overshift at {\color{blue}Shift 3} and the corresponding {\color{blue}blue} bits are deleted from the output readings.\vspace{-7mm}
}
    \label{fig:racetrack}
  \end{center}
  
\end{figure}

\vspace{-5mm}

\subsection{Related Work}

We first review previous work {related to our problem} when there is only one noisy read, i.e. $N=1$.
In this case, we recover the usual notion of error-correcting codes.
Code constructions for the error-balls $B\in\{\BS_q,\BD_q,\BI_q\}$ have been studied extensively and we summarize the results here.
%For the error-ball functions studied in this paper, we have the following classical results.

%{\color{red} Thanh: Can you check for the insertion/deletion case? I'm not aware of any lower bounds for the q-ary channel.

\pagebreak

\begin{theorem}[\hspace{-1mm}~\cite{Hamming:1950, Levenshtein.1966, MacWilliams.1977, Kulkarni.2013, Cai.2019}]\label{thm:ecc}
We have the following bounds for all $n$.
\begin{enumerate}[(i)]
\item $\log_q \big((q-1)n+1\big)  \le \rho\left(n,1;\BS_q\right)\le 
\big\lceil \log_q\big((q-1)n+1\big)\big\rceil$. %~\cite{Hamming:1950, MacWilliams.1977};
\item We have that $\rho(n,1;\BD_q)=\rho(n,1;\BI_q)$. 
Also, $\rho(n,1;\BD_q)\ge \log_q \big((q-1)(n-1)\big)$, 
 $\rho(n,1;\BD_2)\le \log_2(n+1)$ and $\rho(n,1;\BD_q)\le 1 + \log_q n$ for $q\ge 3$; %~\cite{Levenshtein.1966, Kulkarni.2013}.
\item $\rho(n,1;\Bedit_2) \le 1+\log_2 n$ and $\rho(n,1;\Bedit_q) \le \log_q n + O(\log_q\log_q n)$.%\textcolor{red}{$\log_q \big((q-1)n+1\big) \le \rho(n,1;\Bedit) \le 1+\log_2 n$. %~\cite{Levenshtein.1966}.
\end{enumerate}
Therefore, we have that $\rho(n,1;B)=\log_q n +\Theta(1)$ for $B\in\{\BS_q,\BD_q,\BI_q\}$ and  $\rho(n,1;\Bedit_2) = \log_2 n+\Theta(1)$.
\end{theorem}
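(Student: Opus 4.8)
The starting point is to unwind the definition: since $\nu(\C;B)$ is a non-negative integer, the condition $\nu(\C;B)<1$ is equivalent to $\nu(\C;B)=0$, i.e. to the error-balls $\{B(\bx):\bx\in\C\}$ being pairwise disjoint. Thus an $(n,1;B)$-reconstruction code is precisely a classical code that corrects the error described by $B$, and $\rho(n,1;B)$ is the optimal redundancy of such a code; each item then follows from a known construction together with a matching converse, which I sketch in turn. For (i), pairwise disjointness of the balls $\BS_q(\bx)$ is the same as minimum Hamming distance at least $3$, and since $|\BS_q(\bx)|=(q-1)n+1$ for every $\bx$, these balls pack $\Sigma_q^n$, giving $|\C|\big((q-1)n+1\big)\le q^n$ and hence $\rho(n,1;\BS_q)\ge\log_q\big((q-1)n+1\big)$. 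For the upper bound I would exhibit a (shortened) Hamming code: over $\fq$ choose a parity-check matrix with $r=\big\lceil\log_q\big((q-1)n+1\big)\big\rceil$ rows whose $n$ columns are pairwise linearly independent and nonzero --- possible because $(q^r-1)/(q-1)\ge n$ --- so that the code has minimum distance at least $3$ and redundancy exactly $r$~\cite{Hamming:1950,MacWilliams.1977}.

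For (ii), the identity $\rho(n,1;\BD_q)=\rho(n,1;\BI_q)$ is Levenshtein's observation that a code corrects one deletion iff it corrects one insertion iff it corrects a single indel~\cite{Levenshtein.1966}, so the two optimal redundancies coincide. The converse $\rho(n,1;\BD_q)\ge\log_q\big((q-1)(n-1)\big)$ is not a plain sphere-packing bound --- $|\BD_q(\bx)|$ equals the number of runs of $\bx$ and so varies with $\bx$ --- and is instead obtained from the hypergraph-covering / linear-programming argument of Kulkarni and Kiyavash~\cite{Kulkarni.2013} applied over the output space $\Sigma_q^{n-1}$. For achievability, in the binary case one uses a Varshamov--Tenengolts class $\{\bx\in\Sigma_2^n:\sum_i i x_i\equiv a\ppmod{n+1}\}$, the largest of which has size at least $2^n/(n+1)$, giving redundancy at most $\log_2(n+1)$; for $q\ge3$ a Tenengolts-type construction (a congruence on the positions of the ascent/descent signature together with a congruence on the symbol sum) corrects a single deletion with at most $1+\log_q n$ redundant symbols~\cite{Cai.2019}.

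For (iii), note $\BD_2(\bx)\subseteq\Bedit_2(\bx)$ for every $\bx$, so any single-edit-correcting code is in particular single-deletion-correcting and therefore $\rho(n,1;\Bedit_2)\ge\rho(n,1;\BD_2)$; the upper bounds $\rho(n,1;\Bedit_2)\le1+\log_2 n$ and $\rho(n,1;\Bedit_q)\le\log_q n+O(\log_q\log_q n)$ follow from the VT-type binary construction and the $q$-ary construction of~\cite{Cai.2019}. The closing ``Therefore'' is then bookkeeping: for $B\in\{\BS_q,\BD_q,\BI_q\}$ the displayed lower bounds are each of the form $\log_q n+\Theta(1)$ while the upper bounds are at most $1+\log_q n$, so $\rho(n,1;B)=\log_q n+\Theta(1)$; and combining $\rho(n,1;\BD_2)\le\rho(n,1;\Bedit_2)\le1+\log_2 n$ with $\rho(n,1;\BD_2)=\log_2 n+\Theta(1)$ pins $\rho(n,1;\Bedit_2)$ to $\log_2 n+\Theta(1)$. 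I expect the two genuinely delicate points to be the deletion-code converse (which needs the covering/LP argument rather than sphere-packing) and the fact that the clean ceiling in (i) is realized by Hamming codes, presupposing $q$ is a prime power; for general $q$ one falls back on the more intricate combinatorial constructions in the cited references, and the remaining steps are just assembling standard facts.
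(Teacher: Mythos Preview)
The paper does not supply its own proof of this theorem; it is stated as a compilation of classical results and cited directly to the references. Your proposal correctly identifies and sketches the standard arguments behind each item --- sphere-packing plus (shortened) Hamming codes for (i), Levenshtein's deletion/insertion equivalence together with VT/Tenengolts constructions and the Kulkarni--Kiyavash converse for (ii), and the Levenshtein and Cai~\etal{} edit-correcting constructions for (iii) --- so there is nothing to compare against and your sketch is sound, including the caveat that the clean ceiling in (i) presupposes $q$ is a prime power.
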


For the other error-ball functions of interest, the following is immediate from Theorem~\ref{thm:ecc}.

\begin{corollary}\label{cor:ecc}
	We have that $\rho(n,1;\BID_q)=\log_q n +\Theta(1)$. 
	For $B_q\in\{\BSD_q,\BSI_q\}$, we have that $\rho(n,1;B_2) = \log_2 n+\Theta(1)$ and
	$\log_q n - O(1) \le \rho(n,1;B_q) \le \log_q n + O(\log_q\log_q n)$.
\end{corollary}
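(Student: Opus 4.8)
The plan is to obtain Corollary~\ref{cor:ecc} from Theorem~\ref{thm:ecc} by combining two elementary observations, recalling that a code $\C$ with $\nu(\C;B)<1$ is exactly a $B$-error-correcting code. First, $\rho(n,1;\cdot)$ is monotone with respect to inclusion of error-balls: if $B'(\bx)\subseteq B(\bx)$ for every $\bx$, then $B'(\bx)\cap B'(\by)\subseteq B(\bx)\cap B(\by)$ for all $\bx\ne\by$, so every $(n,1;B)$-reconstruction code is also an $(n,1;B')$-reconstruction code and hence $\rho(n,1;B')\le\rho(n,1;B)$. Second, for a union $B=B_1\cup B_2$ whose two parts are supported on words of two different lengths, all cross-intersections $B_1(\bx)\cap B_2(\by)$ vanish, so $B(\bx)\cap B(\by)=\bigl(B_1(\bx)\cap B_1(\by)\bigr)\cup\bigl(B_2(\bx)\cap B_2(\by)\bigr)$; consequently $\nu(\C;B)=0$ if and only if $\nu(\C;B_1)=0$ and $\nu(\C;B_2)=0$.

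I would then apply monotonicity to the chains $\BD_q(\bx)\subseteq\BSD_q(\bx)\subseteq\Bedit_q(\bx)$, $\BI_q(\bx)\subseteq\BSI_q(\bx)\subseteq\Bedit_q(\bx)$, and $\BD_q(\bx)\subseteq\BID_q(\bx)\subseteq\Bedit_q(\bx)$, which sandwich each quantity between $\rho(n,1;\BD_q)$ (resp.\ $\rho(n,1;\BI_q)$) and $\rho(n,1;\Bedit_q)$. Substituting the bounds of Theorem~\ref{thm:ecc}(ii)--(iii): for $q=2$ this yields $\log_2 n-O(1)\le\rho(n,1;B_2)\le\log_2 n+O(1)$ for each $B_2\in\{\BSD_2,\BSI_2,\BID_2\}$, hence all three equal $\log_2 n+\Theta(1)$; for general $q$ it yields $\log_q n-O(1)\le\rho(n,1;B_q)\le\log_q n+O(\log_q\log_q n)$, which is precisely the asserted bound for $\BSD_q$ and $\BSI_q$.

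It remains to sharpen the $\BID_q$ estimate for $q\ge 3$ from the $O(\log_q\log_q n)$ upper bound down to $\Theta(1)$. Here I would invoke the length-separation observation: since $\BI_q(\bx)$ lies in length $n+1$ and $\BD_q(\bx)$ in length $n-1$, we get $\nu(\C;\BID_q)=0$ iff $\nu(\C;\BI_q)=0$ and $\nu(\C;\BD_q)=0$. By Levenshtein's classical equivalence between single-insertion and single-deletion correction (see~\cite{Levenshtein.1966}), these two conditions coincide, so the $(n,1;\BID_q)$-reconstruction codes are exactly the single-deletion-correcting codes; therefore $\rho(n,1;\BID_q)=\rho(n,1;\BD_q)=\log_q n+\Theta(1)$ by Theorem~\ref{thm:ecc}. (Alternatively, one simply notes that the $q$-ary codes attaining the $1+\log_q n$ bound in Theorem~\ref{thm:ecc}(ii) already correct a single insertion.) The only real obstacle is this last point: the pure monotonicity sandwich does not determine the constant for $\BID_q$ when $q\ge 3$, and closing that gap forces us to use the deletion/insertion duality together with the length-separation decomposition of $\nu(\C;B_1\cup B_2)$; the rest is bookkeeping around Theorem~\ref{thm:ecc}.
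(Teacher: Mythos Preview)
Your proposal is correct and follows exactly the approach the paper intends: the paper states that the corollary is ``immediate from Theorem~\ref{thm:ecc}'' and gives no further proof, and you have simply spelled out the monotonicity-plus-sandwich argument and the Levenshtein insertion/deletion equivalence that make this immediacy precise.
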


%\begin{theorem}[\cite{levenshtein.1965}] 
%For $n>0$, we have that $\log n \le \rho(n,1;\BD)=\rho(n,1;\BI)\le \log(n+1)$.
%In other words, $\rho(n,1;\BD)=\rho(n,1;\BI)=\log n +\Theta(1)$.
%\end{theorem}

%\todo{I don't think we need this here...}

%For completeness, we present the families of single error-correcting codes provided by Levenshtein \cite{Levenshtein.1966}.
%Crucial to these constructions is the concept of {\em syndrome}.
%
%\begin{definition} The {\em {VT syndrome}} of a binary sequence $\bx\in\{0,1\}^n$ is defined to be
%${\rm Syn}(\bx)\triangleq \sum_{i=1}^n i x_i$.
%\end{definition}
%
%\begin{theorem}[Levenshtein~\cite{Levenshtein.1966}]\label{thm:vt}\hfill
%\begin{enumerate}[(i)]
%\item For $a \in  \bbZ_{n+1}$, let
%\begin{equation}\label{VTcodes}
%%\small
%{\rm VT}(n;a)\triangleq \left\{\bx\in \{0,1\}^n: {\rm Syn}(\bx) = a \ppmod{n+1}\right\}.
%\end{equation} 
%Then, the code ${\rm VT}(n;a)$ is an $(n,1;B)$-reconstruction code for $B\in \{\BD,\BI,\BID\}$.
%\item For $a \in  \bbZ_{2n}$, let
%\begin{equation}\label{L-codes}
%{\rm L}(n;a)\triangleq \left\{\bx\in \{0,1\}^n: {\rm Syn}(\bx) = a \ppmod{2n}\right\}.
%\end{equation} 
%Then, the code ${\rm L}(n;a)$ is an $(n,1;B)$-reconstruction code for $B\in \{\BSI, \BSD,\Bedit\}$.
%\end{enumerate}
%\end{theorem}
%
%{\color{red}
%	Thanh: What are the nonbinary codes for single edits? Is it only our work for $q=4$?
%}

When there {is} more than one noisy read, previous works usually focus on determining the maximum intersection size between two error-balls. 
Specifically, when the error-balls involve insertions only, deletions only and substitutions only, i.e. \smash{$B\in \{\BI_q, \BD_q, \BS_q\}$},  
the value of \smash{$\nu(\Sigma_q^n;B)$} was first determined by Levenshtein \cite{Levenshtein.2001}.
Later, Levenshtein's results were extended {in}~\cite{Gabrys.2018} for the case where the error-ball {involves} deletions only and 
$\C$ is a single-deletion error-correcting code.
Recently, {the authors of}~\cite{Sini.2019} investigated the case where errors are combinations of single substitution and single insertion. Furthermore, {they} also simplified the reconstruction algorithm when the number of noisy {copies exceeds}  the minimum required, i.e.  $N>\nu(\Sigma_q^n;B)+1$. 

Another recent variant of the Levenshtein sequence reconstruction problem was studied by 
the authors in~\cite{Junnila.2019}. Similar to our model, the authors consider the scenario where the number of reads is not sufficient to reconstruct a unique codeword. As with classical list-decoding, they determined the size of the {\em list of possible codewords}.

As mentioned above, the sequence reconstruction problem has been studied by Levenshtein and others for several error channels and distances. 
In many cases, such as for substitutions, the size of the set $B(\bx)\cap B(\by)$ does not depend on the specific choice of $\bx$ and $\by$, but only on their distance.
Specifically, let $\smash{\BSt_q}$ denote the radius-$t$ substitution error-ball and let $\dH(\bx,\by)$ denote the Hamming distance of $\bx$ and $\by$. 
The following is due to Levenshtein.
%We have the following result from Levenshtein's seminal work.

\begin{theorem}[{Levenshtein~\cite[Thm. 1, Lemma 3, Cor. 1]{Levenshtein.2001}}]\label{thm:tsub-coverage}
	Let $\bx$ and $\by$ be distinct words in $\Sigma_q^n$.
	If $\dH(\bx,\by)=d$, then
	\begin{equation}\label{eq:tsub}
	|\BSt_q(\bx)\cap \BSt_q(\by)| = N^S_n(q;t,d) \triangleq
	\sum_{i=0}^{t-\lceil\frac{d}{2}\rceil}\binom{n-d}{i}(q-1)^i
	\left(
	\sum_{k=d-t+i}^{t-i}\sum_{\ell=d-r+i}^{t-i}\binom{d}{k}\binom{d-k}\ell (q-2)^{d-k-\ell} \right)\, .
	\end{equation}
	
	\noindent Therefore, if $\C$ is a code with minimum Hamming distance $d$, then  $\nu\left(\C;\BSt_q\right)= N^S_n(q;t,d)$.
	Furthermore,
	\begin{equation}\label{eq:tsub-coverage}
	\nu\left(\Sigma_q^n;\BSt_q\right)= N^S_n(q;t,1) = q \sum_{i=0}^{t-1}\binom{n-i}{i}(q-1)^i.
	\end{equation}
\end{theorem}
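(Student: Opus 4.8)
The plan is to obtain the closed form \eqref{eq:tsub} by a direct count, and then to deduce the ``Therefore'' claim and \eqref{eq:tsub-coverage} from the monotonicity of $N^S_n(q;t,d)$ in $d$.

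\emph{The count.} Fix distinct $\bx,\by$ with $\dH(\bx,\by)=d$; let $D\subseteq[n]$ be the $d$ positions where they differ and $A=[n]\setminus D$ the $n-d$ positions where they agree. A word $\bz$ lies in $\BSt_q(\bx)\cap\BSt_q(\by)$ iff $\dH(\bz,\bx)\le t$ and $\dH(\bz,\by)\le t$. I would describe $\bz$ by: the set of $i$ positions of $A$ where $z_j\ne x_j$ (equivalently $z_j\ne y_j$); and, on $D$, the number $k$ of positions with $z_j=y_j$, the number $\ell$ with $z_j=x_j$, and the remaining $d-k-\ell$ positions with $z_j\notin\{x_j,y_j\}$. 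Then $\dH(\bz,\bx)=i+d-\ell$ and $\dH(\bz,\by)=i+d-k$, so the two ball conditions become $k\ge d-t+i$ and $\ell\ge d-t+i$; together with $k+\ell\le d$ (which needs no separate bookkeeping, as $\binom{d-k}{\ell}$ vanishes otherwise) these force $k,\ell\le t-i$ and $i\le t-\lceil d/2\rceil$. Multiplying the number of patterns on $A$, namely $\binom{n-d}{i}(q-1)^i$, by the number on $D$, namely $\binom{d}{k}\binom{d-k}{\ell}(q-2)^{d-k-\ell}$, and summing over the admissible ranges yields \eqref{eq:tsub}; in particular $|\BSt_q(\bx)\cap\BSt_q(\by)|$ depends on the pair only through $d$.

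\emph{Monotonicity in $d$ --- the crux.} I expect this to be the one genuine obstacle. It suffices to show $N^S_n(q;t,d+1)\le N^S_n(q;t,d)$ for every $d\ge 1$, and then chain. Take $\bx,\by$ at distance $d+1$, fix one position $j$ where they differ, and let $\by^{\ast}$ agree with $\by$ except that $y^{\ast}_j=x_j$, so $\dH(\bx,\by^{\ast})=d$. Conditioning on the value $z_j$ and restricting to the $n-1$ positions other than $j$ (where $\by$ and $\by^{\ast}$ agree), and writing $B^{(s)}$ and $\widetilde B^{(s)}$ for the radius-$s$ Hamming balls in $\Sigma_q^{n-1}$ about the restrictions of $\bx$ and $\by$, one gets
\begin{align*}
|\BSt_q(\bx)\cap\BSt_q(\by)| &= |B^{(t-1)}\cap\widetilde B^{(t)}| + |B^{(t)}\cap\widetilde B^{(t-1)}| + (q-2)\,|B^{(t-1)}\cap\widetilde B^{(t-1)}|,\\
|\BSt_q(\bx)\cap\BSt_q(\by^{\ast})| &= |B^{(t)}\cap\widetilde B^{(t)}| + (q-1)\,|B^{(t-1)}\cap\widetilde B^{(t-1)}|.
\end{align*}
Subtracting, the inequality reduces to
\[
|B^{(t)}\cap\widetilde B^{(t)}| + |B^{(t-1)}\cap\widetilde B^{(t-1)}| \ \ge\ |B^{(t-1)}\cap\widetilde B^{(t)}| + |B^{(t)}\cap\widetilde B^{(t-1)}|,
\]
which is the elementary estimate $|P'\cap Q'|+|P\cap Q|\ge|P\cap Q'|+|P'\cap Q|$ valid whenever $P\subseteq P'$ and $Q\subseteq Q'$ (apply it with $P=B^{(t-1)}\subseteq B^{(t)}=P'$, $Q=\widetilde B^{(t-1)}\subseteq\widetilde B^{(t)}=Q'$; it follows from $(P'\setminus P)\cap Q\subseteq(P'\setminus P)\cap Q'$). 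Hence $N^S_n(q;t,\cdot)$ is non-increasing.

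\emph{Finishing.} For a code $\C$ with minimum distance $d$, \eqref{eq:read} together with the distance-only dependence established above gives $\nu(\C;\BSt_q)=\max\{N^S_n(q;t,\dH(\bx,\by)):\bx\ne\by\in\C\}$, and by monotonicity this maximum is attained at the smallest distance occurring in $\C$, i.e.\ at $d$; thus $\nu(\C;\BSt_q)=N^S_n(q;t,d)$. Taking $\C=\Sigma_q^n$, whose minimum distance is $1$, yields $\nu(\Sigma_q^n;\BSt_q)=N^S_n(q;t,1)$; substituting $d=1$ into \eqref{eq:tsub} collapses the inner double sum over $(k,\ell)$ to the constant $q$ --- only $(0,0),(0,1),(1,0)$ contribute, with weights $q-2,1,1$ --- which reduces \eqref{eq:tsub} to the single sum over $i$ claimed in \eqref{eq:tsub-coverage}.
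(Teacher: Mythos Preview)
The paper does not prove this theorem; it is quoted from Levenshtein~\cite{Levenshtein.2001} and used as a black box, so there is no in-paper proof to compare against. Your argument is correct and self-contained. The counting portion is the natural decomposition: classify $\bz$ by its behaviour on the $d$ disagreement positions (via the pair $(k,\ell)$) and on the $n-d$ agreement positions (via $i$), and the admissible ranges of $i,k,\ell$ fall out of the two ball constraints exactly as you say. The substantive step is the monotonicity of $N^S_n(q;t,d)$ in $d$, and your treatment is clean: flip one disagreement coordinate of $\by$ to agree with $\bx$, condition on $z_j$, and reduce the comparison to the submodular-type inequality $|P'\cap Q'|+|P\cap Q|\ge|P\cap Q'|+|P'\cap Q|$ for nested pairs $P\subseteq P'$, $Q\subseteq Q'$. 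With monotonicity in hand, the ``Therefore'' clause and the $d=1$ specialisation are immediate.

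One minor remark: your substitution $d=1$ correctly yields $q\sum_{i=0}^{t-1}\binom{n-1}{i}(q-1)^i$. The printed $\binom{n-i}{i}$ in \eqref{eq:tsub-coverage} is a typographical slip in the paper, as is the stray $r$ in the lower limit for $\ell$ in \eqref{eq:tsub}, which should read $t$; your derivation is consistent with the intended formulas.
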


Therefore, the quantity $\rho\left(n,N;\BSt_q\right)$ can be directly solved by finding the minimum Hamming distance of any valid code.
% for this case. 
Denote by $A_q(n,d)$ the size of the largest length-$n$ $q$-ary code with minimum Hamming distance $d$. The following theorem is immediate from \eqref{eq:tsub-coverage}.

\begin{theorem}\label{thm:sub-N2}
For all $N\geq 1$, it holds that
\[\rho\left(n,N;\BSt_q\right) = n-\log_q A_q(n,d),\]
where $d$ is smallest integer such that $N^S_n(q,t,d)<N$. 
\end{theorem}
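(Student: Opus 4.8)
The plan is to reduce the evaluation of $\rho(n,N;\BSt_q)$ to the classical problem of finding the largest $q$-ary code of length $n$ with a prescribed minimum Hamming distance, using Theorem~\ref{thm:tsub-coverage} and the monotonicity of $N^S_n(q;t,d)$ in $d$.

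First I would record the key fact that, for fixed $n,q,t$, the quantity $N^S_n(q;t,d)$ is non-increasing in $d$ and vanishes once $d>2t$; the latter holds because $\BSt_q(\bx)\cap\BSt_q(\by)=\emptyset$ whenever $\dH(\bx,\by)>2t$, by the triangle inequality for the Hamming metric, while the former is implicit in Levenshtein's analysis~\cite{Levenshtein.2001} (it is exactly what makes the statement of Theorem~\ref{thm:tsub-coverage} well-posed, and it can also be checked directly from the explicit expression \eqref{eq:tsub}). Since $N\ge 1$, this guarantees that there is a well-defined smallest integer $d$ with $N^S_n(q;t,d)<N$, matching the $d$ in the statement. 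I would dispatch the degenerate case $d>n$ (possible only when $2t\ge n$) in one line: then no code with at least two codewords has read coverage below $N$ while the single-codeword code does, so $\rho(n,N;\BSt_q)=n=n-\log_q A_q(n,d)$ under the usual convention $A_q(n,d)=1$ for $d>n$; hence assume $d\le n$.

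Next I would prove the equivalence that a code $\C\subseteq\Sigma_q^n$ satisfies $\nu(\C;\BSt_q)<N$ if and only if the minimum Hamming distance of $\C$ is at least $d$. For the forward implication, let $d''\ge d$ be the actual minimum distance of $\C$; Theorem~\ref{thm:tsub-coverage} gives $\nu(\C;\BSt_q)=N^S_n(q;t,d'')$, which by monotonicity is at most $N^S_n(q;t,d)<N$. For the converse, suppose $\C$ has two distinct codewords $\bx,\by$ with $\dH(\bx,\by)=d'<d$; then by minimality of $d$ we have $N^S_n(q;t,d')\ge N$, and since $|\BSt_q(\bx)\cap\BSt_q(\by)|=N^S_n(q;t,d')$ (first part of Theorem~\ref{thm:tsub-coverage}), the definition \eqref{eq:read} gives $\nu(\C;\BSt_q)\ge N$.

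Finally, feeding this equivalence into the definition \eqref{eq:code.red}, maximizing $|\C|$ over all $(n,N;\BSt_q)$-reconstruction codes is the same as maximizing $|\C|$ over all $q$-ary length-$n$ codes of minimum Hamming distance at least $d$, whose maximum size is $A_q(n,d)$ by definition; therefore $\rho(n,N;\BSt_q)=n-\log_q A_q(n,d)$. The only step carrying real content is the monotonicity of $N^S_n(q;t,d)$ in $d$; given that (which is standard), everything else is a direct unwinding of the definitions of $\nu$, $\rho$, and $A_q(n,d)$ together with Theorem~\ref{thm:tsub-coverage}.
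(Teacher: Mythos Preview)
Your proposal is correct. The paper does not supply a proof for this theorem, stating only that it is ``immediate from~\eqref{eq:tsub-coverage}''; your argument is exactly the unpacking of that claim via Theorem~\ref{thm:tsub-coverage} and the monotonicity of $N^S_n(q;t,d)$ in $d$, so the approaches coincide. One cosmetic point: your labels ``forward'' and ``converse'' are swapped relative to the order in which you state the equivalence, though both directions are argued correctly.
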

When $q=2$, it holds that $N^S_n(2;t,d) = N^S_n(2;t,d-1)$ for even values of $d${\cite{YB19}}, 
%{\color{red}(HM: Why?)}
%(see~\cite[Ch. 2, Theorem 2]{MacWilliams.1977}) 
and therefore it is enough to consider only even values of $d$. 
Furthermore, for $d= 2t$, it holds that $N^S_n(2;t,d=2t) = \binom{2t}{t}$~\cite[Corollary 1]{Levenshtein.2001}, which implies that for all $t\geq 1$, %if $ \binom{2t}{t} \leq N < \binom{2(t+1)}{t+1}$, then 
$$\rho\left(n,N = \binom{2t}{t} +1 ;\BSt_2\right) = n-\log_2 A_2(n,2t),$$ 
and for all $1\leq N \leq \binom{2t}{t}$, $\rho\left(n,N ;\BSt_2\right) \ge n-\log_2 A_2(n,2t)$.

Next, for general $q$, we consider the case $t=1$, or when the error-ball is 
$\B^{{\rm S}(1)}_q=\BS_q$. 
From \eqref{eq:tsub-coverage}, we have the following
\begin{equation}\label{eq:char-sub}
N^S_n(q,1,d) = 
\begin{cases}
	q, & \text{if } d=1,\\
	2, & \text{if } d=2,\\
	0, & \text{if } d\ge 3.\\
\end{cases}
\end{equation}

Since $\log_q A_q(n,2)=n-1$, applying Theorem~\ref{thm:sub-N2} and combining with the results in Theorem~\ref{thm:ecc},
we have the following.

\begin{theorem}\label{thm:bs}
	Consider the error ball $\BS_q$. We have that
	\[
	\rho\left(n,N;\BS_q\right) = 
	\begin{cases}
	\log_q [(q-1)n+1]+\Theta(1), & \text{if } 1\le N\le 2,\\
	1, & \text{if } 3\le N\le q,\\
	0, & \text{if } N\ge q+1.
	\end{cases}
	\]
	%Here, $c(n)$ denotes a function such that $0\le c(n)<1$ for all $n$.
\end{theorem}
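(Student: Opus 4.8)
The plan is to derive the whole statement from Theorem~\ref{thm:sub-N2} specialized to $t=1$, combined with the explicit step function \eqref{eq:char-sub} for $N^S_n(q,1,d)$ and three elementary facts about $A_q(n,d)$: that $A_q(n,1)=q^n$; that $\log_q A_q(n,2)=n-1$ (the Singleton bound gives $A_q(n,2)\le q^{n-1}$ and the single parity-check code $\{\bx\in\Sigma_q^n:\sum_i x_i\equiv 0\ppmod q\}$ attains it); and that $n-\log_q A_q(n,3)=\log_q[(q-1)n+1]+\Theta(1)$, which is exactly Theorem~\ref{thm:ecc}(i), because $\nu(\C;\BS_q)<1$ is the same as $\C$ having minimum Hamming distance at least $3$. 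By Theorem~\ref{thm:sub-N2} we have $\rho(n,N;\BS_q)=n-\log_q A_q(n,d_N)$, where $d_N$ is the least integer with $N^S_n(q,1,d_N)<N$, so the entire argument amounts to reading off $d_N$ from \eqref{eq:char-sub} in each regime of $N$.

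Carrying this out: if $N\ge q+1$, then $N^S_n(q,1,1)=q<N$, so $d_N=1$ and $\rho(n,N;\BS_q)=n-\log_q A_q(n,1)=0$. If $3\le N\le q$, then $N^S_n(q,1,1)=q\ge N$ while $N^S_n(q,1,2)=2<N$, so $d_N=2$ and $\rho(n,N;\BS_q)=n-\log_q A_q(n,2)=n-(n-1)=1$. Finally, if $1\le N\le 2$ (and $q\ge 2$, so that $N^S_n(q,1,1)=q\ge N$ and $N^S_n(q,1,2)=2\ge N$), the first $d$ for which $N^S_n(q,1,d)$ drops below $N$ is $d_N=3$; hence $\rho(n,N;\BS_q)=n-\log_q A_q(n,3)=\log_q[(q-1)n+1]+\Theta(1)$ by Theorem~\ref{thm:ecc}(i). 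Assembling the three cases gives precisely the claimed formula, and in particular shows $\rho(n,2;\BS_q)=\rho(n,1;\BS_q)$.

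Since every step is a substitution into already-established results, there is no substantive obstacle; the only places that need a moment's care are the boundary comparisons — that $N^S_n(q,1,1)=q$ is $\ge N$ precisely when $N\le q$ and $<N$ precisely when $N\ge q+1$, and that $N^S_n(q,1,2)=2$ is $\ge N$ precisely when $N\le 2$ — and the degenerate case $q=2$, where the middle regime $3\le N\le q$ is empty and the formula correctly collapses to $\rho(n,N;\BS_2)=\log_2(n+1)+\Theta(1)$ for $N\le 2$ and $\rho(n,N;\BS_2)=0$ for $N\ge 3$, matching the equality $N^S_n(2;1,1)=N^S_n(2;1,2)$ noted earlier in the excerpt.
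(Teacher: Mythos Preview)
Your proof is correct and follows exactly the route the paper indicates: specialize Theorem~\ref{thm:sub-N2} to $t=1$, read off $d_N$ from \eqref{eq:char-sub}, and plug in $A_q(n,1)=q^n$, $\log_q A_q(n,2)=n-1$, and Theorem~\ref{thm:ecc}(i) for the three regimes. The paper merely sketches this (``applying Theorem~\ref{thm:sub-N2} and combining with the results in Theorem~\ref{thm:ecc}''), so you have simply written out the details it leaves implicit.
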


However, for the case of deletions, analogous results do not hold. 
Specifically, let $\BDt_q$ denote the radius-$t$ deletion-ball and 
we define the {\em Levenshtein distance} of two $q$-ary words $\bx$ and $\by$ to be smallest value of $t$ such that $\BDt_q(\bx)\cap\BDt_q(\by)$ is nonempty. In other words, $d_{\rm L}(\bx, \by)\triangleq\min\{t : \BDt_q(\bx)\cap\BDt_q(\by)\ne \varnothing \}$.
Then it can be shown that the Levenshtein distance is a metric.

Furthermore, it follows from {the} definition that  $|\BDt_q(\bx)\cap\BDt_q(\by)|=0$ whenever $t<d_{\rm L}(\bx,\by)$.
Unfortunately, when $t\ge d_{\rm L}(\bx,\by)$, the {Levenshtein distance} is unable to characterize the intersection size of the deletion-balls. An example is as follows. Consider $t=1$ and the following two pairs of words.
\[
\bx  = 0101,\quad \by = 1010, \mbox{ and } \bx'= 0111, \quad \by'=1110.
\]	
Then $d_{\rm L}(\bx,\by)=d_{\rm L}(\bx',\by')=1$.
Even though the two pairs have the same {Levenshtein distance}, the intersection sizes of their single-deletion-balls are different.
It is straightforward to verify that 
\[ |\BD_q(\bx)\cap \BD_q(\by)|=2, \mbox{ while } |\BD_q(\bx')\cap \BD_q(\by')|=1.\]

Hence, we require another property on pairs of words to characterize the intersection sizes.
Furthermore, unlike the case for substitutions, this example illustrates that designing reconstruction codes for the deletions channel is fundamentally different from designing codes in the {Levenshtein metric}.

\begin{table*}[!t]
	\centering
	\footnotesize
	\renewcommand{\arraystretch}{1.5}
	\begin{tabular}{lp{6cm}lp{2.8cm}p{5cm}} 
		\hline
		Code & Reconstruction Capabilities $(N;B)$ & Redundancy 
		& Lower Bound for\newline Optimal~Redundancy & Remark \\ \hline
		
		$\CD$ & $(2,\BI_q)$, $(2,\BI_q)$, $(3,\BID_q)$, $(3,\BID_q)$ &
		$\log_q\log_q n+O(1)$ & $\log_q\log_q n-O(1)$ &
		Theorem~\ref{thm:del}, Corollary~\ref{cor:id} and \cite{chee2018coding} \\ 
		
		$\CSD$ & $(3,\BSD_q)$, $(2,\BSI_q)$ &
		$\log_q\log_q n+O(1)$ & $\log_q\log_q n-O(1)$ &
		Theorem~\ref{thm:sd} \\ 
		
		$\Cedit$ & $(3,\Bedit_q)$, $(4,\Bedit_q)$ &
		$\log_q\log_q n+O(1)$ & $\log_q\log_q n-O(1)$ &
		Corollary~\ref{cor:edit} \\ 
		
		$\C_2$ & 	
		$(4,\BSD_q)$ for $q\ge 3$; $(4,\BSI_q)$ for $q\ge 2$; \newline 
		$(5,\Bedit_2)$; 
		$(N;\Bedit_q)$ for $q\ge 3$, $N\in\{5,6\}$ &
		$2$ & $1$ &
		Theorem~\ref{thm:even} \\ 
		
		$\C_1$ & $(n,4;\BSD_2)$, $(n,6;\Bedit_2)$ & 
		$1$ & $1-o(1)$ &
		Theorem~\ref{thm:even} \\ 
		
		$\C_0$ & $(N, \BSD_q)$ for $q\ge 4$ and $5\le N\le q+1$; \newline 
		$(N, \BSI_q)$ for $q\ge 3$ and $5\le N\le q+2$; \newline
		$(N,\Bedit_q)$for $q\ge 4$ and $7\le N\le q+3$. &
		$1$ & $1$ &
		Proposition~\ref{prop:spc},\newline  single-parity-check code\\ \hline 	
	\end{tabular}
	\caption{Reconstruction codes designed in this paper.}

	%\vspace{-2mm}
	\label{code:summary}
\end{table*}

\begin{table*}[!t]
	\centering
	\scriptsize
	\renewcommand{\arraystretch}{1.2}
	\begin{tabular}{r| c  c c c c c}
		$N~\backslash~B$ & $\BS_2$ & $\BI_2$, $\BD_2$ & $\BID_2$ & $\BSI_2$ & $\BSD_2$ & $\Bedit_2$\\ \hline
		
		1 & 
		$\log_2 n +\Theta(1)$ & 
		$\log_2 n +\Theta(1)$ & 
		$\log_2 n +\Theta(1)$ & 
		$\log_2 n+ \Theta(1)$ & 
		$\log_2 n+ \Theta(1)$ & 
		$\log_2 n+ \Theta(1)$ \\
		
		2 & 
		$\log_2 n+ \Theta(1)$ & 
		$\log_2 \log_2 n + \Theta(1)$ $^\star$ & 
		$\log_2 n +\Theta(1)$ & 
		$\log_2 n +\Theta(1)$ &
		$\log_2 n +\Theta(1)$ &
		$\log_2 n +\Theta(1)$\\
		
		3 & 
		0 &
		0 & 
		$\log_2 \log_2 n + \Theta(1)$ $^\star$ &
		$\log_2 \log_2 n + \Theta(1)$ $^\star$ &
		$\log_2 \log_2 n + \Theta(1)$ $^\star$ &
		$\log_2 \log_2 n + \Theta(1)$ $^\star$ \\
		
		4 & 
		0 &
		0 & 
		$\log_2 \log_2 n + \Theta(1)$ $^\star$ &
		2 $^{\star\dagger}$ &
		1 $^{\star\ddagger}$ &
		$\log_2 \log_2 n + \Theta(1)$ $^\star$ \\
		
		5 & 
		0 &
		0 & 
		0 $^\star$ &
		0 $^\star$ &
		0 $^\star$ &
		2 $^{\star\dagger}$\\
		
		6 & 
		0 &
		0 & 
		0 $^\star$ &
		0 $^\star$ &
		0 $^\star$ &
		1 $^{\star\ddagger}$ \\
		
		$\ge 7$ & 
		0 &
		0 & 
		0 $^\star$ &
		0 $^\star$ &
		0 $^\star$ &
		0 $^\star$ \\
	\end{tabular}
	\caption{\footnotesize Asymptotically {\em exact} estimates on $\rho(n,N;B)$ for various error-ball functions when $q=2$. 
		Marked with $\star$ are entries derived in this work.
		Marked with $\dagger$ are entries where $\log_2 3\le \lim_{n\to \infty}\rho(n,N;B)\le 2$.
		Marked with $\ddagger$ are entries where $\lim_{n\to \infty}\rho(n,N;B)=1$.
	}
	%\vspace{-2mm}
	\label{redundancy-q2}
\end{table*}

\begin{table*}[!t]
	\centering
	\scriptsize
	 \setlength{\tabcolsep}{4pt}
	\renewcommand{\arraystretch}{1.2}
	\begin{tabular}{r| c  c c c c c}
		$N \backslash B$ & $\BS_4$ & $\BI_4$, $\BD_4$ & $\BID_4$ & $\BSI_4$ & $\BSD_4$ & $\Bedit_4$\\ \hline
		
		1 & 
		$\log_4 n +\Theta(1)$ & 
		$\log_4 n +\Theta(1)$ & 
		$\log_4 n +\Theta(1)$ & 
		{\color{red}$\log_4 n+ O(\log_4\log_4 n)$} & 
		{\color{red}$\log_4 n+ O(\log_4\log_4 n)$} &
		{\color{red}$\log_4 n+ O(\log_4\log_4 n)$} \\
		
		2 & 
		$\log_4 n+ \Theta(1)$ & 
		$\log_4 \log_4 n + \Theta(1)$ $^\star$ & 
		$\log_4 n +\Theta(1)$ & 
		{\color{red}$\log_4 n+ O(\log_4\log_4 n)$} & 
		{\color{red}$\log_4 n+ O(\log_4\log_4 n)$} &
		{\color{red}$\log_4 n+ O(\log_4\log_4 n)$} \\
		
		3 & 
		1 ${^\ddagger}$ &
		0 & 
		$\log_4 \log_4 n + \Theta(1)$ $^\star$ &
		$\log_4 \log_4 n + \Theta(1)$ $^\star$ &
		$\log_4 \log_4 n + \Theta(1)$ $^\star$ &
		$\log_4 \log_4 n + \Theta(1)$ $^\star$ \\
		
		4 & 
		1 ${^\ddagger}$ &
		0 & 
		$\log_4 \log_4 n + \Theta(1)$ $^\star$ &
		2 $^{\star\dagger}$ &
		2 $^{\star\dagger}$ &
		$\log_4 \log_4 n + \Theta(1)$ $^\star$ \\
		
		5 & 
		0 &
		0 &
		0 &
		1 $^{\star\ddagger}$ &
		1 $^{\star\ddagger}$ &
		2 $^{\star\dagger}$\\
		
		6 & 
		0 &
		0 & 
		0 $^\star$ &
		1 $^{\star\ddagger}$ &
		0 $^\star$ &
		2 $^{\star\dagger}$ \\
		
		7 & 
		0 &
		0 & 
		0 $^\star$ &
		0 $^\star$ &
		0 $^{\star}$ &
		1 $^{\star\ddagger}$ \\

		$\ge 8$ & 
		0 &
		0 & 
		0 $^\star$ &
		0 $^\star$ &
		0 $^\star$ &
		0 $^\star$ \\
	\end{tabular}
	\caption{\footnotesize Asymptotically estimates on $\rho(n,N;B)$ for various error-ball functions when $q=4$.
		All entries {\em except} those marked in {\color{red}red} are asymptotically {\em exact}. 
		Marked with $\star$ are entries derived in this work.
		Marked with $\dagger$ are entries where $1\le \lim_{n\to \infty}\rho(n,N;B)\le 2$.
		Marked with $\ddagger$ are entries where $\rho(n,N;B)=1$.
	}
	%\vspace{-2mm}
	\label{redundancy-q4}
\end{table*}

\subsection{Main Contributions}

%In this work, we design codebooks so that we can uniquely reconstruct codewords given at least two noisy reads, but less than   
In this work, we focus on the case where $2\le N \le \nu(\Sigma_q^n;B)$ with $B\in\{ \BS,\BI,\BD,\BID,\BSI,\BSD,\Bedit\}$.
In other words, we have more than one noisy reads, but not sufficient reads to uniquely reconstruct a $q$-ary word.
When $N=2$ and $B=\BD$, we have a recent code construction from \cite{chee2018coding} (see Example~\ref{exa:toy} and Theorem~\ref{thm:del}).
Specifically, in Section~\ref{sec:construction}, we make suitable modifications to this code construction and 
{show} that the resulting codes are $(n,N;B)$-reconstruction code for the error-balls of interest.
Table~\ref{code:summary} summarizes all reconstruction codes provided in this paper.

To do so, in Section~\ref{sec:intersection}, we study in detail the intersection of certain error-balls and 
derive the necessary and sufficient conditions for the size of an intersection.
Using these {characterizations}, we not only design the desired reconstruction codes, 
{but} we also obtain asymptotically tight lower bounds in Section~\ref{sec:exact} 
and in our companion paper~\cite{Chrisnata.arxiv.2020}.
We summarize a part of our results for the case $q\in\{2,4\}$ in Tables~\ref{redundancy-q2} and~\ref{redundancy-q4}, respectively.
%observe that most values of $\rho(n,N;B)$ have been determined up to one bit of redundancy.
We remark that all values of $\rho(n,N;B)$ have been determined asymptotically, except when
 $N\in \{1,2\}$ and $B\in\{\BSD_q, \BSI_q, \Bedit_q\}$ and $q\ge 3$.
% $(N,B)\in\{(1,\BSD_q), (1,\BSI_q), (1,\Bedit_q)\}$ for $q\ge 3$.

In Section~\ref{sec:simulations}, we propose a simple bounded distance decoder for these reconstruction codes. 
Via computer simulations, we demonstrate that in certain cases,
reconstruction codes can achieve similar performance as classical error-correcting codes with less redundant symbols. 

\section{Combinatorial Characterization of the Intersection of Error-Balls}
\label{sec:intersection}

In the previous section, Theorem~\ref{thm:tsub-coverage} provides the read coverage $\nu(\Sigma_q,\BS)$ for the case of substitutions, and in this section, we determine the read coverage of the input space for our other error-balls of interest.
Namely, we determine $\nu(\Sigma_q^n;B)$ for $B\in\{\BD_q,\BI_q, \BSI_q, \BSD_q, \BID_q, \Bedit_q\}$.
%$B\in\{\BS_q, \BD_q,\BI_q, \BSI_q, \BSD_q, \BID_q, \Bedit_q\}$.
In addition to computing the read coverage or maximum intersection size, 
we also characterize when the error-balls of a pair of words have intersection of a certain size.
This combinatorial characterization will be crucial in the code construction in Section~\ref{sec:construction}.

%First, we recall the following result from Levenshtein's seminal work.
%
%First, we consider the single substitution case. We recall the following result from Levenshtein's seminal work.
%In addition, we characterize when the error-balls of a pair of words have non-empty intersection.
%For single substitution, the characterization is straightforward consequence of \eqref{eq:tsub}.
%In what follows, we let $\dH(\bx,\by)$ denote the Hamming distance of $\bx$ and $\by$.

%\begin{lemma}[{Levenshtein \cite[Theorem 1]{Levenshtein.2001}}]\label{char:sub}
%	Let $\bx$ and $\by$ be distinct words in $\Sigma_q^n$. % of length $n$.
%	We have that 
%	\begin{enumerate}[(i)]
%		\item $|\BS_q(\bx)\cap\BS_q(\by)|= q$ if and only if $\dH(\bx,\by)=1$.
%		\item $|\BS_q(\bx)\cap\BS_q(\by)|= 2$ if and only if $\dH(\bx,\by)=2$.
%		\item $|\BS_q(\bx)\cap\BS_q(\by)|= 0$ if and only if $\dH(\bx,\by)\ge 3$.
%		%\item $|\BS(\bx)\cap\BS(\by)|= 2$ if and only if the Hamming distance of $\bx$ and $\by$ is at most two.
%		%\item $|\BS(\bx)\cap\BS(\by)|= 0$ if and only if the Hamming distance of $\bx$ and $\by$ is at least three.
%	\end{enumerate}
%Therefore, we have that
%$\nu(\Sigma_q^n;B)=q$ and $\rho(n,N;B)=0$ for $N\ge q+1$. 
%\end{lemma}

Besides the case of substitutions, Levenshtein also determined the read coverage for $B\in \{\BD,\BI\}$.

\begin{theorem}[{Levenshtein \cite[Eq. (26), (34)]{Levenshtein.2001}}]\label{thm:deletion-coverage}
Let $B\in\{\BI_q,\BD_q\}$.
If $\bx$ and $\by$ are distinct words in $\Sigma_q^n$, then
$|B(\bx)\cap B(\by)|\le 2$.
Therefore, we have that
$\nu(\Sigma_q^n;B)=2$ and $\rho(n,N;B)=0$ for $N\ge 3$. 
\end{theorem}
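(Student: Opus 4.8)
My plan is to reduce everything to the single inequality $|B(\bx)\cap B(\by)|\le 2$ for $B\in\{\BD_q,\BI_q\}$ and distinct $\bx,\by\in\Sigma_q^n$. Granting this, the ``therefore'' assertions are formal: choosing $\C=\Sigma_q^n$ gives $\nu(\C;B)\le 2<N$ whenever $N\ge 3$, so $\C$ is an $(n,N;B)$-reconstruction code and hence $\rho(n,N;B)\le n-\log_q q^n=0$, while $\rho(n,N;B)\ge 0$ trivially; and a direct computation exhibits a pair $\bx,\by$ with $|B(\bx)\cap B(\by)|=2$ --- for instance $\bx=010\cdots0$ and $\by=10\cdots0$ in $\Sigma_q^n$ for deletions, whose $(n-1)$-deletion balls both contain $10\cdots0$ and $0\cdots0$, and a similar pair for insertions --- which upgrades the bound to $\nu(\Sigma_q^n;B)=2$.

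For the deletion bound I would first record the elementary \emph{run lemma}: deleting the symbols at two positions $i\ne j$ of a word produces the same result if and only if $i$ and $j$ lie in the same maximal run, so $\BD_q(\bx)$ has exactly $r(\bx)$ elements, one per maximal run of $\bx$, namely the word obtained by shrinking that run by one symbol. Next comes a \emph{prefix/suffix stripping reduction}: write $\bx=\bp\,\widetilde\bx\,\bs$ and $\by=\bp\,\widetilde\by\,\bs$, where $\bp$ is the longest common prefix and $\bs$ the longest common suffix of $\bx$ and $\by$, so that $\widetilde\bx_1\ne\widetilde\by_1$ and $\widetilde\bx_\ell\ne\widetilde\by_\ell$ with $\ell=|\widetilde\bx|=|\widetilde\by|\ge 1$. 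I claim that $\widetilde\bz\mapsto\bp\,\widetilde\bz\,\bs$ is a bijection from $\BD_q(\widetilde\bx)\cap\BD_q(\widetilde\by)$ onto $\BD_q(\bx)\cap\BD_q(\by)$; the content is that the single symbol deleted from $\bx$ (resp. from $\by$) to obtain a common $(n-1)$-subsequence $\bz$ cannot lie inside $\bp$ or $\bs$, because by maximality of the common prefix/suffix such a deletion would force a long constant run of $\bx$ straddling the first position where $\bx$ and $\by$ disagree, and the analogous requirement would be forced of $\by$ at a position where the two words disagree --- a contradiction.

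After this reduction we may assume $\widetilde\bx_1\ne\widetilde\by_1$ and $\widetilde\bx_\ell\ne\widetilde\by_\ell$, and here the bound falls out: suppose $\widetilde\bz\in\BD_q(\widetilde\bx)\cap\BD_q(\widetilde\by)$ is obtained from $\widetilde\bx$ by shrinking a run $R$. If $R$ is an \emph{interior} run of $\widetilde\bx$, then $\widetilde\bz$ still begins with $\widetilde\bx_1$ and ends with $\widetilde\bx_\ell$; but $\widetilde\bz$ is also obtained from $\widetilde\by$ by shrinking some run $R'$, and running through the possibilities for $R'$ (interior, first, or last run of $\widetilde\by$; of length one or of length at least two) shows that in every case the first symbol of $\widetilde\bz$ would be $\widetilde\by_1\ne\widetilde\bx_1$ or its last symbol would be $\widetilde\by_\ell\ne\widetilde\bx_\ell$, a contradiction. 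Hence $R$ is the first or the last run of $\widetilde\bx$, leaving at most two choices for $\widetilde\bz$, i.e.\ $|\BD_q(\bx)\cap\BD_q(\by)|\le 2$. The insertion bound is obtained by the entirely parallel argument: the run lemma has an insertion analogue (inserting a fixed symbol into any gap inside a single run of $\bx$, or immediately adjacent to it, yields the same word), the prefix/suffix stripping is identical, and one concludes with the same first-run/last-run dichotomy; alternatively one verifies $|\BI_q(\bx)\cap\BI_q(\by)|=|\BD_q(\bx)\cap\BD_q(\by)|$ and reuses the deletion case. I expect the main obstacle to be the alignment bookkeeping in the stripping reduction --- specifically, ruling out deletions (or insertions) that act inside the common prefix or suffix --- rather than any conceptually hard step.
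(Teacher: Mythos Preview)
The paper does not prove this theorem---it is cited from Levenshtein---so there is no direct comparison to make. However, the paper's proof of Proposition~\ref{char:deletion} does contain the full argument for the deletion bound: it fixes the first and last disagreement indices $i<j$, shows that any $\bz\in\BD_q(\bx)\cap\BD_q(\by)$ arises from deleting an index $k\le i$ or $k\ge j$ in $\bx$, and that each of these two options forces a single specific $\bz$. Your run-lemma plus prefix/suffix stripping plus first/last-run dichotomy is precisely this argument in a different dress; in particular, the ``constant run straddling $i$'' you allude to is exactly the chain $x_k=x_{k+1}=\cdots=x_i$ that the paper derives, which is what lets you slide the deletion from inside $\bp$ to the first position of $\widetilde\bx$. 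So your stripping step is correct, but your one-line justification is too loose as written---the clean way to make it precise is the index-chasing in the paper's Proposition~\ref{char:deletion}.

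One genuine slip: your alternative route for insertions, ``verify $|\BI_q(\bx)\cap\BI_q(\by)|=|\BD_q(\bx)\cap\BD_q(\by)|$ and reuse the deletion case,'' is false. When $\dH(\bx,\by)=1$, say $\bx=\ba\alpha\bb$ and $\by=\ba\beta\bb$ with $\alpha\ne\beta$, the deletion balls meet only in $\ba\bb$ (size $1$), whereas the insertion balls meet in both $\ba\alpha\beta\bb$ and $\ba\beta\alpha\bb$ (size $2$); this is recorded explicitly in Proposition~\ref{char:deletion}. So that shortcut does not exist. Your primary option---running the parallel first/last-run argument for insertions---does work, so the overall plan survives; just drop the false alternative.
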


Furthermore, in an earlier work, Levenshtein showed that $\BD_q(\bx)\cap \BD_q(\by)=\varnothing$ 
if and only if $\BI_q(\bx)\cap \BI_q(\by)=\varnothing$ \cite{Levenshtein.1966}.
However, as pointed earlier, the Levenshtein distance is insufficient to characterization the intersection size.
To do so, we require the following notion of confusability. 

\pagebreak

\begin{definition}
Two words $\bx$ and $\by$ of length $n$ are said to be {\em Type-A-confusable}
if there exists subwords $\ba$, $\bb$, and $\bc$ such that the following holds.
\begin{enumerate}
\item[(A1)] $\bx=\ba\bc\bb$ and $\by=\ba\barc\bb$, where $\barc$ is the complement of $\bc$.
\item[(A2)] %$\bc$ is a word of period exactly two. %In other words, the word 
%$\bc$ is one of the following forms: $(01)^m$, $(01)^m0$,  $(10)^m$, $(10)^m1$ for some $m\ge 1$.
$\bc$ is one of the following forms: $(\alpha \beta)^m$, $(\alpha \beta)^m\beta$ for some $m\ge 1$ and $\alpha\ne\beta \in\Sigma_q$.
\end{enumerate}
%$\bx$ and $\by$ of length $n$ are said to be strongly \something if condition (D2) holds with $m\ge 2$.
\end{definition}

\begin{proposition}\label{char:deletion}
Let $B\in\{\BD_q,\BI_q\}$ and $\bx$ and $\by$ be distinct words in $\Sigma_q^n$.
\begin{itemize}
	\item If $\dH(\bx,\by)=1$, we have that $|\BD(\bx)\cap \BD(\by)|= 1$ while $|\BI(\bx)\cap \BI(\by)|= 2$. 
	\item If $\dH(\bx,\by)\ge 2$, we have that $|B(\bx)\cap B(\by)|= 2$ if and only if $\bx$ and $\by$ are Type-A-confusable.
\end{itemize} 
\end{proposition}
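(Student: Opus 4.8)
The plan is to analyze the structure of the intersection $B(\bx) \cap B(\by)$ for $B \in \{\BD_q, \BI_q\}$ by reasoning about where the deleted (resp.\ inserted) coordinate must lie relative to the positions where $\bx$ and $\by$ disagree. I will treat the deletion case first and then transfer to insertions using Levenshtein's duality (from \cite{Levenshtein.1966}, cited above) that $\BD_q(\bx) \cap \BD_q(\by) = \varnothing$ iff $\BI_q(\bx)\cap\BI_q(\by)=\varnothing$, together with a more refined version of that correspondence that tracks the intersection size.

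\emph{Hamming distance one.} Suppose $\bx$ and $\by$ differ in exactly one coordinate, say position $j$, with $x_j \ne y_j$. For the deletion ball: a common element $\bz$ of length $n-1$ must arise by deleting some coordinate from $\bx$ and some coordinate from $\by$. If the deleted coordinate in $\bx$ is not $j$, then the surviving string still carries $x_j$ at the appropriate spot, and no single deletion from $\by$ can reproduce it unless a run-shift absorbs the discrepancy; a short case analysis shows the only way to kill the mismatch is to delete position $j$ from both, giving a unique common subword, so $|\BD(\bx)\cap\BD(\by)| = 1$. (One must check that deleting a coordinate adjacent to $j$ from one word can never match a deletion in the other, which follows because $\bx$ and $\by$ agree everywhere except $j$.) For the insertion ball, the same single-mismatch configuration is exactly the classic case where two length-$n+1$ supersequences are common: inserting $y_j$ just before position $j$ of $\bx$ and inserting $x_j$ just after, versus the symmetric operations on $\by$ — this yields exactly two common supersequences, so $|\BI(\bx)\cap\BI(\by)| = 2$. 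This asymmetry between deletions and insertions at distance one is the content of the first bullet and essentially already appears in Levenshtein's work.

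\emph{Hamming distance at least two.} Here is the main content. Write $\bx$ and $\by$ with their first disagreement at position $i$ and last disagreement at position $i'$, so $\bx = \ba\, \bc\, \bb$ and $\by = \ba\, \bc'\, \bb$ where $\ba = x_1\cdots x_{i-1} = y_1\cdots y_{i-1}$, $\bb = x_{i'+1}\cdots x_n = y_{i'+1}\cdots y_n$, and $\bc, \bc'$ are the middle blocks of equal length $\ell = i'-i+1 \ge 2$ with $c_1 \ne c_1'$ and $c_\ell \ne c_\ell'$. If $\bz \in \BD(\bx)\cap\BD(\by)$, then $\bz$ is obtained from $\bx$ by deleting one coordinate and from $\by$ by deleting one coordinate; since the two words agree on $\ba$ and $\bb$, a counting/alignment argument forces the deleted coordinate in each to lie within the middle block, and the surviving middle blocks (of length $\ell-1$) must coincide. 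So the question reduces to: for which pairs of equal-length words $\bc, \bc'$ with $\bc$ and $\bc'$ differing in \emph{both} endpoints does $\BD(\bc)\cap\BD(\bc')$ have size exactly $2$ (it is always at least $1$ by the Levenshtein-distance-$1$ hypothesis once we know it's nonempty, and capped at $2$ by Theorem~\ref{thm:deletion-coverage}). I would show that a common length-$(\ell-1)$ subword of $\bc$ and $\bc'$ exists at all only when $\bc' = \barc$ (the complement), which over a general alphabet $\Sigma_q$ forces $q$-ary complementation to make sense here — actually condition (A1) already restricts to this — and then that a \emph{second} common subword exists precisely when $\bc$ has the alternating form $(\alpha\beta)^m$ or $(\alpha\beta)^m\beta$ of (A2). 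The "only if" direction: if $\bc$ deviates from strict alternation, there is a position where $c_k = c_{k+1}$, and deleting $c_k$ versus $c_{k+1}$ from $\bc$ gives the same string while the corresponding operations on $\barc$ cannot both land on the unique matching subword — pinning the count to $1$. The "if" direction: for $\bc = (\alpha\beta)^m$, one checks directly that deleting the leading $\alpha$ (and the trailing-position bookkeeping) yields two distinct alignments that both also arise as deletions from $\barc = (\beta\alpha)^m$, giving exactly two common subwords.

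\emph{Transfer to insertions.} For $B = \BI_q$ I will use the standard bijection between supersequences and subsequences: $\bz \in \BI(\bx)\cap\BI(\by)$ of length $n+1$ corresponds, after deleting a carefully chosen coordinate, to an element of a related deletion-ball intersection, and this correspondence is size-preserving in the distance-$\ge 2$ regime (the distance-$1$ anomaly is exactly where it fails, which is why the first bullet splits the two balls). Concretely, $\BI(\bx)\cap\BI(\by)\ne\varnothing$ iff $\bx,\by$ share a common subsequence of length $n-1$ (Levenshtein), i.e. iff $\BD(\bx)\cap\BD(\by)\ne\varnothing$, and a parallel run-structure analysis on the inserted coordinate shows the intersection has size $2$ under exactly the same Type-A-confusability condition.

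\emph{Main obstacle.} I expect the delicate step to be the alignment argument in the distance-$\ge 2$ case that forces the deleted coordinates into the middle block and then the exhaustive (but finite) case analysis showing that non-alternating middle blocks yield intersection size exactly $1$ rather than $2$ — one has to rule out "accidental" second common subwords created by runs straddling the boundary between $\ba\bc$ and $\bc\bb$. Making the boundary bookkeeping precise (and confirming it is symmetric between the deletion and insertion pictures) is where the real work lies; the rest is Levenshtein's duality plus Theorem~\ref{thm:deletion-coverage} supplying the a priori cap of $2$.
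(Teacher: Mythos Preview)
Your overall strategy matches the paper's: isolate the middle block $\bc,\bc'$ between the first and last disagreement and analyze the deletion balls there. However, there is a genuine error in the plan. You assert that ``a common length-$(\ell-1)$ subword of $\bc$ and $\bc'$ exists at all only when $\bc' = \barc$'', i.e., that mere nonemptiness of $\BD(\bc)\cap\BD(\bc')$ already forces the complement relation. This is false. Over $\Sigma_3$ take $\bc = 012$ and $\bc' = 120$: they differ at both endpoints, their single-deletion balls meet in $\{12\}$, yet $\bc'$ is not a two-symbol complement of $\bc$ (indeed $\bc$ uses three distinct symbols). Your subsequent step---deriving the alternating form from a \emph{second} common subword---presupposes $\bc'=\barc$, so as outlined the argument does not close.

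The paper sidesteps this by never decoupling the two conclusions. It takes the two distinct elements $\bz,\bz'\in\BD(\bx)\cap\BD(\by)$ simultaneously: the position analysis for $\bz$ forces $y_t=x_{t+1}$ across the middle block, while the same analysis applied to $\bz'$ (which, being distinct from $\bz$, must go the other way) forces $x_t=y_{t+1}$. Combining the two shift relations gives $x_{t+2}=x_t$ (period two) and $y_t=x_{t+1}$ at once; together with $x_i\ne x_{i+1}$ this is exactly (A1)--(A2). The complement relation and the alternating pattern emerge \emph{together} from the existence of two distinct common subwords, not sequentially. Your outline is repairable by merging the two steps in this way, but the intermediate claim as stated must be dropped.
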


\begin{proof}
When the Hamming distance of $\bx$ and $\by$ is one, i.e. 
$\bx= \ba \alpha \bb$ and $\by = \ba \beta \bb$ where $\alpha \neq \beta$, 
we have  $\BD(\bx)\cap \BD(\by)=\{ \ba\bb\}$ and 
$\BI(\bx)\cap \BI(\by)=\{\ba\alpha \beta \bb, \ba \beta \alpha \bb\}$.	
	
Next, we consider the case that $\dH(\bx,\by)\ge 2$ %the Hamming distance of $\bx$ and $\by$ is at least two 
and show that $|B(\bx)\cap B(\by)|= 2$ if and only if $\bx$ and $\by$ are Type-A-confusable.
We present the proof for the case where $B=\BD$ and the case for $B=\BI$ can be similarly proved.
Let $\bx=x_1x_2\cdots x_n$, $\by=y_1y_2\cdots y_n$ and $\BD(\bx)\cap \BD(\by)=\{\bz,\bz'\}$.
%Now, the Hamming distance $\bx$ and $\by$ is at least two. 
%Otherwise, $|\BD(\bx)\cap \BD(\by)|=1$.
Since $\bx$ and $\by$ are of distance two, we set $i$ and $j$ be the smallest and largest indices, respectively,
where the two words differ.

We first consider $\bz \in\BD(\bx)\cap \BD(\by)$.
Let $\bz$ be obtained from $\bx$ by deleting index $k$ and from $\by$ by deleting index $\ell$.
We first claim that either $k\le i$ or $k\ge j$. Suppose otherwise that $i<k<j$ and we have two cases.
\begin{itemize}
\item When $\ell < j$, we consider the $(j-1)$th index of $\bz$. 
On one hand, since $k<j$ and we delete $x_k$ from $\bx$, the $(j-1)$th index of $\bz$ is $x_j$.
On the other hand, since we delete $y_\ell$ from $\by$, the $(j-1)$th index of $\bz$ is $y_j$.
Hence, $x_j=y_j$, yielding a contradiction.
\item When $\ell \ge j$, we consider the $i$th index of $\bz$. 
Proceeding as before, we conclude that $x_i=y_i$, which is not possible.
\end{itemize}

Without loss of generality, we assume that $k \le i$. A similar argument shows that $\ell\ge j$.
Therefore, we have that $y_{t}=x_{t+1}$ for $k \le t\le \ell-1$.
Recall that $x_t=y_t$ whenever $t\le i-1$ or $t\ge j+1$. 
Hence, we have that $x_k=x_{k+1}=\cdots = x_i$, $y_k=y_{k+1}=\cdots =y_{i-1}$, 
$x_{j+1}=x_{j+2}=\cdots = x_\ell$, $y_j=y_{j+1}=\cdots =y_{\ell}$.  
In summary, if we set $\ba=x_1x_2\cdots x_{i-1}=y_1y_2\cdots y_{i-1}$ and 
$\bb=x_{j+1}x_{j+2}\cdots x_n=y_{j+1}y_{j+2}\cdots y_{n}$,
then 
\[\bz=\ba x_{i+1}x_{i+2}\cdots x_j \bb=\ba y_{i}y_{i+1}\cdots y_{j-1} \bb.\]

Now, we consider $\bz'$, the other word in $\BD(\bx)\cap \BD(\by)$.
Since $\bz'$ is distinct from $\bz$ and  proceeding as before, we have that
\[\bz'=\ba x_{i}x_{i+1}\cdots x_{j-1} \bb=\ba y_{i+1}y_{i+2}\cdots y_{j} \bb.\]
Hence, $x_t=y_{t+1}=x_{t+2}$ for $i\le t\le j-2$. 
Since $\bz\ne \bz'$, we have that $x_i\ne x_{i+1}$ and 
$y_iy_{i+1}\cdots y_j=\overline{x_ix_{i+1}\cdots x_j}$.
Therefore, $\bx$ and $\by$ satisfy conditions (A1) and (A2) and are Type-A-confusable.

Conversely, suppose that $\bx$ and $\by$ are Type-A-confusable.
Hence, there exist subwords $\ba$, $\bb$, $\bc$ that satisfy conditions (A1) and (A2).
We further set $\bc_1$ and $\bc_2$ to be words obtained by deleting the first and last index from $\bc$, respectively.
Then $\ba\bc_1\bb$ and $\ba\bc_2\bb$ are two distinct subwords that belong to $\BD(\bx)\cap \BD(\by)$.
Since $|\BD(\bx)\cap \BD(\by)|\le 2$, we have that the intersection size must be exactly two.
\end{proof}

%Consequently, we have the following corollary.
%
%\begin{corollary}\label{single-error}
%Set $\C=\{0,1\}^n$. We have that
%\begin{equation*}
%\nu(\C;\BS) = \nu(\C;\BD) =\nu(\C;\BI) = 2. 
%\end{equation*}
%Therefore, for $N\ge 3$,
%\begin{equation*}
%\rho(N;\BS)=\rho(N;\BD)=\rho(N;\BI)=0
%\end{equation*} 
%\end{corollary}

Proposition~\ref{char:deletion} can be extended to characterize the intersection sizes for error-balls involving either a single insertion or deletion.

%\begin{proposition}[Single ID]\label{char:id}
%\todo{can be$\{0,2,3,4\}$}
%Let $\bx$ and $\by$ be distinct binary words of length $n$.
%%Set $\C=\{0,1\}^n$.
%%Let $\bx$ and $\by$ be distinct words in $\C$.
%Then $|\BID(\bx)\cap\BID(\by)|\in\{0,2,4\}$.
%Furthermore, we have $|\BID(\bx)\cap\BID(\by)|=4$  if and only if 
%$\bx$ and $\by$ are Type-A-confusable.
%%Furthermore, we have that the following characterizations.
%%\begin{enumerate}[(i)]
%%\item $|\BID(\bx)\cap\BID(\by)|\in\{3,4\}$  if and only if 
%%$\bx$ and $\by$ are Type-A-confusable.
%%\item $|\BID(\bx)\cap\BID(\by)|=2$  if and only if 
%%$\bx$ and $\by$ are either Type-A-confusable or Type-B-confusable.
%%\end{enumerate}
%Therefore, when $\C=\{0,1\}^n$, we have that  $\nu(\C;\BID) = 4$ and $\rho(n,N;\BID)=0$ for $N\ge 5$.
%\end{proposition}
%
%\begin{proof}From \cite{Levenshtein.1966}, we have that
%\begin{equation}\label{eq:id}
%|\BD(\bx)\cap\BD(\by)|=|\BI(\bx)\cap \BI(\by)| \mbox{ for all $\bx$, $\by$}.
%\end{equation}
%Hence, since $\BID(\bx)\cap\BID(\by) = \left(\BD(\bx)\cap\BD(\by)\right)\cup \left(\BI(\bx)\cap\BI(\by)\right)$, we have that $|\BID(\bx)\cap\BID(\by)|=|\BD(\bx)\cap\BD(\by)|$. 
%Therefore, it follows from Theorem~\ref{classic} that $|\BID(\bx)\cap\BID(\by)|\in\{0,2,4\}$.
%
%Furthermore, when $|\BID(\bx)\cap\BID(\by)|=4$, we have that $|\BD(\bx)\cap\BD(\by)|=2$.
%Proposition~\ref{char:deletion} then implies that $\bx$ and $\by$ are Type-A-confusable.
%\end{proof}

\begin{proposition}[Single ID]\label{char:id}
%Let $\bx$ and $\by$ be distinct binary words of length $n$.
Let $\bx$ and $\by$ be distinct words in $\Sigma_q^n$.
Then $|\BID_q(\bx)\cap\BID_q(\by)|\in\{0,2,3,4\}$.
Furthermore, we have $|\BID_q(\bx)\cap\BID_q(\by)|=4$  if and only if 
$\bx$ and $\by$ are Type-A-confusable.
Therefore, we have that  $\nu(\Sigma_q^n;\BID_q) = 4$ and 
$\rho(n,N;\BID_q)=0$ for $N\ge 5$.
\end{proposition}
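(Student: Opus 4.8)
The plan is to decompose $\BID_q(\bx)$ as the disjoint-ish union $\BI_q(\bx)\cup\BD_q(\bx)$ and count the intersection $\BID_q(\bx)\cap\BID_q(\by)$ by splitting it into four pieces: $\BI_q(\bx)\cap\BI_q(\by)$, $\BD_q(\bx)\cap\BD_q(\by)$, $\BI_q(\bx)\cap\BD_q(\by)$, and $\BD_q(\bx)\cap\BI_q(\by)$. The first two pieces are controlled by Proposition~\ref{char:deletion} and Theorem~\ref{thm:deletion-coverage}: each has size at most $2$. For the cross terms, note that a word in $\BI_q(\bx)$ has length $n+1$ while a word in $\BD_q(\by)$ has length $n-1$, so $\BI_q(\bx)\cap\BD_q(\by)=\varnothing$ and likewise $\BD_q(\bx)\cap\BI_q(\by)=\varnothing$. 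Hence $|\BID_q(\bx)\cap\BID_q(\by)| = |\BI_q(\bx)\cap\BI_q(\by)| + |\BD_q(\bx)\cap\BD_q(\by)| \le 4$, which already gives $\nu(\Sigma_q^n;\BID_q)\le 4$ and therefore $\rho(n,N;\BID_q)=0$ for $N\ge 5$.

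For the exact characterization, I would argue by cases on $\dH(\bx,\by)$. If $\dH(\bx,\by)=1$, Proposition~\ref{char:deletion} gives $|\BD(\bx)\cap\BD(\by)|=1$ and $|\BI(\bx)\cap\BI(\by)|=2$, so the total is $3$; moreover such $\bx,\by$ are never Type-A-confusable since a Type-A pair differs in a block $\bc$ of length $\ge 2$ (forms $(\alpha\beta)^m$ or $(\alpha\beta)^m\beta$ with $m\ge 1$), whence $\dH\ge 2$. If $\dH(\bx,\by)\ge 2$, then by Proposition~\ref{char:deletion} applied to both $\BD$ and $\BI$, we have $|\BD(\bx)\cap\BD(\by)|=2$ iff $\bx,\by$ are Type-A-confusable and likewise $|\BI(\bx)\cap\BI(\by)|=2$ iff $\bx,\by$ are Type-A-confusable. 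Thus the total is $4$ exactly when $\bx,\by$ are Type-A-confusable (both summands equal $2$), and otherwise each summand is in $\{0,1\}$ with the two vanishing simultaneously (Levenshtein's result that $\BD(\bx)\cap\BD(\by)=\varnothing$ iff $\BI(\bx)\cap\BI(\by)=\varnothing$), giving a total in $\{0,2\}$ — combined with the $\dH=1$ case this yields the claimed value set $\{0,2,3,4\}$ and the "equals $4$ iff Type-A-confusable" statement.

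The one point needing care — and the main obstacle — is verifying that when $\dH(\bx,\by)\ge 2$ and the pair is \emph{not} Type-A-confusable, the total is exactly $0$ or $2$ and never $1$ or $3$; i.e., that $|\BD(\bx)\cap\BD(\by)|$ and $|\BI(\bx)\cap\BI(\by)|$ are either both zero or both equal to some common nonzero value. The "both zero or both nonzero" half is exactly Levenshtein's $\BD/\BI$ emptiness equivalence from \cite{Levenshtein.1966}. The remaining subtlety is ruling out $|\BD(\bx)\cap\BD(\by)|=1$ while $|\BI(\bx)\cap\BI(\by)|=2$ (or vice versa) for $\dH\ge 2$: but Proposition~\ref{char:deletion} already tells us that for $\dH\ge 2$ the value $2$ occurs for \emph{both} balls simultaneously (governed by the same Type-A condition), so the only remaining possibility in the non-Type-A, nonempty case is that both have size $1$, giving total $2$. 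I would double-check this last sub-case by a short direct argument (a deletion-ball and insertion-ball of a pair at distance $\ge 2$ that intersect but are not Type-A-confusable each contribute exactly one common word), mirroring the block-structure analysis in the proof of Proposition~\ref{char:deletion}. Once that is pinned down, the count $\{0,2,3,4\}$, the "$=4$ iff Type-A-confusable" equivalence, $\nu(\Sigma_q^n;\BID_q)=4$, and $\rho(n,N;\BID_q)=0$ for $N\ge 5$ all follow.
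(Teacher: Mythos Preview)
Your proposal is correct and follows essentially the same route as the paper: decompose $\BID_q(\bx)\cap\BID_q(\by)$ as the disjoint union $(\BD_q(\bx)\cap\BD_q(\by))\cup(\BI_q(\bx)\cap\BI_q(\by))$ (the cross terms vanish by length), then invoke Proposition~\ref{char:deletion} together with Levenshtein's emptiness equivalence~\eqref{eq:id}. Your ``point needing care'' is already fully handled by the argument you give --- for $\dH\ge 2$ and not Type-A, Proposition~\ref{char:deletion} forces both summands to lie in $\{0,1\}$, and \eqref{eq:id} forces them to be zero or nonzero simultaneously --- so no further direct check is needed; the paper's proof is in fact terser than yours on exactly this point.
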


\begin{proof}From \cite{Levenshtein.1966}, we have that for all $\bx$, $\by$,
\begin{equation}\label{eq:id}
\BD_q(\bx)\cap\BD_q(\by)= \varnothing \mbox{ if and only if } \BI_q(\bx)\cap \BI_q(\by) = \varnothing.
\end{equation}
Hence, since $\BID_q(\bx)\cap\BID_q(\by) = \left(\BD_q(\bx)\cap\BD_q(\by)\right)\cup \left(\BI_q(\bx)\cap\BI_q(\by)\right)$, it follows from Theorem~\ref{thm:deletion-coverage} that $|\BID_q(\bx)\cap\BID_q(\by)|\in\{0,2,3,4\}$.
Furthermore, when $|\BID_q(\bx)\cap\BID_q(\by)|=4$, we have that $|\BD_q(\bx)\cap\BD_q(\by)|=2$.
Proposition~\ref{char:deletion} then implies that $\bx$ and $\by$ are Type-A-confusable.
\end{proof}

When the error-balls include single substitutions, we require the following notion of confusability.

\begin{definition}
Two words $\bx$ and $\by$ of length $n$ are said to be {\em Type-B-confusable}
if there exists subwords $\ba$, $\bb$, $\bc$ and $\bc'$ such that the following hold:
\begin{enumerate}[(B1)]
\item $\bx=\ba\bc\bb$ and $\by=\ba\bc'\bb$;
\item $\{\bc,\bc'\}$ is of following form: $\{\alpha \beta ^m, \beta^m\alpha \}$ for some $m\ge 1$ and $\alpha\ne\beta \in\Sigma_q$.
\end{enumerate}
\end{definition}

\begin{proposition}[Single SD/SI]\label{char:sd}
Let $B\in\{\BSD_q,\BSI_q\}$ and $\bx$ and $\by$ be distinct words in $\Sigma_q^n$.
We have the following characterizations.
%Then $|\BSI_q(\bx)\cap \BSI_q(\by)|\le q+2$ and $|\BSD_q(\bx)\cap \BSD_q(\by)|\le \max\{q+1,4\}$.
%Furthermore, 
%\begin{enumerate}[(i)]
%\item $|B(\bx)\cap B(\by)|= 4$  if and only if 
%$\bx$ and $\by$ are  Type-B-confusable with condition (B2) satisfied with $m=1$.
%\item $|B(\bx)\cap B(\by)|= 3$  if and only if 
%$\bx$ and $\by$ are  Type-B-confusable with condition (B2) satisfied with $m\ge 2$,
%or the Hamming distance of $\bx$ and $\by$ is one.
%\end{enumerate}

\begin{itemize}
\item If $\dH(\bx,\by)=1$, 
then $|\BSI_q(\bx)\cap \BSI(\by)|= q+2$ while $|\BSD(\bx)\cap \BSD(\by)|= q+1$.

\item If $\dH(\bx,\by)=2$, then $|B(\bx)\cap B(\by)|\in\{2,3,4\}$. Furthermore,
\begin{enumerate}[(i)]
\item $|B(\bx)\cap B(\by)|= 4$  if and only if 
$\bx$ and $\by$ are  Type-B-confusable with condition (B2) satisfied with $m=1$.
\item $|B(\bx)\cap B(\by)|= 3$  if and only if 
$\bx$ and $\by$ are  Type-B-confusable with condition (B2) satisfied with $m\ge 2$.
\end{enumerate}
\item If $\dH(\bx,\by)\ge 3$, then $|B(\bx)\cap B(\by)|\le 2$. 
Furthermore, $|B(\bx)\cap B(\by)|= 2$  if and only if $\bx$ and $\by$ are Type-A-confusable.
%\item If the Hamming distance of $\bx$ and $\by$ is two, then
%\item If the Hamming distance of $\bx$ and $\by$ is one, 
\end{itemize}

Therefore, $|\BSI_q(\bx)\cap \BSI_q(\by)|\le q+2$ and $|\BSD_q(\bx)\cap \BSD_q(\by)|\le \max\{q+1,4\}$.
Moreover, %when $\C=\{0,1\}^n$, 
we have that $\nu(\Sigma_q^n;\BSI_q) = q+2$, $\nu(\Sigma_q^n;\BSD_q) = \max\{q+1,4\}$,  
$\rho(n,N;\BSI_q) =0$ for $N\ge q+3$ and $\rho(n,N;\BSD_q) =0$ for $N\ge \max\{q+2,5\}$.
%we have $\nu(\C;\BSD)=\nu(\C;\BSI) = 4$ and 
%$\rho(N;\BSI)\rho(N;\BSD)=0$ for $N\ge 5$.
\end{proposition}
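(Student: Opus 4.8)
The plan is to reduce everything to facts already in hand about substitution balls and about single-deletion/insertion balls. Because $\BS_q(\bx)$ consists of words of length $n$, while $\BD_q(\bx)$ (resp.\ $\BI_q(\bx)$) consists of words of length $n-1$ (resp.\ $n+1$), the two parts of $\BSD_q(\bx)$ (resp.\ $\BSI_q(\bx)$) are disjoint as sets, so
\[|\BSD_q(\bx)\cap\BSD_q(\by)|=|\BS_q(\bx)\cap\BS_q(\by)|+|\BD_q(\bx)\cap\BD_q(\by)|,\]
and the same identity holds with $\BD$ replaced by $\BI$. The first summand is supplied by Theorem~\ref{thm:tsub-coverage} with $t=1$ (so that the radius-$1$ substitution ball is exactly $\BS_q$): by \eqref{eq:char-sub} it equals $q$, $2$, or $0$ according as $\dH(\bx,\by)$ is $1$, $2$, or at least $3$. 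Thus the whole proposition reduces to controlling $|\BD_q(\bx)\cap\BD_q(\by)|$ and $|\BI_q(\bx)\cap\BI_q(\by)|$, which I would do by splitting on $\dH(\bx,\by)$.

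If $\dH(\bx,\by)=1$, the computation in the proof of Proposition~\ref{char:deletion} already gives $|\BD_q(\bx)\cap\BD_q(\by)|=1$ and $|\BI_q(\bx)\cap\BI_q(\by)|=2$; adding the substitution contribution $q$ produces $q+1$ and $q+2$. If $\dH(\bx,\by)\ge3$, the substitution contribution vanishes, and Proposition~\ref{char:deletion} says the insertion/deletion intersection is at most $2$, with equality exactly when $\bx$ and $\by$ are Type-A-confusable; this is precisely the assertion in that range.

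The substantive case is $\dH(\bx,\by)=2$, where $|\BS_q(\bx)\cap\BS_q(\by)|=2$ always and we must show $|\BD_q(\bx)\cap\BD_q(\by)|,\,|\BI_q(\bx)\cap\BI_q(\by)|\in\{0,1,2\}$ and determine when each value is taken. I would re-run the run-chasing argument from the proof of Proposition~\ref{char:deletion}: with $i<j$ the two mismatch positions, any common length-$(n-1)$ subword is obtained by deleting a symbol at position $\le i$ from one of $\bx,\by$ and at position $\ge j$ from the other, and, after sliding each deletion within its run, matching the two resulting words forces the block strictly between positions $i$ and $j$ to be a single run and forces $\bx$ and $\by$ into the shifted-run shape of condition (B2), i.e.\ $\bx$ and $\by$ are Type-B-confusable. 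Conversely, starting from a Type-B decomposition $\bx=\ba\bc\bb$, $\by=\ba\bc'\bb$ with $\{\bc,\bc'\}=\{\alpha\beta^m,\beta^m\alpha\}$, deleting the isolated $\alpha$ on each side gives a common subword, and when $m=1$ deleting the isolated $\beta$ on each side gives a second; a short check shows these exhaust the common subwords, so the deletion intersection has size $2$ if $m=1$, size $1$ if $m\ge2$, and size $0$ if $\bx,\by$ are not Type-B-confusable. The insertion count is the mirror argument using common length-$(n+1)$ supersequences and gives the identical dichotomy. Adding $2$ for the substitution part yields $|B(\bx)\cap B(\by)|\in\{2,3,4\}$ together with the stated characterizations of the values $4$ and $3$.

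Finally, the read-coverage bounds are obtained by maximizing over all pairs: the $\dH(\bx,\by)=1$ pairs contribute $q+2$ for $\BSI_q$ and $q+1$ for $\BSD_q$, the $\dH(\bx,\by)=2$ pairs contribute at most $4$, and the $\dH(\bx,\by)\ge3$ pairs contribute at most $2$; since $q\ge2$ this gives $\nu(\Sigma_q^n;\BSI_q)=q+2$ and $\nu(\Sigma_q^n;\BSD_q)=\max\{q+1,4\}$ (the $4$ being relevant only for $q=2$), and hence $\rho(n,N;B)=0$ whenever $N>\nu(\Sigma_q^n;B)$, i.e.\ $\rho(n,N;\BSI_q)=0$ for $N\ge q+3$ and $\rho(n,N;\BSD_q)=0$ for $N\ge\max\{q+2,5\}$. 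The hardest part is the $\dH(\bx,\by)=2$ analysis of the deletion (and insertion) intersection: recovering the exact shifted-run structure of $\bx$ and $\by$ from the mere existence of one common subword, and carefully handling the degenerate configurations — a run $\beta^m$ abutting equal symbols inside $\ba$ or $\bb$, and the symmetric $m=1$ case where the count jumps to $2$. Everything else is bookkeeping on top of Theorems~\ref{thm:tsub-coverage} and~\ref{thm:deletion-coverage} and Proposition~\ref{char:deletion}.
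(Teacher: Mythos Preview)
Your proposal is correct and follows essentially the same approach as the paper: decompose $|B(\bx)\cap B(\by)|$ into the substitution part (handled by \eqref{eq:char-sub}) and the deletion/insertion part (handled via Proposition~\ref{char:deletion}), then case-split on $\dH(\bx,\by)$. The only cosmetic difference is in the $\dH(\bx,\by)=2$ subcase with deletion intersection of size~$2$: the paper shortcuts by invoking the Type-A characterization from Proposition~\ref{char:deletion} and then observing that Type-A with $\dH=2$ forces $\bc=\alpha\beta$, i.e.\ Type-B with $m=1$, whereas you re-derive the shifted-run structure directly; both routes arrive at the same place with the same effort.
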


\begin{proof}We consider the case $B=\BSD_q$ and the case for $\BSI_q$ can be similarly proved. 
Now, for all $\bx$, $\by$, 
\[
|\BSD_q(\bx)\cap\BSD_q(\by)|=|\BS_q(\bx)\cap \BS_q(\by)|+|\BD_q(\bx)\cap \BD_q(\by)|\,. 
\]

Hence, when $\dH(\bx,\by)=1$, it is immediate from \eqref{eq:tsub} and Proposition~\ref{char:deletion} that 
$|\BSD_q(\bx)\cap \BSD_q(\by)|=q+1$.

When $\dH(\bx,\by)=2$, it is again immediate from \eqref{eq:tsub} and Proposition~\ref{char:deletion} that 
$|\BSD_q(\bx)\cap \BSD_q(\by)|\in\{2,3,4\}$. 
Now, set $i$ and $j$ ($i<j$) to be the two indices where $\bx$ and $\by$ differ and we consider two cases.
\begin{itemize}
	\item $|\BSD_q(\bx)\cap\BSD_q(\by)|=3$ and so, $|\BD_q(\bx)\cap \BD_q(\by)|=1$.
	Set $\BD_q(\bx)\cap \BD_q(\by)=\{\bz\}$, 
	$\ba=x_1x_2\cdots x_{i-1}=y_1y_2\cdots y_{i-1}$, and 
	$\bb=x_{j+1}x_{j+2}\cdots x_n=y_{j+1}y_{j+2}\cdots y_{n}$.
	Then proceeding as the proof in Proposition~\ref{char:deletion}, we have that 
	\[\bz=\ba x_{i+1}x_{i+2}\cdots x_j \bb=\ba y_{i}y_{i+1}\cdots y_{j-1} \bb.\]
	Hence, we have $y_t=x_t=y_{t-1}$ for $i+1\le t\le j-1$ and 
	therefore, $\bx$ and $\by$ satisfy conditions (B1) and (B2) with $m\ge 2$.
	
	\item $|\BSD_q(\bx)\cap\BSD_q(\by)|=4$ and so, $|\BD_q(\bx)\cap \BD_q(\by)|=2$.
	Then Proposition~\ref{char:deletion} implies that $\bx$ and $\by$ are Type-A-confusable.
	Since $\bx$ and $\by$ differ at exactly two coordinates, 
	we have that they satisfy conditions (B1) and (B2) with $m=1$.
\end{itemize}

When $\dH(\bx,\by)\ge 3$, \eqref{eq:tsub} and Proposition~\ref{char:deletion} implies that 
$|\BSD_q(\bx)\cap \BSD_q(\by)|\le 2$. 
When $|\BSD_q(\bx)\cap\BSD_q(\bx)|=2$, it follows that $|\BD_q(\bx)\cap\BD_q(\by)|=2$.
Hence, Proposition~\ref{char:deletion} implies that $\bx$ and $\by$ are Type-A-confusable.
\end{proof}

Finally, we characterize the intersection sizes when the error-balls arise from single edits.

\begin{proposition}[Single Edit]\label{char:edit}
	Let $\bx$ and $\by$ be distinct words in $\Sigma_q^n$.
	We have the following characterizations.
	
	\begin{itemize}
		\item If $\dH(\bx,\by)=1$, 
		then $|\Bedit_q(\bx)\cap \Bedit_q(\by)|= q+3$.
		\item If $\dH(\bx,\by)=2$, then $|\Bedit_q(\bx)\cap \Bedit_q(\by)|\in \{2,4,6\}$. Furthermore,
		\begin{enumerate}[(i)]
			\item $|\Bedit_q(\bx)\cap \Bedit_q(\by)|= 6$  if and only if 
			$\bx$ and $\by$ are  Type-B-confusable with condition (B2) satisfied with $m=1$.
			\item $|\Bedit_q(\bx)\cap \Bedit_q(\by)|= 4$  if and only if 
			$\bx$ and $\by$ are  Type-B-confusable with condition (B2) satisfied with $m\ge 2$.
		\end{enumerate}
		\item If $\dH(\bx,\by)\ge 3$, then $|\Bedit(\bx)\cap\Bedit(\by)|\in\{0,1,2,4\}$. 
		Furthermore, $|\Bedit_q(\bx)\cap \Bedit_q(\by)|= 4$  if and only if $\bx$ and $\by$ are Type-A-confusable.
		
	\end{itemize}
	
	Therefore, $|\Bedit_q(\bx)\cap \Bedit_q(\by)|\le \max\{q+3,6\}$.
	Moreover,  we have that $\nu(\Sigma_q^n;\Bedit_q) = \max\{q+3,6\}$ and $\rho(n,N;\BSD_q) =0$ for $N\ge \max\{q+4,7\}$.
\end{proposition}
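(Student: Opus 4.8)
The plan is to reduce everything to the three component balls, whose sizes and intersections are already under control. The key observation is that $\BS_q(\bx)$, $\BI_q(\bx)$ and $\BD_q(\bx)$ consist of words of lengths $n$, $n+1$ and $n-1$ respectively, and hence are pairwise disjoint; the same holds for $\by$. Therefore, for all distinct $\bx,\by\in\Sigma_q^n$,
\[
|\Bedit_q(\bx)\cap\Bedit_q(\by)| = |\BS_q(\bx)\cap\BS_q(\by)| + |\BI_q(\bx)\cap\BI_q(\by)| + |\BD_q(\bx)\cap\BD_q(\by)| = |\BSD_q(\bx)\cap\BSD_q(\by)| + |\BI_q(\bx)\cap\BI_q(\by)|,
\]
and it suffices to track these summands as a function of $d = \dH(\bx,\by)$: use \eqref{eq:char-sub} for the substitution term, Proposition~\ref{char:deletion} for the insertion and deletion terms, and Proposition~\ref{char:sd} (which already bundles the substitution and deletion terms) to import the Type-B characterizations for free. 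I would also record one small observation used repeatedly below: when $d=2$, a pair is Type-A-confusable if and only if it is Type-B-confusable with $m=1$ --- Type-A forces $|\bc|=d=2$, so $\bc=\alpha\beta$ with $\alpha\ne\beta$ and $\barc=\beta\alpha$, which is precisely the $m=1$ instance of (B2), and conversely $\{\alpha\beta,\beta\alpha\}$ is an instance of (A2).

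For $d=1$, equation~\eqref{eq:char-sub} gives $|\BS_q\cap\BS_q|=q$ and Proposition~\ref{char:deletion} gives $|\BD_q\cap\BD_q|=1$ and $|\BI_q\cap\BI_q|=2$, so the total is $q+3$. For $d=2$, \eqref{eq:char-sub} gives $|\BS_q\cap\BS_q|=2$ while Proposition~\ref{char:sd} gives $|\BSD_q\cap\BSD_q|,|\BSI_q\cap\BSI_q|\in\{2,3,4\}$, hence $|\BD_q\cap\BD_q|,|\BI_q\cap\BI_q|\in\{0,1,2\}$ and the total lies in $\{2,3,4,5,6\}$; the odd totals are excluded because $|\BD_q\cap\BD_q|=2$ forces Type-A-confusability (Proposition~\ref{char:deletion}), which in turn forces $|\BI_q\cap\BI_q|=2$, and symmetrically, so the two terms are simultaneously $2$ or simultaneously $\le1$. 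This leaves $\{2,4,6\}$, and: the total is $6$ iff all three summands are maximal iff $|\BD_q\cap\BD_q|=2$ iff (by the observation above) $\bx,\by$ are Type-B-confusable with $m=1$; the total is $4$ iff $|\BD_q\cap\BD_q|=1$, and by the $d=2$ case of Proposition~\ref{char:sd} this happens exactly when $\bx,\by$ are Type-B-confusable with $m\ge2$ (in which case $|\BI_q\cap\BI_q|=1$ as well, by the same proposition applied to $\BSI_q$); otherwise the total is $2$.

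For $d\ge3$, \eqref{eq:char-sub} gives $|\BS_q\cap\BS_q|=0$, and Proposition~\ref{char:deletion} gives $|\BD_q\cap\BD_q|,|\BI_q\cap\BI_q|\le2$, with either term equal to $2$ iff $\bx,\by$ are Type-A-confusable. Hence the total is at most $4$, and equals $4$ iff both terms equal $2$ iff $\bx,\by$ are Type-A-confusable; the value $3$ is impossible by the same forcing argument as above (and \eqref{eq:id} in fact also rules out $1$, although we do not need this), so the total lies in $\{0,1,2,4\}$.

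Assembling the three cases, the intersection size is $q+3$ when $d=1$, at most $6$ when $d=2$, and at most $4$ when $d\ge3$, so $|\Bedit_q(\bx)\cap\Bedit_q(\by)|\le\max\{q+3,6\}$; this bound is attained (for $q\ge3$ by any pair at Hamming distance $1$, and for $q=2$ by any Type-B-confusable pair with $m=1$, which exist for $n\ge2$), so $\nu(\Sigma_q^n;\Bedit_q)=\max\{q+3,6\}$, and consequently $\Sigma_q^n$ is itself an $(n,N;\Bedit_q)$-reconstruction code whenever $N\ge\max\{q+4,7\}$, i.e.\ $\rho(n,N;\Bedit_q)=0$ in that range. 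I expect no real obstacle here: the length-based disjointness makes the decomposition immediate, and Propositions~\ref{char:deletion} and~\ref{char:sd} do essentially all of the work; the only genuinely new ingredients are the $d=2$ equivalence of the two confusability notions and the short parity/forcing argument that rules out the odd intersection sizes.
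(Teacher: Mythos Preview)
Your proof is correct and follows essentially the same route as the paper: decompose $\Bedit_q$ into its three length-separated components and case on the Hamming distance, invoking \eqref{eq:char-sub} and Propositions~\ref{char:deletion} and~\ref{char:sd} for the individual summands. One small presentational wrinkle: at $d=2$, the clause ``simultaneously $2$ or simultaneously $\le1$'' by itself only rules out total $5$, not total $3$ (the pairs $(|\BD_q\cap\BD_q|,|\BI_q\cap\BI_q|)\in\{(1,0),(0,1)\}$ survive that clause); the exclusion of $3$ is only completed once you invoke Proposition~\ref{char:sd} for both $\BSD_q$ and $\BSI_q$ to get $|\BD_q\cap\BD_q|=1\Leftrightarrow|\BI_q\cap\BI_q|=1$, which you do a line later---so all the ingredients are present, just stated slightly out of order.
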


\begin{proof}
For all $\bx$, $\by$, we have that
\[ |\Bedit_q(\bx)\cap\Bedit_q(\by)| =|\BS_q(\bx)\cap \BS_q(\by)|+|\BD_q(\bx)\cap \BD_q(\by)|+|\BI_q(\bx)\cap \BI_q(\by)|. \]

As before, we consider the following cases according to the Hamming distance of $\bx$ and $\by$ and 
the proof proceeds in a similar manner as that of Proposition~\ref{char:sd}.

When $\dH(\bx,\by)=1$, we have that $|\Bedit_q(\bx)\cap \Bedit_q(\by)|=q+3$ from \eqref{eq:tsub} and Proposition~\ref{char:deletion}.

When $\dH(\bx,\by)=2$, we have that $|\BS_q(\bx)\cap \BS_q(\by)|=2$ from \eqref{eq:tsub}.
Proceeding as before, we have the following subcases.
\begin{itemize}
	\item $|\BD_q(\bx)\cap \BD_q(\by)|=|\BI_q(\bx)\cap \BI_q(\by)|=2$ if and only if 
	$\bx$ and $\by$ satisfy conditions (B1) and (B2) with $m= 1$.
	In other words,  $|\Bedit_q(\bx)\cap \Bedit_q(\by)|=6$. 
	
	\item $|\BD_q(\bx)\cap \BD_q(\by)|=|\BI_q(\bx)\cap \BI_q(\by)|=1$ if and only if 
	$\bx$ and $\by$ satisfy conditions (B1) and (B2) with $m\ge 2$.
	In other words,  $|\Bedit_q(\bx)\cap \Bedit_q(\by)|=4$.
	
	\item $|\BD_q(\bx)\cap \BD_q(\by)|=|\BI_q(\bx)\cap \BI_q(\by)|=0$ if and only if 
	$\bx$ and $\by$ are not Type-B-confusable. 
	In other words,  $|\Bedit_q(\bx)\cap \Bedit_q(\by)|=2$.
\end{itemize}

Finally, when $\dH(\bx,\by)\ge 3$, we have that $|\Bedit_q(\bx)\cap \Bedit_q(\by)|=|\BD_q(\bx)\cap \BD_q(\by)|+|\BD_q(\bx)\cap \BD_q(\by)|$. Hence, applying Proposition~\ref{char:deletion}, we have that $|\Bedit_q(\bx)\cap \Bedit_q(\by)|\le 4$.
Furthermore, if $|\Bedit_q(\bx)\cap\Bedit_q(\by)|\ge 3$, it follows that either $|\BD_q(\bx)\cap\BD_q(\by)|=2$ or $|\BI_q(\bx)\cap\BI_q(\by)|=2$. Proposition~\ref{char:deletion} then implies that $\bx$ and $\by$ are Type-A-confusable
and thus, $|\Bedit_q(\bx)\cap\Bedit_q(\by)|=4$.
\end{proof}

\section{Reconstruction Codes with $o(\log n)$ Redundancy}
\label{sec:construction}

Trivially, an $(n,N;B)$-reconstruction code is also an $(n,N';B)$-reconstruction code for $N'\ge N$.
Hence, it follows from Theorem~\ref{thm:ecc} and Corollary~\ref{cor:ecc}, 
that there exists an $(n,N;B)$-reconstruction code with $\log_q n+O(\log_q\log_q n)$ redundant symbols for all $B\in\left\{\BD_q,\BI_q,\BID_q,\BSD_q, \BSI_q, \Bedit_q\right\}$ and $N\ge 1$.
In this section, we provide reconstruction codes with redundancy $o(\log n)$ {when $N>1$}.

To do so, we recall a recent construction of an $(n, 2; \BD_2)$-reconstruction code provided by Chee \etal \cite{chee2018coding} 
in the context of racetrack memories. Crucial to this construction is the notion of period.

\begin{definition}
Let $\ell$ and $t$ be two positive integers where $\ell< t$. 
Then the word $\bu=u_1u_2\cdots u_t\in\Sigma_q^t$ is said to have {\em period $\ell$} or {\em $\ell$-periodic} 
if $u_i = u_{i+\ell}$ for all $1\le i \le t-\ell$.
We use $\R_q(n, \ell, t)$ to denote the set of all $q$-ary words $\bc$ of length $n$ 
such that the length of any $\ell'$-periodic ($\ell'\le \ell$) subword of $\bc$ is at most $t$.
\end{definition}

In \cite{chee2018coding}, it was shown that $\R_2(n,\ell,t)\ge 2^{n-1}$.
Following the proof, we extend the result for larger alphabet size.

\begin{proposition}[{Extended Result from \cite{chee2018coding}}]\label{prop:periodic-rll}
For $\ell\in\{1,2\}$, if $t\ge \ceil{\log n}+\ell$, we have that the size of $\R_q(n,\ell,t)$ is at least $q^{n-1}$.
\end{proposition}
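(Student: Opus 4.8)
The plan is a first-moment (union-bound) count of the words that violate the periodicity constraint, mirroring the binary argument of \cite{chee2018coding}. First I would reformulate membership: a word $\bc\in\Sigma_q^n$ fails to lie in $\R_q(n,\ell,t)$ exactly when, for some $\ell'\le\ell$, it contains a contiguous $\ell'$-periodic subword of length $t+1$ (a longer such subword has a length-$(t+1)$ prefix that is still $\ell'$-periodic, so it suffices to look at windows of this exact length). For $\ell\in\{1,2\}$ this family of constraints collapses to a single one: a constant block is in particular $2$-periodic, so for $\ell=2$ it is enough to forbid every length-$(t+1)$ window $c_ic_{i+1}\cdots c_{i+t}$ with $c_j=c_{j+2}$ for all admissible $j$, and for $\ell=1$ every window with $c_i=c_{i+1}=\cdots=c_{i+t}$. (If $t\ge n$ there are no such windows and $\R_q(n,\ell,t)=\Sigma_q^n$, so assume $t<n$.)

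Next I would count, for each start position $i$ with $1\le i\le n-t$, the set $\B_i$ of words whose length-$(t+1)$ window beginning at $i$ is $\ell$-periodic in this sense. Such a window is determined by its first $\ell$ symbols, while the remaining $n-(t+1)$ coordinates of $\bc$ are unconstrained, so $|\B_i|=q^{\ell}\cdot q^{\,n-t-1}=q^{\,n-t-1+\ell}$. As the complement of $\R_q(n,\ell,t)$ in $\Sigma_q^n$ equals $\bigcup_{i=1}^{n-t}\B_i$, it has size at most $n\,q^{\,n-t-1+\ell}$.

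Finally I would check that this does not exceed $q^n-q^{n-1}=(q-1)q^{n-1}$; after dividing by $q^{n-1}$ this is the inequality $n\le(q-1)\,q^{\,t-\ell}$. The hypothesis $t\ge\ceil{\log n}+\ell$ gives $t-\ell\ge\ceil{\log n}$, hence $q^{\,t-\ell}\ge2^{\ceil{\log n}}\ge n$ since $q\ge2$; as $q-1\ge1$ too, we get $(q-1)q^{\,t-\ell}\ge n$, and therefore $|\R_q(n,\ell,t)|\ge q^n-(q-1)q^{n-1}=q^{n-1}$.

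I expect the only delicate step to be the reduction in the first paragraph --- verifying that for $\ell\le2$ the whole collection of constraints (no $\ell'$-periodic block of length exceeding $t$, for every $\ell'\le\ell$) is equivalent to forbidding one clean family of length-$(t+1)$ windows. This hinges on the fact that period $1$ implies period $2$, which is precisely why the statement is limited to $\ell\in\{1,2\}$; for $\ell\ge3$ one would have to track constraints for several periods simultaneously. The remaining steps are routine, and setting $q=2$ recovers $|\R_2(n,\ell,t)|\ge2^{n-1}$ from \cite{chee2018coding}.
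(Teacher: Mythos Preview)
Your proof is correct and follows exactly the approach the paper intends: the paper gives no explicit argument but simply refers to the union-bound count of \cite{chee2018coding}, which is precisely the first-moment computation you carry out and extend to general $q$. One cosmetic point: in the paper's applications (e.g.\ Theorems~\ref{thm:del} and~\ref{thm:sd}) the logarithm is base $q$, so your last step is cleaner as $q^{\,t-\ell}\ge q^{\lceil\log_q n\rceil}\ge n$ rather than passing through $2^{\lceil\log n\rceil}$, which can fall below $n$ when $q>2$.
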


Next, we define a $q$-ary syndrome similar to that of {the} VT-syndrome for binary alphabet.
%{\color{red}We explain the choice of this syndrome at the end of the section??}.

\begin{definition}
	{Let $\bx=x_1x_2\cdots x_n\in\Sigma_q^n$. An} \emph{inversion} in $\bx$ is a pair of indices $(i,j)$ with $1\le i<j\le n$ and $x_i>x_j$.
	We use $\inv(\bx)$ to denote the total number of inversions in $\bx$. Formally,
	\begin{equation}
	\inv(\bx)\triangleq |\{(i,j): 1\le i<j\le n, \, x_i > x_j \}|.
	\end{equation}
\end{definition}

We are now ready to present our construction of an $(n, 2; \BD_q)$-reconstruction code.
Our construction extends the binary reconstruction code in \cite{chee2018coding} to the nonbinary case.
Here, we demonstrate its correctness for completeness and also because the key ideas are crucial to the constructions in Theorems~\ref{thm:sd} and~\ref{thm:even}.
{Furthermore, in the binary case, we improve the construction from \cite{chee2018coding} and reduce the redundancy by approximately one bit.}

%We are now ready to present our following construction of an $(n, 2; \BD_q)$-reconstruction code from \cite{chee2018coding}.
%Here, we demonstrate its correctness for completeness and also because the key ideas are crucial to the constructions in 
%{\color{red} Theorems~\ref{thm:sd} and~\ref{thm:even}}.
%{Furthermore, in the binary case, we improve the construction from \cite{chee2018coding} and reduce the redundancy by approximately one bit.}

\begin{theorem}[Single Deletion, $N=2$ \cite{chee2018coding}]\label{thm:del}
%For $n>0$, set $P\triangleq \ceil{\log n} +  2$. Let $c\in \bbZ_{1+P/2}$ and $d\in \bbZ_2$.
For $n , P >0$ with $P$ even, let $c\in \bbZ_{1+P/2}$ and $d\in \bbZ_2$.
Define $\CD(n;c,d)$ to be the set of all words $\bx=x_1x_2\cdots x_n \in \Sigma_q^n$ 
such that the following holds. 
\begin{enumerate}[(i)]
\item $\inv(\bx) = c \pmod{1+P/2}$.
\item $\sum_{i=1}^n x_i = d\pmod{q}$. 
\item $\bx$ belongs to $\R_q(n,2,P)$.
\end{enumerate}
Then $\CD(n;c,d)$ is an $(n, 2; \BD_q)$-reconstruction code.
Furthermore, if we set $P=\ceil{\log_q n}+2$, the code $\CD(n;c,d)$ has redundancy 
$2+\log_q(1+P/2)=\log_q \log_q n+O(1)$ 
 for some choice of $c$ and $d$.
\end{theorem}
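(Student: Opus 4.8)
The plan is to show that any two distinct codewords $\bx, \by \in \CD(n;c,d)$ satisfy $|\BD_q(\bx) \cap \BD_q(\by)| < 2$, i.e. $\nu(\CD;\BD_q) \le 1$, which gives an $(n,2;\BD_q)$-reconstruction code; then the redundancy computation follows from Proposition~\ref{prop:periodic-rll}. So the core is: no two codewords are ``2-confusable'' in the deletion sense.

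First I would split by Hamming distance. If $\dH(\bx,\by)=1$, Proposition~\ref{char:deletion} already gives $|\BD_q(\bx)\cap\BD_q(\by)|=1<2$, so there is nothing to do --- condition (ii), the sum constraint, is not even needed here (though it is needed to ensure $\dH(\bx,\by)=1$ does not cause a problem; actually a single substitution changes the sum mod $q$, so condition (ii) forbids $\dH(\bx,\by)=1$ entirely, which only helps). If $\dH(\bx,\by)\ge 2$, then by Proposition~\ref{char:deletion} we have $|\BD_q(\bx)\cap\BD_q(\by)|=2$ if and only if $\bx$ and $\by$ are Type-A-confusable, so it suffices to show that conditions (i) and (iii) together rule out Type-A-confusability.

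So assume for contradiction that $\bx=\ba\bc\bb$ and $\by=\ba\barc\bb$ with $\bc$ of the form $(\alpha\beta)^m$ or $(\alpha\beta)^m\beta$ for some $m\ge1$, $\alpha\ne\beta$. The key observation is that $\bc$ is a $2$-periodic word, and $\barc$ (its symbolwise complement in the relevant $\{\alpha,\beta\}$ sense) is also $2$-periodic of the same length $|\bc|\in\{2m,2m+1\}$. Since $\bx\in\R_q(n,2,P)$, the $2$-periodic subword $\bc$ has length at most $P$, hence $|\bc|\le P$, so $m\le P/2$. Next I would compute how $\inv$ changes when we replace $\bc$ by $\barc$ inside a fixed context $\ba,\bb$: inversions entirely within $\ba$, within $\bb$, or between $\ba$ and $\bb$ are unchanged; inversions between $\ba$ and the middle block, and between the middle block and $\bb$, need care but turn out to cancel appropriately because $\bc$ and $\barc$ use the same two symbols $\{\alpha,\beta\}$ with the same multiset of counts up to a swap --- the main contribution is the change in the number of inversions \emph{internal} to the middle block. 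A direct count shows $\inv(\bx)-\inv(\by) = \pm(\text{number of }(\alpha,\beta)\text{ descents in }\bc - \text{number in }\barc)$, which for these alternating patterns evaluates to something of absolute value at most $m\le P/2$ and, crucially, nonzero (an alternating word and its complement have different inversion counts among the two symbols involved, the difference being exactly the number of adjacent unequal pairs, up to sign, which is $2m-1$ or similar --- in any case a nonzero integer of magnitude at most $P/2$, strictly less than $1+P/2$). I expect the main obstacle to be this last bookkeeping: carefully verifying that the inversion difference is a nonzero integer whose absolute value is strictly less than $1+P/2$, so that condition (i), $\inv(\bx)\equiv\inv(\by)\pmod{1+P/2}$, is violated --- contradiction. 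This forces $\bx$ and $\by$ not to be Type-A-confusable, completing the reconstruction claim.

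Finally, for the redundancy: setting $P=\ceil{\log_q n}+2$, Proposition~\ref{prop:periodic-rll} (with $\ell=2$, $t=P\ge \ceil{\log n}+2$) gives $|\R_q(n,2,P)|\ge q^{n-1}$. Partitioning $\R_q(n,2,P)$ according to the pair of values $(\inv(\bx)\bmod(1+P/2),\ \sum x_i\bmod q)$, there are $(1+P/2)\cdot q$ classes, so by pigeonhole some class $\CD(n;c,d)$ has size at least $q^{n-1}/\big(q(1+P/2)\big)$, giving redundancy at most $2+\log_q(1+P/2)=2+\log_q\big(2+\ceil{\log_q n}\big)/2 = \log_q\log_q n + O(1)$. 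This also covers the improvement-by-one-bit remark over \cite{chee2018coding}, since the earlier construction used a larger modulus.
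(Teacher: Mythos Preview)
Your overall strategy matches the paper's: reduce to Type-A-confusability via Proposition~\ref{char:deletion}, then use conditions (i)--(iii) to derive a contradiction on the inversion count. The redundancy argument via pigeonhole is also the same. However, there is a real gap in how you use condition~(ii), and it affects exactly the step you flagged as the main obstacle.

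You correctly observe that condition~(ii) forbids $\dH(\bx,\by)=1$. But you do not push this further: the \emph{same} sum constraint also forbids the odd-length case $\bc=(\alpha\beta)^m\beta$. Indeed, if $\bc$ has odd length then $\bc$ and $\barc$ have different symbol multisets (one has an extra $\alpha$, the other an extra $\beta$), so $\sum_i x_i-\sum_i y_i=\pm(\alpha-\beta)\not\equiv 0\pmod q$, contradicting~(ii). This is precisely the reduction the paper makes (``since the weights of $\bx$ and $\by$ have the same parity, we assume $\bc=(\alpha\beta)^m$''). Without it, your claim that the cross-inversions between $\ba$ (or $\bb$) and the middle block ``turn out to cancel appropriately'' is false: cancellation holds only when $\bc$ and $\barc$ have \emph{identical} multisets, which is the even-length case. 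In the odd-length case the cross terms depend on $\ba$ and $\bb$ and cannot be bounded.

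Once you reduce to $\bc=(\alpha\beta)^m$, the bookkeeping is clean and your guesses ($2m-1$, ``number of adjacent unequal pairs'') are off. With $\alpha>\beta$ one has
\[
\inv(\bx)-\inv(\by)=\inv\big((\alpha\beta)^m\big)-\inv\big((\beta\alpha)^m\big)
=\sum_{k=1}^{m}(m-k+1)-\sum_{k=1}^{m}(m-k)=m,
\]
and since $\bx\in\R_q(n,2,P)$ forces $2m\le P$, we get $1\le m\le P/2$, so $m\not\equiv 0\pmod{1+P/2}$, contradicting~(i). So the missing idea is not the inversion arithmetic itself but the prior use of condition~(ii) to kill the odd-length case; after that, the computation is a one-liner.
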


\begin{proof}
We prove by contradiction.
Suppose that $\bx$ and $\by$ are two distinct words in $\CD(n;c,d)$ with $|\BD_q(\bx)\cap \BD_q(\by)|=2$.
%
%First, observe that the Hamming distance of $\bx$ and $\by$ is at least two since 
%$\sum_{i=1}^n x_i \equiv \sum_{i=1}^n y_i \equiv d \ppmod{2}$.
Then Proposition~\ref{char:deletion} states that $\bx$ and $\by$ are Type-A-confusable.
In other words, there exist substrings $\ba$, $\bb$, $\bc$ such that $\bx=\ba\bc\bb$, $\by=\ba\barc\bb$
and $\bc$ has period exactly two.

Note that since the weights of $\bx$ and $\by$ have the same parity, 
we assume without loss of generality that  $\bc=(\alpha\beta)^m$ for some $m\ge 1$ and $\alpha>\beta$.
Then by construction, 
\begin{equation}\label{eq:del-proof}
%\left(\sum_{i=1}^n ix_i \right)-\left(\sum_{i=1}^n iy_i \right)= 0 \pmod{1+P/2}.
\inv(\bx)-\inv(\by)= 0 \pmod{1+P/2}.
\end{equation}
On the other hand, since $\bx=\ba\bc\bb$ and $\by=\ba\barc\bb$,
the left-hand side of \eqref{eq:del-proof} evaluates to
\begin{align*}
\inv(\bx)-\inv(\by) &= \inv((\alpha\beta)^m) - \inv((\beta\alpha)^m)\\
& = [m+(m-1)+\cdots +1]-[(m-1)+\cdots +1] = m.
\end{align*}
However, since $\bc$ is a subword of $\bx$ with period two, we have that $2m\le P$,
and so, $m\ne 0\pmod{1+P/2}$, arriving at a contradiction.

Finally, we derive the upper bound on the redundancy.
When $P=\ceil{\log_q n}+2$, we have from Proposition~\ref{prop:periodic-rll} that the size of $\R_q(n,2,P)$ is at least $q^{n-1}$.
Since we have $q(1+P/2)$ choices for the syndromes $(c,d)$, by pigeonhole principle, there is a choice of $c$ and $d$ such that $\CD(n;c,d)$ has size at least $q^{n-2}/(1+P/2)$. The upper bound on the redundancy then follows.
\end{proof}

In a similar manner, we show that the code $\CD(n;c,d)$ is capable of reconstructing codewords from noisy reads affected by single insertions or deletions.

\begin{corollary}[Single Insertion/Deletion, $N\in\{3,4\}$]\label{cor:id}
Let $\CD(n;c,d)$ be as defined in Theorem~\ref{thm:del}. 
Then $\CD(n;c,d)$ is an $(n, N; \BID_q)$-reconstruction code for $N\in\{3,4\}$.
\end{corollary}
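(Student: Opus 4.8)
The plan is to reduce the claim for $\BID_q$ to what we have already established for $\BD_q$ in Theorem~\ref{thm:del}, using Proposition~\ref{char:id} as the bridge. Recall from Proposition~\ref{char:id} that for distinct $\bx,\by\in\Sigma_q^n$ we have $|\BID_q(\bx)\cap\BID_q(\by)|\in\{0,2,3,4\}$, and that the value $4$ is attained precisely when $\bx$ and $\by$ are Type-A-confusable. Hence, to show that $\CD(n;c,d)$ is an $(n,4;\BID_q)$-reconstruction code it suffices to show that no two distinct codewords of $\CD(n;c,d)$ are Type-A-confusable; and to get the stronger $N=3$ statement it suffices to additionally rule out $|\BID_q(\bx)\cap\BID_q(\by)|=3$ for distinct codewords, i.e.\ to show the intersection is always at most $2$.

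For the first point I would simply reuse the argument in the proof of Theorem~\ref{thm:del} verbatim: if $\bx,\by\in\CD(n;c,d)$ are distinct and Type-A-confusable, then $\bx=\ba\bc\bb$, $\by=\ba\barc\bb$ with $\bc$ having period exactly two; since condition~(ii) forces $\bx$ and $\by$ to have weights of the same parity we may take $\bc=(\alpha\beta)^m$ with $\alpha>\beta$, and then $\inv(\bx)-\inv(\by)=m$, while condition~(i) gives $m\equiv 0\pmod{1+P/2}$ and condition~(iii) gives $2m\le P$, a contradiction. This already yields that $|\BID_q(\bx)\cap\BID_q(\by)|\le 3$ for all distinct codewords, hence $\CD(n;c,d)$ is an $(n,4;\BID_q)$-reconstruction code.

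It remains to upgrade this to $N=3$, i.e.\ to exclude intersection size exactly $3$. By Theorem~\ref{thm:deletion-coverage}, $|\BD_q(\bx)\cap\BD_q(\by)|\le 2$ and $|\BI_q(\bx)\cap\BI_q(\by)|\le 2$, and by \eqref{eq:id} these two quantities vanish simultaneously; so $|\BID_q(\bx)\cap\BID_q(\by)|=3$ would require, say, $|\BD_q(\bx)\cap\BD_q(\by)|=2$ and $|\BI_q(\bx)\cap\BI_q(\by)|=1$ (or the symmetric case). But $|\BD_q(\bx)\cap\BD_q(\by)|=2$ already means $\bx,\by$ are Type-A-confusable by Proposition~\ref{char:deletion}, which we have just excluded for codewords of $\CD(n;c,d)$. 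The only remaining way to reach total $3$ is therefore $|\BD_q(\bx)\cap\BD_q(\by)|=1$ and $|\BI_q(\bx)\cap\BI_q(\by)|=2$; here I would invoke Proposition~\ref{char:deletion} once more — $|\BI_q(\bx)\cap\BI_q(\by)|=2$ forces Type-A-confusability as well (when $\dH(\bx,\by)\ge 2$) or the special Hamming-distance-one case, and in the latter case one checks directly from $\BD(\ba\alpha\bb)\cap\BD(\ba\beta\bb)=\{\ba\bb\}$ and the explicit insertion intersection that the total is only $1+2=3$, so this subcase must also be handled by the parity/inversion obstruction: since $\dH(\bx,\by)=1$ with differing symbol $\alpha\ne\beta$ makes the weights differ in parity only if $q=2$, I would split on whether $q=2$.

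The main obstacle is precisely this last subcase, the Hamming-distance-one pair $\bx=\ba\alpha\bb$, $\by=\ba\beta\bb$: Proposition~\ref{char:deletion} tells us $|\BD(\bx)\cap\BD(\by)|=1$ and $|\BI(\bx)\cap\BI(\by)|=2$, giving $|\BID_q(\bx)\cap\BID_q(\by)|=3$ unconditionally, so the Type-A argument does not apply and we must show $\CD(n;c,d)$ contains no two codewords at Hamming distance $1$. But that is immediate from condition~(ii): if $\bx$ and $\by$ differ in exactly one coordinate then $\sum_i x_i\not\equiv\sum_i y_i\pmod q$, contradicting both lying in $\CD(n;c,d)$. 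So condition~(ii) does double duty — it both pins down the parity of $\bc$ in the Type-A case and outright forbids Hamming-distance-one pairs — and with that observation the $N=3$ claim follows.
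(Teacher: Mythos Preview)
Your argument is correct and follows essentially the same route as the paper: rule out Type-A-confusability among codewords (either by quoting Theorem~\ref{thm:del} directly or by rerunning its $\inv$-computation), and separately rule out Hamming-distance-one pairs via condition~(ii). The paper compresses this into two lines by first observing $\dH(\bx,\by)\ge 2$ from condition~(ii), so that whenever $|\BID_q(\bx)\cap\BID_q(\by)|\ge 3$ one of $|\BD_q|,|\BI_q|$ equals~$2$ and Proposition~\ref{char:deletion} forces Type-A-confusability, contradicting Theorem~\ref{thm:del}; your detour about splitting on $q=2$ is unnecessary since condition~(ii) is a congruence modulo~$q$, not modulo~$2$, and kills the $\dH=1$ case uniformly.
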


\begin{proof}
If two distinct words $\bx$ and $\by$ have $|\BID_q(\bx)\cap \BID_q(\by)|\ge N\ge 3$,
then $|\BD_q(\bx)\cap \BD_q(\by)|=2$ or $|\BI_q(\bx)\cap \BI_q(\by)|=2$. 
Suppose that  $|\BI_q(\bx)\cap \BI_q(\by)|=2$.
Since the Hamming distance of $\bx$ and $\by$ is at least two, 
Proposition~\ref{char:id} states that $\bx$ and $\by$ are Type-A-confusable, and 
hence, $|\BD_q(\bx)\cap \BD_q(\by)|=2$, contradicting Theorem~\ref{thm:del}. 
Thus, $\CD(n;c,d)$ is an $(n, N; \BID_q)$-reconstruction code for $N\in\{3,4\}$.
%Observe from Proposition~\ref{char:id}, 
%if two distinct words $\bx$ and $\by$ have $|\BID(\bx)\cap \BID(\by)|\ge 3$,
%then $|\BID(\bx)\cap \BID(\by)|= 4$. Hence, it suffices to show for the case $N=4$.
%
%As with Theorem~\ref{thm:sd}, we prove by contradiction and assume that 
%there are two distinct words in $\CD(n;c,d)$ with $|\BID(\bx)\cap \BID(\by)|=4$.
%This then implies that $|\BD(\bx)\cap \BD(\by)|=2$, contradicting Theorem~\ref{thm:sd}.
\end{proof}

When $B\in\{\BSD_q,\BSI_q\}$, we make suitable modifications to the code $\CD(n;c,d)$ to correct (possibly) a single substitution.

\begin{theorem}[Single Substitution/Deletion, $N=3$]\label{thm:sd}
%For $n>0$, set $P\triangleq \log n +  1$?. Let $c\in \bbZ_{1+P}$ and $d\in \bbZ_2$.
For $n , P >0$, let $c\in \bbZ_{1+P}$ and $d\in \bbZ_2$.
Define $\CSD(n;c,d)$ to be the set of all words $\bx=x_1x_2\cdots x_n \in \Sigma_q^n$ 
such that the following holds. 
\begin{enumerate}[(i)]
\item $\inv(\bx) = c \pmod{1+P}$.
\item $\sum_{i=1}^n x_i = d\pmod{q}$. 
\item $\bx$ belongs to $\R_q(n,1,P)$.
\end{enumerate}
Then $\CSD(n;c,d)$ is an $(n, 3; B)$-reconstruction code for $B\in\{\BSD_q,\BSI_q\}$.
Furthermore, if we set $P=\ceil{\log_q n}+1$, the code $\CSD(n;c,d)$ has redundancy $2+\log_q(1+P)=\log_q \log_q n+O(1)$ 
 for some choice of $c$ and $d$.
%Furthermore, $\CSD(n;c,d)$ has redundancy $\log \log n$ (?)  some choice of $c$ and $d$.
\end{theorem}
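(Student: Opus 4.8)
The plan is to mimic the proof of Theorem~\ref{thm:del}, but now accounting for the possibility of a single substitution in addition to (or instead of) a deletion or insertion. I will prove the statement for $B=\BSD_q$; the case $B=\BSI_q$ follows by an entirely analogous argument, invoking the $\BSI_q$ part of Proposition~\ref{char:sd} in place of the $\BSD_q$ part. Suppose, for contradiction, that $\bx$ and $\by$ are distinct codewords of $\CSD(n;c,d)$ with $|\BSD_q(\bx)\cap \BSD_q(\by)|\ge 3$. By Proposition~\ref{char:sd}, since both codewords satisfy condition (ii), their weights have the same parity modulo $q$, so their Hamming distance cannot be $1$; hence $\dH(\bx,\by)\ge 2$. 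Proposition~\ref{char:sd} then forces one of two situations: either $\dH(\bx,\by)=2$ and $\bx,\by$ are Type-B-confusable, or $\dH(\bx,\by)\ge 3$ and $\bx,\by$ are Type-A-confusable (the case $|B(\bx)\cap B(\by)|=2$ with $\dH\ge 3$ requires Type-A-confusability, but that only gives intersection size $2<3$, so actually only the $\dH=2$, Type-B case and possibly nothing else can arise — I will double-check that $\dH\ge 3$ with Type-A-confusability gives exactly $2$, hence is excluded).

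So the remaining case is $\dH(\bx,\by)=2$ and $\bx,\by$ Type-B-confusable: there are subwords $\ba,\bb$ and a pair $\{\bc,\bc'\}=\{\alpha\beta^m,\beta^m\alpha\}$ with $\alpha\ne\beta$ and $m\ge 1$, such that $\bx=\ba\bc\bb$, $\by=\ba\bc'\bb$. Since $\bc$ is a subword of $\bx$ and is $1$-periodic except for one symbol — more precisely, $\bc$ contains the $1$-periodic run $\beta^m$ — condition (iii), namely $\bx\in\R_q(n,1,P)$, forces $m\le P$ (a run of $m$ identical symbols is $1$-periodic of length $m$, and the same bound applies to $\by$). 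Next, as in Theorem~\ref{thm:del}, I compute $\inv(\bx)-\inv(\by)=\inv(\bc)-\inv(\bc')$, since $\ba$ and $\bb$ are common prefixes/suffixes and contribute identically to the inversion count relative to the block (the cross-terms between $\ba$ or $\bb$ and the middle block depend only on the multiset of symbols in the middle block, which is the same for $\bc$ and $\bc'$). Now $\inv(\alpha\beta^m)-\inv(\beta^m\alpha)=\pm m$ depending on whether $\alpha>\beta$ or $\alpha<\beta$: if $\alpha>\beta$ then $\alpha\beta^m$ has $m$ inversions from the leading $\alpha$ and $\beta^m\alpha$ has none, giving difference $+m$; if $\alpha<\beta$ the difference is $-m$. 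In either case $|\inv(\bx)-\inv(\by)|=m$ with $1\le m\le P$, so $\inv(\bx)-\inv(\by)\not\equiv 0\pmod{1+P}$, contradicting condition (i). This establishes that $\CSD(n;c,d)$ is an $(n,3;\BSD_q)$-reconstruction code.

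For the redundancy bound, set $P=\ceil{\log_q n}+1$. By Proposition~\ref{prop:periodic-rll} (with $\ell=1$, so we need $P\ge\ceil{\log_q n}+1$, which holds with equality), the set $\R_q(n,1,P)$ has size at least $q^{n-1}$. The number of possible syndrome pairs $(c,d)\in\bbZ_{1+P}\times\bbZ_q$ is $q(1+P)$, so by the pigeonhole principle some choice of $(c,d)$ yields $|\CSD(n;c,d)|\ge q^{n-1}/\bigl(q(1+P)\bigr)=q^{n-2}/(1+P)$. Hence the redundancy is at most $2+\log_q(1+P)=2+\log_q(\ceil{\log_q n}+2)=\log_q\log_q n+O(1)$. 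The main obstacle is the bookkeeping in the $\dH(\bx,\by)=2$ analysis: one must verify carefully that conditions (i) and (iii) together exclude every Type-B-confusable pair (for both $m=1$ and $m\ge 2$, corresponding to intersection sizes $4$ and $3$ respectively), and that the weight constraint (ii) — not the inversion constraint — is what kills the $\dH=1$ case; the inversion argument itself is a routine extension of the one in Theorem~\ref{thm:del}, with the modulus enlarged from $1+P/2$ to $1+P$ precisely because the relevant run length is now bounded by $P$ rather than $P/2$.
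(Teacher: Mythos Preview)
Your proposal is correct and follows essentially the same approach as the paper's proof: rule out $\dH(\bx,\by)=1$ via the sum constraint (ii), invoke Proposition~\ref{char:sd} to reduce to the Type-B-confusable case at $\dH=2$, then use the inversion count and the run-length bound from $\R_q(n,1,P)$ to derive a contradiction modulo $1+P$, and finish with the pigeonhole redundancy estimate. You actually supply a bit more detail than the paper does (handling both signs of $\alpha-\beta$, justifying why cross-terms in the inversion difference cancel, and explicitly noting that the $\dH\ge 3$ branch of Proposition~\ref{char:sd} gives intersection size at most $2$ and is therefore excluded).
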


\begin{proof}
We prove for the error-ball function $\BSD_q$ and prove by contradiction.
Suppose that $\bx$ and $\by$ are two distinct words in $\CSD(n;c,d)$ with $|\BSD_q(\bx)\cap \BSD_q(\by)|\ge 3$.
Since $\bx$ and $\by$ have the same parity, the Hamming distance of $\bx$ and $\by$ is at least two.
Then Proposition~\ref{char:sd} states that $\bx$ and $\by$ are Type-B-confusable.
Without loss of generality, let $\bx=\ba\alpha \beta^m\bb$, $\by=\ba\beta^m\alpha\bb$ with $\alpha>\beta$.

As before, we have 
\begin{equation}\label{eq:sd-proof}
%\left(\sum_{i=1}^n ix_i \right)-\left(\sum_{i=1}^n iy_i \right)= 0 \pmod{1+P/2}.
\inv(\bx)-\inv(\by)= 0 \pmod{1+P}.
\end{equation}
Now, the left-hand side of \eqref{eq:sd-proof} evaluates to 
\[\inv(\bx)-\inv(\by) = \inv(\alpha\beta^m) - \inv(\beta^m\alpha)= m\]
However, since $\bx$ belongs to $\R_q(n,1,P)$, we have that $m\le P$, arriving at a contradiction.

As before, we derive the upper bound on the redundancy.
When $P=\ceil{\log n}+1$, we have from Proposition~\ref{prop:periodic-rll} that the size of $\R_q(n,1,P)$ is at least $q^{n-1}$.
Since we have $q(P+1)$ choices for the syndromes $(c,d)$, by pigeonhole principle, there is a choice of $c$ and $d$ such that $\CSD(n;c,d)$ has size at least $q^{n-2}/(P+1)$. The upper bound on the redundancy then follows.
\end{proof}

To correct a single edit with three or four reads, we make a small modification to $\CSD(n;c,d)$.

\begin{corollary}[Single Edit, $N\in\{3,4\}$]\label{cor:edit}
%For $n>0$, set $P\triangleq \log n +  1$?. Let $c\in \bbZ_{1+P}$ and $d\in \bbZ_2$.
For $n , P >0$, let $c\in \bbZ_{1+P}$ and $d\in \bbZ_2$.
Define $\Cedit(n;c,d)$ to be the set of all words $\bx=x_1x_2\cdots x_n\in \Sigma_q^n$ 
such that the following holds. 
\begin{enumerate}[(i)]
\item $\inv(\bx) = c \pmod{1+P}$.
\item $\sum_{i=1}^n x_i = d\pmod{q}$. 
\item $\bx$ belongs to $\R_q(n,2,P)$.
\end{enumerate}Then $\Cedit(n;c,d)$ is an $(n, N; \Bedit_q)$-reconstruction code for $N\in\{3,4\}$.
Furthermore, if we set $P=\ceil{\log n}+2$, the code $\Cedit(n;c,d)$ has redundancy $2+\log_q(1+P)=\log_q\log_qn+O(1)$ 
for some choice of $c$ and $d$.
\end{corollary}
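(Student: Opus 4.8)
The plan is to mimic the proof of Theorem~\ref{thm:sd} essentially verbatim, using Proposition~\ref{char:edit} in place of Proposition~\ref{char:sd} to handle the new cases that arise with $\Bedit_q$. The code $\Cedit(n;c,d)$ differs from $\CSD(n;c,d)$ only in the periodicity constraint (requiring membership in $\R_q(n,2,P)$ rather than $\R_q(n,1,P)$), so the two confusability scenarios from Proposition~\ref{char:edit} that we must rule out are: (a) $\dH(\bx,\by)=2$ with $\bx,\by$ Type-B-confusable, and (b) $\dH(\bx,\by)\ge 3$ with $\bx,\by$ Type-A-confusable.

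First I would argue by contradiction: suppose $\bx\ne\by$ lie in $\Cedit(n;c,d)$ with $|\Bedit_q(\bx)\cap\Bedit_q(\by)|\ge N\ge 3$. Since constraint (ii) forces $\bx$ and $\by$ to have the same digit-sum modulo $q$, the Hamming distance $\dH(\bx,\by)$ cannot equal $1$, so Proposition~\ref{char:edit} leaves only cases (a) and (b) above, and in both cases $\bx,\by$ are Type-B-confusable or Type-A-confusable, respectively. In case (a), write $\bx=\ba\alpha\beta^m\bb$ and $\by=\ba\beta^m\alpha\bb$ (after swapping if necessary so that $\alpha>\beta$); the string $\alpha\beta^m$ is a $1$-periodic subword of $\bx$ up to its first symbol, so $\beta^m$ has length $m\le P$ by constraint (iii). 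As in Theorem~\ref{thm:sd}, $\inv(\bx)-\inv(\by)=\inv(\alpha\beta^m)-\inv(\beta^m\alpha)=m$, and since $1\le m\le P$ this is nonzero modulo $1+P$, contradicting constraint (i). In case (b), $\bx=\ba\bc\bb$ and $\by=\ba\barc\bb$ with $\bc$ being $2$-periodic of the form $(\alpha\beta)^m$ or $(\alpha\beta)^m\beta$; arguing as in Theorem~\ref{thm:del} (and again using the digit-sum constraint to pin down which of $\alpha,\beta$ is larger), $\inv(\bx)-\inv(\by)=m$, while the $2$-periodicity of $\bc$ together with constraint (iii), $\bx\in\R_q(n,2,P)$, gives $2m\le P$, hence $1\le m\le P/2<1+P$, so $m\not\equiv 0\pmod{1+P}$, again contradicting constraint (i). This rules out all possibilities, so $\Cedit(n;c,d)$ is indeed an $(n,N;\Bedit_q)$-reconstruction code for $N\in\{3,4\}$.

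For the redundancy bound, I would copy the counting argument from Theorem~\ref{thm:del}: with $P=\ceil{\log n}+2\ge\ceil{\log n}+2$, Proposition~\ref{prop:periodic-rll} (applied with $\ell=2$) gives $|\R_q(n,2,P)|\ge q^{n-1}$. There are $q(1+P)$ choices for the pair of syndromes $(c,d)$, so by pigeonhole some choice yields a code $\Cedit(n;c,d)$ of size at least $q^{n-1}/(q(1+P))=q^{n-2}/(1+P)$, giving redundancy at most $2+\log_q(1+P)=\log_q\log_q n+O(1)$.

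The only genuinely delicate point — the \textbf{main obstacle}, such as it is — is making sure that the small constant mismatch between the $\inv$-modulus ($1+P$) and the period bound does not leak: in the deletion case we get $2m\le P$, so $m\le P/2$, comfortably below the modulus $1+P$; in the substitution case we get $m\le P$, still strictly below $1+P$. Both are fine, but one must be careful that the "one-sided" periodicity (the subword $\alpha\beta^m$ has a stray leading symbol before the periodic part, similarly $(\alpha\beta)^m\beta$ in the Type-A case) is correctly accounted for when invoking constraint (iii); since $\R_q(n,\ell,P)$ bounds the length of \emph{every} $\ell'$-periodic subword with $\ell'\le\ell$, the purely periodic portion ($\beta^m$ or $(\alpha\beta)^m$) is always a legitimate witness and the bound applies. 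Everything else is a direct transcription of the two preceding proofs.
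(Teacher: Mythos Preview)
Your proposal is correct and follows essentially the same approach as the paper: rule out $\dH(\bx,\by)=1$ via the digit-sum constraint, invoke Proposition~\ref{char:edit} to reduce to the Type-A/Type-B confusability cases, and then reuse the $\inv$-computations from Theorems~\ref{thm:del} and~\ref{thm:sd} to derive a contradiction, followed by the identical pigeonhole argument for redundancy. One small wording slip: in case~(b) the digit-sum constraint is what forces $\bc$ to have \emph{even} length (so $\bc=(\alpha\beta)^m$), not what decides which of $\alpha,\beta$ is larger---the latter is just a WLOG by swapping $\bx$ and $\by$; this matters because for odd-length $\bc$ one has $\inv(\bx)-\inv(\by)=0$ and the $\inv$-argument alone would fail.
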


\begin{proof}
	Suppose that $\bx$ and $\by$ are two distinct words in $\Cedit(n;c,d)$ with $|\Bedit_q(\bx)\cap \Bedit_q(\by)|\ge N\ge 3$.
	Now, when $\dH(\bx,\by)\ge 2$, Proposition~\ref{char:edit} states that $|\Bedit_q(\bx)\cap \Bedit_q(\by)|\in\{0,1,2,4,6\}$.
	Hence, we necessarily have $|\Bedit_q(\bx)\cap \Bedit_q(\by)|\ge 4$.
	
Then we have two possibilities.
\begin{itemize}
\item $|\BSD_q(\bx)\cap \BSD_q(\by)|\ge 4$ or $|\BSI_q(\bx)\cap \BSI_q(\by)|\ge 4$. Note that since $\bx\in \R_q(n,2,P)$, we have $\bx\in \R_q(n,1,P)$.
Hence, following the proof of Theorem~\ref{thm:sd}, we obtain a contradiction.
\item $|\BD_q(\bx)\cap \BD_q(\by)|= |\BI_q(\bx)\cap \BI_q(\by)|=2$. Since $1+P/2\le 1+P$, we can follow the proof of Theorem~\ref{thm:del} to obtain a contradiction.
\end{itemize}

As before, we derive the upper bound on the redundancy.
When $P=\ceil{\log_q n}+2$, we have from Proposition~\ref{prop:periodic-rll} that the size of $\R_q(n,2,P)$ is at least $q^{n-1}$.
Since we have $q(P+1)$ choices for the syndromes $(c,d)$, by pigeonhole principle, there is a choice of $c$ and $d$ such that $\Cedit(n;c,d)$ has size at least $q^{n-2}/(P+1)$. The upper bound on the redundancy then follows.
\end{proof}

Our final code constructions are straightforward modifications of the usual single-parity-check codes.
For completeness, we define the single-parity-check and state their reconstruction capabilities.

Let the \emph{single-parity-check code} $\C_0$ be the following code
\[ \C_0 \triangleq \Bigg\{\bx=x_1x_2\cdots x_n\in \Sigma_q^n :\sum_{i=1}^{n} x_{i} = 0\pmod{q} \Bigg\}\, .\]
Then clearly, $\C_0$ has one redundant symbol.

\begin{proposition}\label{prop:spc}
	Let $\C_0$ be as defined above. 
	Then $\C_0$ is an $(n,N;\BSD_q)$-reconstruction code for $q\ge 4$ and $5\le N\le q+1$;
	an $(n,N;\BSI_q)$-reconstruction code for $q\ge 3$ and $5\le N\le q+2$; and
	an $(n,N;\Bedit_q)$-reconstruction code for $q\ge 4$ and $7\le N\le q+3$.
\end{proposition}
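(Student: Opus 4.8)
The plan is to verify, for each of the three claimed parameter regimes, that the read coverage $\nu(\C_0;B)$ is strictly less than $N$, by combining the combinatorial characterizations of Section~\ref{sec:intersection} (Propositions~\ref{char:sd} and~\ref{char:edit}) with the single observation that two distinct codewords of $\C_0$ must have Hamming distance at least two, since their coordinate sums agree modulo $q$ and a single coordinate change alters that sum. First I would fix distinct $\bx,\by\in\C_0$ and note that $\dH(\bx,\by)\ge 2$ automatically. For $B=\BSD_q$ this means Proposition~\ref{char:sd} applies only in its ``$\dH=2$'' or ``$\dH\ge 3$'' cases, so $|\BSD_q(\bx)\cap\BSD_q(\by)|\le\max\{4,2\}=4$; hence $\nu(\C_0;\BSD_q)\le 4<N$ for all $N\ge 5$, giving the $(n,N;\BSD_q)$-reconstruction property for $5\le N\le q+1$ (the upper bound $q+1$ comes from $\nu(\Sigma_q^n;\BSD_q)=\max\{q+1,4\}$, so for $N$ beyond that range one does not even need a code). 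The requirement $q\ge 4$ is exactly what makes the interval $[5,q+1]$ nonempty.

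Next I would treat $B=\BSI_q$ identically: the Hamming-distance-$\ge 2$ restriction of Proposition~\ref{char:sd} gives $|\BSI_q(\bx)\cap\BSI_q(\by)|\le\max\{4,2\}=4$, so $\nu(\C_0;\BSI_q)\le 4<N$ for $N\ge 5$, and the upper bound $N\le q+2$ matches $\nu(\Sigma_q^n;\BSI_q)=q+2$; here $q\ge 3$ suffices for $[5,q+2]$ to be nonempty. Then for $B=\Bedit_q$ I would invoke Proposition~\ref{char:edit}: with $\dH(\bx,\by)\ge 2$, the intersection lies in $\{2,4,6\}$ (the $\dH\ge 3$ case gives at most $4$), so $\nu(\C_0;\Bedit_q)\le 6<N$ for all $N\ge 7$, and the upper bound $N\le q+3$ matches $\nu(\Sigma_q^n;\Bedit_q)=\max\{q+3,6\}$; the constraint $q\ge 4$ makes $[7,q+3]$ nonempty. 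In each case the redundancy is exactly one symbol by construction, matching the claimed row of Table~\ref{code:summary}.

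I do not expect a genuine obstacle here — the whole argument is a one-line distance observation feeding into already-proved characterization lemmas — but the point that needs the most care is bookkeeping the parameter ranges: one must check that the claimed lower endpoint on $N$ (namely $5$, $5$, and $7$) is precisely one more than the worst-case intersection size produced when $\dH\ge 2$ is forced (namely $4$, $4$, and $6$), and that the claimed upper endpoint on $N$ is the full read coverage $\nu(\Sigma_q^n;B)$, beyond which the statement would be vacuous. It is also worth remarking explicitly why the ``$\dH=1$'' branches of Propositions~\ref{char:sd} and~\ref{char:edit} — which would give the larger values $q+1$, $q+2$, $q+3$ — never occur for codewords of $\C_0$, since that is the sole place the single-parity-check structure is used.
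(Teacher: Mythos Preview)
Your proposal is correct and follows exactly the paper's own argument: observe that distinct codewords of $\C_0$ have Hamming distance at least two, then invoke Propositions~\ref{char:sd} and~\ref{char:edit} to bound the intersection sizes by $4$, $4$, and $6$ respectively. Your additional bookkeeping on the parameter ranges (why the lower endpoints $5,5,7$ and the constraints on $q$ arise) is more explicit than the paper's terse ``the reconstruction properties then follow,'' but the substance is identical.
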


\begin{proof}
	In all cases, when $\bx$ and $\by$ are distinct words in $\C_0$, we have $\dH(\bx,\by)\ge 2$.
	Then applying Propositions~\ref{char:sd} and~\ref{char:edit}, we have that 
	$|\BSD_q(\bx)\cap \BSD_q(\by)|\le 4$, $|\BSI_q(\bx)\cap \BSI_q(\by)|\le 4$ and $|\Bedit_q(\bx)\cap \Bedit_q(\by)|\le 6$.
	The reconstruction properties then follow.
\end{proof}

Our next two code constructions introduce one and two symbols of redundancy, respectively.
In addition to taking the `parity' symbol of all coordinates, we take the `parity' of all {\em even} coordinates.
\begin{align*}
\C_1 & \triangleq \Bigg\{\bx=x_1x_2\cdots x_n\in \Sigma_q^n :\sum_{i=1}^{\floor{n/2}} x_{2i} = 0\pmod{q} \Bigg\}\, ,\\
\C_2 &\triangleq \Bigg\{\bx=x_1x_2\cdots x_n\in \Sigma_q^n :\sum_{i=1}^{\floor{n/2}} x_{2i} = 0\pmod{q} \text{ and } \sum_{i=1}^{n} x_{i} = 0\pmod{q} \Bigg\}\, .
\end{align*}
\noindent Clearly, $\C_1$ and $\C_2$ have one and two redundant symbols, respectively.

\pagebreak

\begin{theorem}\label{thm:even}
	Let $\C_1$ and $\C_2$ be as defined above.
	\begin{enumerate}[(i)]
		\item Then $\C_1$ is an $(n,4;\BSD_2)$-reconstruction code, and
		an $(n,6;\Bedit_2)$-reconstruction code.
		\item Then $\C_2$ is an $(n,4;\BSD_q)$-reconstruction code for $q\ge 3$; 
		an $(n,4;\BSI_q)$-reconstruction code for $q\ge 2$; 
		an $(n,5;\Bedit_2)$-reconstruction code; and 
		an $(n,N;\Bedit_q)$-reconstruction code for $q\ge 3$ and $N\in\{5,6\}$.
	\end{enumerate} 
\end{theorem}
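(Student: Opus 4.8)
The plan is to prove each reconstruction claim by contradiction, using the combinatorial characterizations from Propositions~\ref{char:sd} and~\ref{char:edit} together with the extra ``even-coordinate parity'' constraint to rule out the extremal Type-A- and Type-B-confusable configurations. Throughout, suppose $\bx\ne\by$ lie in the code under consideration and that $|B(\bx)\cap B(\by)|\ge N$ for the relevant $B$ and $N$; I will show this forces $\bx$ and $\by$ to be confusable in a way that the parity constraints forbid.

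First I would handle $\C_1$ in the binary case. Two distinct words in $\C_1$ need not have even total weight, so $\dH(\bx,\by)$ could be $1$. For $(n,4;\BSD_2)$: if $\dH(\bx,\by)=1$ then $|\BSD_2(\bx)\cap\BSD_2(\by)|=q+1=3<4$ by Proposition~\ref{char:sd}; if $\dH(\bx,\by)=2$ then $|\BSD_2(\bx)\cap\BSD_2(\by)|\le 4$ with equality iff $\bx,\by$ are Type-B-confusable with $m=1$, i.e. $\bx=\ba\alpha\beta\bb$, $\by=\ba\beta\alpha\bb$; the two differing positions are then consecutive, say indices $j$ and $j+1$, so exactly one of them is even, and the even-coordinate parities of $\bx$ and $\by$ differ by $\alpha-\beta\not\equiv0\pmod2$, contradicting membership in $\C_1$; if $\dH(\bx,\by)\ge3$ then $|\BSD_2(\bx)\cap\BSD_2(\by)|\le2<4$. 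The $(n,6;\Bedit_2)$ claim is identical with Proposition~\ref{char:edit}: the only way to reach $6$ is the $m=1$ Type-B-confusable case with adjacent differing positions, which is killed by the even-parity check; the $\dH=1$ case gives $q+3=5<6$ and $\dH\ge3$ gives at most $4$.

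Next I would do $\C_2$. Since $\C_2$ imposes the full parity $\sum x_i\equiv0\pmod q$, any two distinct codewords satisfy $\dH(\bx,\by)\ge2$. For the $N=4$ claims about $\BSD_q$ ($q\ge3$) and $\BSI_q$ ($q\ge2$): by Propositions~\ref{char:sd}, reaching $|B(\bx)\cap B(\by)|\ge4$ with $\dH\ge2$ requires $\dH(\bx,\by)=2$ and Type-B-confusable with $m=1$ (the $\dH\ge3$ case caps at $2$, and the $\dH=2$, $m\ge2$ case caps at $3$). As above, the $m=1$ case has the two differing coordinates adjacent, so exactly one is even, and the even-parity check is violated. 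For $\Bedit_q$ with $q\ge3$ and $N\in\{5,6\}$: by Proposition~\ref{char:edit}, with $\dH\ge2$ the intersection is in $\{2,4,6\}$, and $\ge5$ forces $6$, which again is exactly the $m=1$ Type-B-confusable case, excluded by the even-parity constraint. For $\Bedit_2$ with $N=5$: here $|\Bedit_2(\bx)\cap\Bedit_2(\by)|\in\{2,4,6\}$ for $\dH\ge2$, and the value $6$ is ruled out as before; the value $4$ (which is $\le N-1$) is permitted, so the code still reconstructs with $N=5$. I would present these as a short case analysis rather than repeating the argument six times, factoring out the single observation: \emph{if $\bx,\by$ are Type-B-confusable with $m=1$ then their two differing coordinates are consecutive, hence their even-indexed parities differ, contradicting $\C_1$ or $\C_2$.}

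The main obstacle — really the only subtle point — is making sure the characterization propositions are being applied in exactly the right regime and that the ``$m=1$ implies consecutive differing coordinates'' observation is airtight: one must check that when $\{\bc,\bc'\}=\{\alpha\beta,\beta\alpha\}$ sits inside $\bx=\ba\bc\bb$, $\by=\ba\bc'\bb$, the positions where $\bx$ and $\by$ disagree are precisely the two coordinates occupied by $\bc$ (they could a priori disagree in only one place if $\alpha$ or $\beta$ matched a boundary symbol, but since $\dH(\bx,\by)=2$ exactly in the relevant subcase, both coordinates genuinely differ). A secondary bookkeeping point is the binary case $\C_1$, where $\dH(\bx,\by)=1$ is possible and must be dispatched separately using $q+1=3$ and $q+3=5$; once that is noted, everything else follows from the same parity argument. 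No asymptotics or code-size estimates are needed here since $\C_1,\C_2$ trivially have one and two redundant symbols.
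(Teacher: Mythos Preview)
Your proposal is correct and follows essentially the same approach as the paper's proof: reduce to the Type-B-confusable $m=1$ configuration via Propositions~\ref{char:sd} and~\ref{char:edit}, then observe that the two adjacent differing coordinates force the even-indexed parity to change by $\pm(\alpha-\beta)\not\equiv 0\pmod q$. Your write-up is in fact more careful than the paper's terse version, since you explicitly dispatch the $\dH(\bx,\by)=1$ case for $\C_1$ (where the full parity check is absent) and spell out why only the $m=1$ case can achieve the threshold intersection size.
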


\begin{proof}
	We prove for the error-ball $\BSD_q$ and the other error-balls follow similarly.
	
	When $q=2$, suppose that $\bx$ and $\by$ are two distinct words in $\C_1$ with $|\BSD_2(\bx)\cap \BSD_2(\by)|=4$.
	Proposition~\ref{char:sd} states that $\bx$ and $\by$ are Type-B-confusable with $m=1$.
	Then $\bx=\ba\alpha\beta\bb$, $\by=\ba\beta\alpha\bb$ with $\alpha\ne \beta$.
	Then $\sum_{i=1}^{\floor{n/2}} x_{2i}- y_{2i}= \beta-\alpha \ne 0 \pmod{q}$,  a contradiction.
	
	When $q\ge 3$, suppose that $\bx$ and $\by$ are two distinct words in $\C_2$ with $|\BSD_q(\bx)\cap \BSD_q(\by)|=4$. 
	Since $\dH(\bx,\by)\ge 2$, Proposition~\ref{char:sd} again states that $\bx$ and $\by$ are Type-B-confusable with $m=1$.
	Proceeding as before, we obtain a contradiction.
\end{proof}

The next section shows that the code constructions in this section are essentially optimal.

\section{Asymptotically Exact Bounds for the Redundancy of Reconstruction Codes}
\label{sec:exact}

In this section, we fix an error-ball $B\in\{\BI_q,\BD_q,\BID_q,\BSD_q, \BSI_q, \Bedit_q\}$ and a number of reads $N$ and 
determine the asymptotic value of the optimal redundancy $\rho(n, N; B)$. 
Formally, we show that $f(n)-\epsilon(n)\le \rho(n, N; B)\le f(n)+\epsilon(n)$ for some functions $f(n)$ and $\epsilon(n)$, 
where $\epsilon(n)=\Theta(1)$. 
In some cases, we also show that the absolute value of $\epsilon(n)$ is bounded by a constant at most one.
We remark that all values of $\rho(n,N;B)$ have been determined asymptotically, except when
$N\in \{1,2\}$ and $B\in\{\BSD_q, \BSI_q, \Bedit_q\}$ and $q\ge 3$.
%In some cases, we also show that a small constant that bounds $\epsilon(n)$, 
%i.e. we find a value of $C$ such that $|\epsilon(n)|\le C$.

Now, Section~\ref{sec:construction} {provides} certain code constructions and thus, we obtain upper bounds on the redundancy.
In what follows, we prove the corresponding lower bounds and hence, demonstrating the optimality of our code constructions.
To this aim, we make use of the following theorem 
that is demonstrated in our companion paper~\cite{Chrisnata.arxiv.2020}. 
The theorem is obtained from a careful analysis on the size of certain clique covers in related graphs and
interested readers may refer to the arXiv version for details.

\begin{theorem}\label{thm:confusable-optimal}
	Let $\C\subseteq \Sigma_q^n$.
	\begin{enumerate}[(i)]
		\item If every pair of distinct words are not Type-A-confusable, then $n- \log_q |\C|\ge \log_q\log_q n - O(1)$.
		\item If every pair of distinct words are not Type-B-confusable, then $n- \log_q |\C|\ge \log_q\log_q n - O(1)$.
		\item If every pair of distinct words are not Type-B-confusable with $m=1$, then $n- \log_q |\C|\ge 1-o(1)$.
		\item Suppose $q=2$. If every pair of distinct words have Hamming distance at least two and are not Type-B-confusable with $m=1$, then $n- \log_2 |\C|\ge \log_2 3-o(1)$.
	\end{enumerate}
\end{theorem}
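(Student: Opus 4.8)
The plan is to prove each of the four lower bounds by a counting/packing argument: a code avoiding a given confusability property cannot be much larger than the whole space, because the "forbidden" configurations are so numerous that any large code must contain one. For parts (i)–(ii), the key observation is that Type-A- and Type-B-confusability are both \emph{local} phenomena governed by short periodic patterns: if $\bx = \ba\bc\bb$ where $\bc$ is a sufficiently long $2$-periodic (resp. alternating-type) block, then flipping $\bc$ to $\barc$ (resp. cyclically rotating a $\alpha\beta^m$ block) produces a word at distance $1$ or $2$ from $\bx$ which is confusable with it. So the strategy is: first, show that \emph{most} words in $\Sigma_q^n$ contain a $2$-periodic subword of length $\ge \log_q n + \omega(1)$ — this is essentially the complement of Proposition~\ref{prop:periodic-rll} run in reverse, or a direct first-moment computation: the expected number of positions starting a $2$-periodic run of length $L$ is about $n q^{-\lfloor L/2\rfloor}$, which is $\gg 1$ once $L$ is a hair below $2\log_q n$. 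Second, argue that if a word $\bx$ has such a long periodic block, it has \emph{many} (roughly proportional to the block length, hence $\Omega(\log_q n)$) distinct "partners" $\by$ that are confusable with it, all obtained by local modifications inside that block. Third, set up a graph $G$ on $\Sigma_q^n$ where confusable pairs are edges; a confusability-free code is an independent set, and we have just shown that after deleting a $o(1)$-fraction of vertices (those \emph{without} a long periodic block) every remaining vertex has degree $\ge c\log_q n$. A standard bound (Turán-type, or just iteratively removing a vertex and its neighbours) then gives $|\C| \le (1+o(1)) q^n / (c \log_q n)$, i.e. $n - \log_q|\C| \ge \log_q\log_q n - O(1)$.

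For parts (iii) and (iv) the relevant configurations are Type-B-confusable pairs \emph{with $m=1$}, i.e.\ adjacent transpositions: $\bx = \ba\alpha\beta\bb \leftrightarrow \by = \ba\beta\alpha\bb$ with $\alpha\ne\beta$. Here the count is different — such a pair exists whenever $\bx$ has \emph{any} two adjacent unequal symbols, which is almost all words and gives $\Theta(n)$ partners — but the desired bound is only a constant, so a cruder argument suffices. For (iii): partition (most of) $\Sigma_q^n$ into blocks of two coordinates $(x_{2i-1}, x_{2i})$; within each unordered pair of distinct values $\{\alpha,\beta\}$ the two orderings $\alpha\beta$ and $\beta\alpha$ are confusable, so a confusability-free code can use at most one representative per such local pair-type, which forces (a short computation with $\binom{q}{2}$ ordered-vs-unordered pairs, or just: the code, restricted by fixing all but one two-coordinate block, injects into a set missing at least one element) a loss of at least $1 - o(1)$ symbol of redundancy. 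Part (iv) is the same idea specialised to $q = 2$, but now we must also use the hypothesis $\dH(\bx,\by)\ge 2$: over a two-coordinate window the possible values are $00,01,10,11$, the pair $\{01,10\}$ is forbidden by no-$m{=}1$-confusability while $\dH\ge 2$ rules out using $00$ together with either of $01,10$ and $11$ together with either of $01,10$; optimising which of the four symbols a codeword may take in a window subject to these constraints shows the window alphabet can be effectively cut from $4$ down to $3$, yielding $n - \log_2|\C| \ge \log_2 3 - o(1)$. One must be careful to make the "windows are independent" heuristic rigorous — e.g.\ by a tensor/product-code argument bounding $|\C|$ by a product of per-window capacities, or by the entropy method ($\log_2|\C| \le \sum_i H(X_{2i-1}X_{2i}\mid \text{earlier}) \le \sum_i \log_2 3$).

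The main obstacle I expect is \textbf{making the degree lower bound in (i)–(ii) clean}: it is easy to say "a long periodic block yields many confusable partners," but one must check that (a) the partners are genuinely distinct codewords (they are, since they differ inside the block), (b) they actually satisfy the technical Type-A / Type-B conditions including the parity/form requirements in (A2)/(B2), and most importantly (c) the first-moment estimate on the number of words with a long periodic run is tight enough that the exceptional set really is a $o(1)$-fraction for a threshold $L$ within $o(1)$ of $2\log_q n$ — this is what pins the bound at $\log_q\log_q n$ and not merely $\Omega(\log_q\log_q n)$. A secondary subtlety, flagged in the excerpt itself, is that the companion paper phrases this via \emph{clique covers in related graphs}: the honest route is probably to exhibit an explicit cover of $\Sigma_q^n$ by $O(q^n/\log_q n)$ cliques (each clique being a set of mutually confusable words sharing a fixed "frame" $\ba,\bb$ around a periodic window), since a code is then an independent set and hence picks at most one vertex per clique; verifying that these cliques cover all but a $o(1)$-fraction of $\Sigma_q^n$ is exactly the first-moment computation above, now dressed up combinatorially.
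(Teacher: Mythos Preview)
The paper does not actually prove this theorem; it defers to the companion paper \cite{Chrisnata.arxiv.2020} and only records that the argument goes ``via a careful analysis on the size of certain clique covers in related graphs.'' So the benchmark to compare against is the clique-cover route.

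There is a genuine gap in your primary argument for (i)--(ii). You write that once all but a $o(1)$-fraction of vertices have degree at least $c\log_q n$ in the confusability graph, ``a standard bound (Tur\'an-type, or just iteratively removing a vertex and its neighbours)'' yields $|\C|\le (1+o(1))q^n/(c\log_q n)$. This is backwards: both Tur\'an's bound and the greedy deletion procedure give \emph{lower} bounds on the independence number, not upper bounds. High minimum degree by itself says nothing about $\alpha(G)$ from above --- witness $K_{m,m}$, which has minimum degree $m$ and independence number $m$. An upper bound on $\alpha(G)$ requires a clique cover (or a fractional clique cover / Lov\'asz~$\vartheta$), which is exactly what the companion paper does and what you relegate to a ``secondary subtlety.'' Your clique-cover sketch (cliques indexed by a frame $\ba,\bb$ around a periodic window) is the right idea and does match the paper's approach; but you still have to verify that the words in one frame are \emph{pairwise} confusable (for Type-B this is the set $\{\ba\,\beta^k\alpha\beta^{m-k}\,\bb:0\le k\le m\}$, which is indeed a clique of size $m{+}1$), and the covering count is more delicate than ``most words have a long periodic block'' --- Proposition~\ref{prop:periodic-rll} shows that a \emph{constant} fraction of words, not a $o(1)$-fraction, avoid runs of length $\lceil\log_q n\rceil{+}1$, so the naive threshold does not suffice and one needs a weighted or layered cover.

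For (iii)--(iv) your window heuristic also has a gap. Adjacent transpositions occur at \emph{every} position $i$, not only at the block boundaries $(2i{-}1,2i)$, so fixing all but one two-coordinate window does not isolate the constraint. More seriously, in (iv) the hypothesis $\dH(\bx,\by)\ge 2$ is a \emph{global} constraint on pairs of codewords, and your sentence ``$\dH\ge 2$ rules out using $00$ together with either of $01,10$'' is false at the window level: two codewords may differ in exactly one bit inside a given window provided they also differ elsewhere. The entropy/tensor argument you propose cannot be run window-by-window for this reason; the companion paper's clique-cover analysis handles (iii)--(iv) as well, and that is the route you should pursue.
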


First, we determine the optimal redundancy for the error-balls $\BD_q$ and $\BI_q$.

\begin{theorem}
	Let $B\in \{\BD_q, \BI_q \}$.
	Then 
	\[
	\rho(n,N;B) = 
	\begin{cases}
	\log_q n +\Theta(1), & \mbox{if } N=1,\\
	\log_q \log_q n +\Theta(1), & \mbox{if } N=2,\\
	0, & \mbox{if } N\ge 3.
	\end{cases}
	\]
\end{theorem}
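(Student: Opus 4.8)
The plan is to establish the three cases of the formula by combining the code constructions from Section~\ref{sec:construction} with the lower bounds from Theorem~\ref{thm:confusable-optimal}, handling first the combinatorially trivial regime and then the two interesting regimes.

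\textbf{Case $N\ge 3$.}
This is immediate from Theorem~\ref{thm:deletion-coverage}, which states that $\nu(\Sigma_q^n;B)=2$ for $B\in\{\BI_q,\BD_q\}$. Hence the entire space $\Sigma_q^n$ is already an $(n,N;B)$-reconstruction code for every $N\ge 3$, so $\rho(n,N;B)=0$.

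\textbf{Case $N=1$.}
This is the classical error-correcting-code regime. The upper bound $\rho(n,1;B)\le \log_q n + O(1)$ and the lower bound $\rho(n,1;B)\ge \log_q n - O(1)$ are exactly the content of Theorem~\ref{thm:ecc}(ii) (recalling $\rho(n,1;\BD_q)=\rho(n,1;\BI_q)$, so it suffices to treat one of them), which gives $\rho(n,1;B)=\log_q n+\Theta(1)$.

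\textbf{Case $N=2$.}
For the upper bound, I would invoke Theorem~\ref{thm:del}: for a suitable choice of $c$ and $d$, the code $\CD(n;c,d)$ is an $(n,2;\BD_q)$-reconstruction code with redundancy $\log_q\log_q n + O(1)$. Since $\BD_q(\bx)\cap\BD_q(\by)=\varnothing$ if and only if $\BI_q(\bx)\cap\BI_q(\by)=\varnothing$ (Levenshtein~\cite{Levenshtein.1966}) and both balls satisfy the size characterization of Proposition~\ref{char:deletion}, the same code --- or the analogous argument with $\BI_q$ in place of $\BD_q$ in the proof of Theorem~\ref{thm:del} --- works for $\BI_q$; the key point is that a pair of distinct words with $|\BI_q(\bx)\cap\BI_q(\by)|=2$ and Hamming distance $\ge 2$ must be Type-A-confusable, and two words of equal weight parity differing in a Type-A-confusable block have $\inv$ differing by a nonzero residue modulo $1+P/2$, giving the contradiction. (When the Hamming distance is $1$, $|\BI_q(\bx)\cap\BI_q(\by)|=2$ as well, but then $\bx$ and $\by$ have different weight parities and are excluded by condition (ii) of the construction; this is the one extra check needed for $\BI_q$ that did not arise for $\BD_q$ where $|\BD_q(\bx)\cap\BD_q(\by)|=1$ at Hamming distance $1$.) For the matching lower bound, observe that any $(n,2;B)$-reconstruction code $\C$ has $\nu(\C;B)<2$, i.e.\ $|B(\bx)\cap B(\by)|\le 1$ for all distinct $\bx,\by\in\C$; by Proposition~\ref{char:deletion} this forces no two distinct words of $\C$ to be Type-A-confusable. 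Theorem~\ref{thm:confusable-optimal}(i) then yields $n-\log_q|\C|\ge \log_q\log_q n - O(1)$, so $\rho(n,2;B)\ge \log_q\log_q n - O(1)$, completing the case.

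The main obstacle --- really the only non-bookkeeping point --- is confirming that the $N=2$ lower bound applies verbatim to $\BI_q$ as well as $\BD_q$: one must check that the ``$\le 1$ intersection'' condition rules out Type-A-confusability in the insertion case, which follows because Proposition~\ref{char:deletion} shows Type-A-confusable words have insertion-ball intersection exactly $2$, and conversely any Hamming-distance-$1$ pair (the only other way to reach intersection $2$ for $\BI_q$) contributes intersection $2$ but can be excluded directly. Everything else is an appeal to results already proved in the excerpt.
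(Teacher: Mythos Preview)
Your proposal is correct and follows essentially the same approach as the paper: both invoke Theorem~\ref{thm:ecc} for $N=1$, Theorem~\ref{thm:deletion-coverage} for $N\ge 3$, and for $N=2$ use Theorem~\ref{thm:del} for the upper bound together with Proposition~\ref{char:deletion} and Theorem~\ref{thm:confusable-optimal}(i) for the lower bound. You are actually more careful than the paper in explicitly verifying that the code $\CD(n;c,d)$ is also an $(n,2;\BI_q)$-reconstruction code (handling the Hamming-distance-$1$ case via the sum constraint), whereas the paper simply cites Theorem~\ref{thm:del} without remarking that its statement only mentions $\BD_q$.
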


\begin{proof}
	The case for $N=1$ follows from Theorem~\ref{thm:ecc} while the case for $N\ge 3$ follows Theorem~\ref{thm:deletion-coverage}.
	For $N=2$, the codes in Theorem~\ref{thm:del} have redundancy $\log_q\log_q n + O(1)$. 
	Therefore, it remains to provide the corresponding lower bound.
	
	If $\C$ is an $(n,2;B)$-reconstruction code, then we claim that every pair of distinct words $\bx$ and $\by$ are not Type-A-confusable. Suppose otherwise. Then Proposition~\ref{char:deletion} implies that $|B(\bx)\cap B(\by)|=2$, a contradiction. Hence, applying Theorem~\ref{thm:confusable-optimal}, we have that $\rho(n,2;B) \ge \log_q\log_q n - O(1)$.
\end{proof}

\begin{theorem}
	Consider the error-ball $\BID_q$. We have that
	\[
	\rho(n,N;\BID_q) = 
	\begin{cases}
	\log_q n +\Theta(1), & \mbox{if } N\in\{1,2\},\\
	\log_q \log_q n +\Theta(1), & \mbox{if } N\in\{3,4\},\\
	0, & \mbox{if } N\ge 5.
	\end{cases}
	\]
\end{theorem}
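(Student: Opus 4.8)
The plan is to mirror the structure of the preceding theorem for $\BID_q$, establishing matching upper and lower bounds case-by-case according to the value of $N$. For the upper bounds, I would invoke the constructions already in hand: when $N\in\{1,2\}$, Corollary~\ref{cor:ecc} gives an $\BID_q$ error-correcting code with $\log_q n + O(\log_q\log_q n)$ redundancy, and this is matched from below by the part of Theorem~\ref{thm:ecc}/Corollary~\ref{cor:ecc} giving $\rho(n,1;\BID_q)\ge \log_q n - O(1)$ together with the monotonicity remark that an $(n,N;B)$-code is also an $(n,N';B)$-code for $N'\ge N$ — hence the $N=2$ lower bound follows from the $N=1$ lower bound... wait, that is backwards. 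The correct direction: $\rho$ is nonincreasing in $N$, so $\rho(n,2;\BID_q)\le \rho(n,1;\BID_q)$, which gives the upper bound for free; for the lower bound at $N=2$ I need a direct argument. Actually the cleanest route for $N\in\{1,2\}$: the upper bound is the ECC construction, and the lower bound $\rho(n,2;\BID_q)\ge \log_q n - O(1)$ should come from an argument that a $2$-reconstruction code for $\BID_q$ must still be a genuine single-insertion or single-deletion correcting code up to $O(1)$, or more directly from a counting/sphere-packing bound on $|\BID_q(\bx)\cap\BID_q(\by)|$; I expect this lower bound is either cited from the companion paper~\cite{Chrisnata.arxiv.2020} or derived by noting that two codewords at Hamming distance $1$ already have large $\BID_q$-intersection, forcing minimum distance $\ge 2$ and then leveraging the deletion structure.

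For the middle case $N\in\{3,4\}$, the upper bound $\log_q\log_q n + O(1)$ is exactly Corollary~\ref{cor:id}, which says $\CD(n;c,d)$ is an $(n,N;\BID_q)$-reconstruction code for $N\in\{3,4\}$ with redundancy $\log_q\log_q n + O(1)$. For the matching lower bound I would argue: if $\C$ is an $(n,4;\BID_q)$-reconstruction code, then no two distinct codewords $\bx,\by$ can have $|\BID_q(\bx)\cap\BID_q(\by)|\ge 4$; by Proposition~\ref{char:id}, $|\BID_q(\bx)\cap\BID_q(\by)|=4$ precisely when $\bx,\by$ are Type-A-confusable, so $\C$ contains no Type-A-confusable pair. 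Then Theorem~\ref{thm:confusable-optimal}(i) gives $n - \log_q|\C| \ge \log_q\log_q n - O(1)$. The same bound holds a fortiori for $N=3$ by monotonicity of $\rho$ in $N$ (a $3$-reconstruction code is also a $4$-reconstruction code), so $\rho(n,3;\BID_q)\ge\rho(n,4;\BID_q)\ge \log_q\log_q n - O(1)$.

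For $N\ge 5$, Proposition~\ref{char:id} already states $\nu(\Sigma_q^n;\BID_q)=4$ and hence $\rho(n,N;\BID_q)=0$ for $N\ge 5$; this is just a restatement, so nothing new is needed beyond quoting it.

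I expect the main obstacle to be the lower bound in the $N\in\{1,2\}$ case — specifically showing $\rho(n,2;\BID_q)\ge \log_q n - O(1)$. The clean cases ($N\in\{3,4\}$ via Proposition~\ref{char:id} + Theorem~\ref{thm:confusable-optimal}(i), and $N\ge 5$ via Proposition~\ref{char:id}) are essentially bookkeeping. For $N=2$ one must rule out codes with $o(\log n)$ redundancy; the natural approach is a volume/packing argument showing that if $|\C|$ is too large then some pair of codewords has $|\BID_q(\bx)\cap\BID_q(\by)|\ge 2$, or, more likely, to cite the relevant lower bound from the companion paper~\cite{Chrisnata.arxiv.2020}, paralleling how Theorem~\ref{thm:confusable-optimal} is quoted. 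Indeed, I'd expect the argument to reduce to: a $2$-reconstruction code for $\BID_q$ has no two words at Hamming distance $1$ (since such a pair has $\BI$-intersection $2$), hence lies in a code of minimum Hamming distance $\ge 2$; combined with a deletion-type counting bound this forces $\log_q n - O(1)$ redundancy, matching $\rho(n,1;\BID_q)$ asymptotically. The whole proof is then a short paragraph assembling these pieces.
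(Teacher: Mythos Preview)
Your treatment of $N\in\{3,4\}$ and $N\ge 5$ is correct and matches the paper: upper bound from Corollary~\ref{cor:id}, lower bound by observing that an $(n,N;\BID_q)$-reconstruction code can contain no Type-A-confusable pair (else Proposition~\ref{char:id} gives intersection size $4\ge N$), and then invoking Theorem~\ref{thm:confusable-optimal}(i).

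The gap is in $N=2$. You correctly flag it as the obstacle, but the approaches you sketch are either vague (``a deletion-type counting bound'', ``cite the companion paper'') or too weak (minimum Hamming distance $\ge 2$ by itself only buys one symbol of redundancy, not $\log_q n$). The paper's argument is much cleaner and uses a fact already on the table: by \eqref{eq:id}, $\BD_q(\bx)\cap\BD_q(\by)=\varnothing$ if and only if $\BI_q(\bx)\cap\BI_q(\by)=\varnothing$. Since $\BID_q(\bx)\cap\BID_q(\by)$ is the disjoint union of these two intersections, it is empty exactly when both are, and has size at least $2$ otherwise. Equivalently, Proposition~\ref{char:id} already tells you $|\BID_q(\bx)\cap\BID_q(\by)|\in\{0,2,3,4\}$ --- the value $1$ never occurs. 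Hence $\C$ is an $(n,2;\BID_q)$-reconstruction code if and only if $\C$ is an $(n,1;\BD_q)$-reconstruction code, giving the exact equality $\rho(n,2;\BID_q)=\rho(n,1;\BD_q)=\log_q n+\Theta(1)$ directly from Theorem~\ref{thm:ecc}. You were circling this when you wrote that a $2$-reconstruction code ``must still be a genuine single-insertion or single-deletion correcting code up to $O(1)$'' --- drop the ``up to $O(1)$'', it is an exact equivalence, and the argument is one line.
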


\begin{proof}
	The case for $N=1$ follows from Theorem~\ref{thm:ecc} while the case for $N\ge 5$ follows from Proposition~\ref{char:id}.
	
	For $N=2$, we first claim that $\C$ is an $(n,1;\BD_q)$-reconstruction code if and only if $\C$ is an $(n,2;\BID_q)$-reconstruction code. 
	This follows directly from \eqref{eq:id}. Hence, we have that $\rho(n,2;\BID_q)= \rho(n,1;\BID_q)=\log_q n +\Theta(1)$.
	
	When $N\in\{3,4\}$, the codes in Corollary~\ref{cor:id} have redundancy $\log_q\log_q n + O(1)$ and it remains to provide the corresponding lower bound.
	If $\C$ is an $(n,N;\BID_q)$-reconstruction code, then we claim that every pair of distinct words $\bx$ and $\by$ are not Type-A-confusable. Suppose otherwise. Then Proposition~\ref{char:deletion} implies that $|\BID_q(\bx)\cap \BID_q(\by)|=4\ge N$, a contradiction. Again, applying Theorem~\ref{thm:confusable-optimal}, we have that $\rho(n,2;\BID_q) \ge \log_q\log_q n - O(1)$.
\end{proof}

\begin{theorem}
	Consider the error-ball $\BSD_q$. We have the following.
	\begin{itemize}
		\item If $q=2$, we have that
		\[
		\rho(n,N;\BSD_2) = 
		\begin{cases}
		\log_2 n +\Theta(1), & \mbox{if } N\in\{1,2\},\\
		\log_2 \log_2 n +\Theta(1), & \mbox{if } N=3,\\
		1-o(1), & \mbox{if } N=4,\\
		0, & \mbox{if } N\ge 5.
		\end{cases}
		\]
		\item If $q\ge 3$, we have that
		
		\[
		\rho(n,N;\BSD_q) = 
		\begin{cases}
		\log_q n +O(\log_q\log_q n), & \mbox{if } N\in\{1,2\},\\
		\log_q \log_q n +\Theta(1), & \mbox{if } N=3,\\
		2 - \epsilon(n)  & \mbox{if } N = 4\\
		1, & \mbox{if } 5\le N\le q+1,\\
		0, & \mbox{if } N\ge q+2.
		\end{cases}
		\]
		Here, $0\le \epsilon(n)\le 1$.
	\end{itemize}
	
\end{theorem}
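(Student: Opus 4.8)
The plan is to assemble the six cases of each of the two displays from the characterization results in Section~\ref{sec:intersection}, the code constructions in Section~\ref{sec:construction}, and the lower bounds in Theorem~\ref{thm:confusable-optimal}. The $N=1$ entries are quoted directly from Corollary~\ref{cor:ecc}, and the $N\ge q+2$ (resp.\ $N\ge 5$ for $q=2$) entries from the read-coverage computation $\nu(\Sigma_q^n;\BSD_q)=\max\{q+1,4\}$ in Proposition~\ref{char:sd}, which gives $\rho=0$ immediately. So the real work is in the middle regime $3\le N\le q+1$, handled differently for $q=2$ and $q\ge 3$.

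For $q=2$: the $N=3$ case follows by sandwiching. The upper bound $\log_2\log_2 n+O(1)$ comes from $\CSD(n;c,d)$ in Theorem~\ref{thm:sd}. For the lower bound, I would argue that any $(n,3;\BSD_2)$-reconstruction code $\C$ must have no Type-B-confusable pair: if $\bx,\by\in\C$ were Type-B-confusable, Proposition~\ref{char:sd} gives $|\BSD_2(\bx)\cap\BSD_2(\by)|\ge 3$ (since the intersection is $3$ or $4$ in that case, using that distinct codewords with even weight-parity forced by... actually I need to be careful: $\C$ need not be a parity code here, so I should just note that if $\dH(\bx,\by)=1$ then $|\BSD_2\cap\BSD_2|=q+1=3$, also $\ge 3$, so in all confusable-or-distance-one situations we exceed $3$). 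Thus $\C$ has minimum Hamming distance $\ge 2$ and no Type-B-confusable pair, so in particular no Type-B-confusable pair, and Theorem~\ref{thm:confusable-optimal}(ii) yields $\rho(n,3;\BSD_2)\ge\log_2\log_2 n-O(1)$. For $N=4$: by Proposition~\ref{char:sd}, $|\BSD_2(\bx)\cap\BSD_2(\by)|\ge 4$ forces either $\dH(\bx,\by)=1$ (giving exactly $3<4$, impossible) or $\dH(\bx,\by)=2$ with Type-B-confusability and $m=1$ (giving exactly $4$). So $\C$ is a $(n,4;\BSD_2)$-reconstruction code iff every pair of distinct words has Hamming distance $\ge 2$ and is not Type-B-confusable with $m=1$. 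The upper bound $1-o(1)$... wait, actually $1$ with the $o(1)$ correction comes from $\C_1$ in Theorem~\ref{thm:even}(i), which has exactly one redundant symbol, so $\rho(n,4;\BSD_2)\le 1$; and Theorem~\ref{thm:confusable-optimal}(iv) gives $\rho(n,4;\BSD_2)\ge\log_2 3-o(1)>1-o(1)$. Hmm — I should state the entry as $1-o(1)\le\rho\le 1$, matching the $\ddagger$ note in Table~\ref{redundancy-q2} that $\lim\rho=1$; I would phrase it as $\rho(n,4;\BSD_2)=1-o(1)$ interpreting the $o(1)$ as the (possibly negative) gap, consistent with the table.

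For $q\ge 3$: the $N\in\{1,2\}$ entry is $\log_q n+O(\log_q\log_q n)$ from Corollary~\ref{cor:ecc} (these are the cases the paper explicitly says remain not-asymptotically-tight). For $N=3$, the upper bound is again $\CSD(n;c,d)$ from Theorem~\ref{thm:sd}; the lower bound: an $(n,3;\BSD_q)$-reconstruction code has no Type-B-confusable pair (if it did, $|\BSD_q\cap\BSD_q|\in\{3,4\}\ge 3$) and no distance-one pair (distance one gives $q+1\ge 4\ge 3$), hence min distance $\ge 2$ and no Type-B-confusable pair, so Theorem~\ref{thm:confusable-optimal}(ii) gives $\log_q\log_q n-O(1)$. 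For $N=4$: $|\BSD_q(\bx)\cap\BSD_q(\by)|\ge 4$ forces (for distinct $\bx,\by$) either $\dH=1$ (intersection $q+1\ge 4$) or $\dH=2$, Type-B-confusable, $m=1$ (intersection $4$), or $\dH=2$, Type-A-confusable... no — for $\dH\ge 3$ the intersection is $\le 2$. So a code avoiding intersection $\ge 4$ must have $\dH(\bx,\by)\ge 2$ for all pairs (else intersection $q+1\ge 4$) — so a min-distance-$2$ code — AND no Type-B-confusable pair with $m=1$. The code $\C_2$ of Theorem~\ref{thm:even}(ii) has two redundant symbols and (the proof there shows) exactly this property, so $\rho(n,4;\BSD_q)\le 2$; for the lower bound, Theorem~\ref{thm:confusable-optimal}(iii) gives $\ge 1-o(1)$, and I'd expect that combined with the min-distance-$2$ requirement (which alone forces $\ge 1$) yields $\rho\ge 1$, hence I would write $2-\epsilon(n)$ with $0\le\epsilon(n)\le 1$. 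For $5\le N\le q+1$: the single-parity-check $\C_0$ works by Proposition~\ref{prop:spc} (giving $\rho\le 1$), and since for $N\ge 3$ we still need min distance $\ge 2$ (a distance-one pair has intersection $q+1\ge N$ only when $N\le q+1$ — yes, precisely this range), hence $\rho\ge 1$, so $\rho=1$ exactly.

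The main obstacle I anticipate is the $N=4$, $q\ge 3$ entry: pinning down whether $\epsilon(n)$ can be taken strictly below $1$ requires knowing whether min-distance-$2$-plus-no-$m{=}1$-Type-B-confusability genuinely costs more than two symbols asymptotically, and the honest answer in the paper is that only the bracket $1\le\lim\rho\le 2$ is known (the $\dagger$ marking). So I would not try to prove a tight constant there; I would just assemble the two bounds — $\rho\le 2$ from $\C_2$, $\rho\ge 1$ from the parity/distance constraint refined by Theorem~\ref{thm:confusable-optimal}(iii) — and record the result as $2-\epsilon(n)$ with $0\le\epsilon(n)\le 1$, exactly as stated. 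The bookkeeping for $q=2$ versus $q\ge 3$ (the thresholds $\max\{q+1,4\}$ collapse differently) is routine but needs care at the boundary $N=q+1$ when $q=3$ (where $q+1=4$, so the "$5\le N\le q+1$" range is empty and the $N=4$ entry is $2-\epsilon(n)$, consistent).
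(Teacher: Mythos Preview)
Your overall approach matches the paper's: assemble each case from Corollary~\ref{cor:ecc}, Proposition~\ref{char:sd}, the constructions $\CSD$, $\C_0$, $\C_1$, $\C_2$ of Section~\ref{sec:construction}, and the lower bounds of Theorem~\ref{thm:confusable-optimal}. The $N\in\{1,2\}$, $N=3$, $N=4$ ($q\ge 3$), $5\le N\le q+1$, and $N\ge\max\{q+2,5\}$ arguments are all as in the paper.

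There is one genuine slip, in the $q=2$, $N=4$ case. You correctly observe that a pair at Hamming distance one has $|\BSD_2(\bx)\cap\BSD_2(\by)|=q+1=3<4$, but then you draw the wrong conclusion: such pairs are \emph{allowed} in an $(n,4;\BSD_2)$-reconstruction code, not forbidden. Indeed $\C_1$ from Theorem~\ref{thm:even}(i) has minimum Hamming distance one (flip any odd-indexed coordinate) yet is an $(n,4;\BSD_2)$-reconstruction code. So your ``iff'' characterization is wrong: the correct statement is that $\C$ is an $(n,4;\BSD_2)$-reconstruction code iff no pair is Type-B-confusable with $m=1$, with no minimum-distance requirement. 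Consequently part~(iv) of Theorem~\ref{thm:confusable-optimal} does not apply; you must invoke part~(iii), giving $\rho(n,4;\BSD_2)\ge 1-o(1)$, exactly as the paper does. Note that if your invocation of~(iv) were valid you would get $\rho\ge\log_2 3-o(1)\approx 1.585-o(1)$, contradicting the upper bound $\rho\le 1$ from $\C_1$ --- which is why your ``Hmm'' was warranted.
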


\begin{proof}
	The case for $N\in\{1,2\}$ follows from Corollary~\ref{cor:ecc} while the case for $N\ge \max\{5,q+2\}$ follows from Proposition~\ref{char:sd}.
	
%	For $N=2$, we claim that if $\C$ is an $(n,2;\BSD_q)$-reconstruction code, then $\C$ is an $(n,1;\BS_q)$-reconstruction code. Otherwise, there is pair of distinct words $\bx$ and $\by$ whose Hamming distance is at most two. 
%	Then Proposition~\ref{char:sd} implies that $|\BSD_q(\bx)\cap \BSD_q(\by)|\ge 2$, a contradiction. 
%	Hence, we have that $\rho(n,2;\BSD_q)\ge \rho(n,1;\BS_q)=\log_q n +\Theta(1)$.
%	
	When $N=3$, the codes in Theorem~\ref{thm:sd} have redundancy $\log_q\log_q n + O(1)$ and it remains to provide the corresponding lower bound.
	If $\C$ is an $(n,3;\BSD_q)$-reconstruction code, then we claim that every pair of distinct words $\bx$ and $\by$ are not Type-B-confusable. Suppose otherwise. Then Proposition~\ref{char:sd} implies that $|\BSD_q(\bx)\cap \BSD_q(\by)|\ge 3$, a contradiction. Again, applying Theorem~\ref{thm:confusable-optimal}, we have that $\rho(n,2;\BSD_q) \ge \log_q\log_q n - O(1)$.
	
	When $N=4$, the codes in Theorem~\ref{thm:even} have one and two redundant symbols for the cases $q=2$ and $q\ge 3$, respectively. 
	To obtain the lower bound for $q=2$, we claim that every pair of distinct words $\bx$ and $\by$ in a $(n,4;\BSD_2)$-reconstruction code are not Type-B-confusable with $m=1$. 
	Suppose otherwise. Then Proposition~\ref{char:sd} implies that $|\BSD_q(\bx)\cap \BSD_q(\by)|=4$, a contradiction. Applying Theorem~\ref{thm:confusable-optimal}, we have that $\rho(n,4;\BSD_2) \ge  1 - o(1)$. When $q\ge 3$, we have that every pair of distinct words $\bx$ and $\by$ has Hamming distance at least two. Hence, $\rho(n,4;\BSD_q)\ge 1$.

	When $q\ge 4$ and $5\le N\le q+1$, from Proposition~\ref{char:sd}, we have that $\C$ is an $(n,N;\BSD_q)$-reconstruction code if and only if $\C$ is code with minimum distance two. Hence, $\rho(n,N;\BSD_q)=1$.
\end{proof}

\begin{theorem}
	Consider the error-ball $\BSI_q$. We have the following.
		
	\[
	\rho(n,N;\BSI_q) = 
	\begin{cases}
	\log_2 n +\Theta(1), & \mbox{if } N\in\{1,2\} \mbox{ and } q=2,\\
	\log_q n +O(\log_q\log_q n), & \mbox{if } N\in\{1,2\}\mbox{ and } q\ge 3,\\
	\log_q \log_q n +\Theta(1), & \mbox{if } N=3,\\
	2 - \epsilon(n)  & \mbox{if } N = 4\\
	1, & \mbox{if } 5\le N\le q+2,\\
	0, & \mbox{if } N\ge q+3.
	\end{cases}
	\]
	Here, $0\le \epsilon(n)\le \log_2 (4/3)-o(1)$ if $q=2$, and $0\le \epsilon(n)\le 1$ if $q\ge 3$.
\end{theorem}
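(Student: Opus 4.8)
The plan is to run the same argument as for $\BSD_q$, using Proposition~\ref{char:sd} to convert each threshold on $N$ into a structural constraint on the codewords and then invoking Theorem~\ref{thm:confusable-optimal} for the matching lower bounds. The one observation particular to $\BSI_q$ is that $\BS_q(\bx)$ consists of words of length $n$ while $\BI_q(\bx)$ consists of words of length $n+1$, so
\[
|\BSI_q(\bx)\cap\BSI_q(\by)|=|\BS_q(\bx)\cap\BS_q(\by)|+|\BI_q(\bx)\cap\BI_q(\by)|,
\]
and hence every intersection size is governed by $\dH(\bx,\by)$ together with the confusability type of the pair, exactly as tabulated in Proposition~\ref{char:sd}.

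First I would dispose of the extreme regimes. For $N\ge q+3$ we have $\nu(\Sigma_q^n;\BSI_q)=q+2<N$ by Proposition~\ref{char:sd}, so $\rho=0$; the case $N=1$ is Corollary~\ref{cor:ecc}. For $N=2$ the upper bound is free, since any $(n,1;\BSI_q)$-code has read coverage $0<2$ and hence redundancy at most $\log_q n+O(\log_q\log_q n)$ (and $\log_2 n+\Theta(1)$ when $q=2$) by Corollary~\ref{cor:ecc}. For the lower bound, $\nu(\C;\BSI_q)\le 1$ forces every pair of codewords to have $\dH(\bx,\by)\ge 3$, because $\dH\le 2$ already gives $|\BS_q(\bx)\cap\BS_q(\by)|\ge 2$ by \eqref{eq:char-sub}; hence $|\C|\le A_q(n,3)$, and the sphere-packing bound gives $\rho(n,2;\BSI_q)\ge\log_q\big((q-1)n+1\big)=\log_q n+O(1)$, matching the upper bound (exactly up to $\Theta(1)$ when $q=2$).

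Next, for $N=3$, Theorem~\ref{thm:sd} already provides an $(n,3;\BSI_q)$-code of redundancy $\log_q\log_q n+O(1)$, so only the lower bound is left: I would argue that such a code has no Type-B-confusable pair, since any Type-B-confusable pair differs in exactly two coordinates, hence $\dH=2$, and Proposition~\ref{char:sd} then gives $|\BSI_q(\bx)\cap\BSI_q(\by)|\in\{3,4\}\ge 3$, a contradiction; Theorem~\ref{thm:confusable-optimal}(ii) then yields $\rho(n,3;\BSI_q)\ge\log_q\log_q n-O(1)$. For $5\le N\le q+2$ (nonempty only when $q\ge 3$), I would show that $\nu(\C;\BSI_q)<N$ is \emph{equivalent} to $\C$ having minimum Hamming distance at least $2$: a pair with $\dH=1$ has intersection $q+2\ge N$, while if all pairs have $\dH\ge 2$ then every intersection is at most $4<5\le N$. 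Thus $|\C|\le A_q(n,2)=q^{n-1}$, so $\rho=1$, attained by the single-parity-check code $\C_0$ (Proposition~\ref{prop:spc}).

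The case $N=4$ carries the only nontrivial estimate. The upper bound $\rho(n,4;\BSI_q)\le 2$ follows from $\C_2$, which Theorem~\ref{thm:even} shows is an $(n,4;\BSI_q)$-code for every $q\ge 2$. For the lower bound, a pair with $\dH=1$ has intersection $q+2\ge 4$, so every $(n,4;\BSI_q)$-code has minimum Hamming distance at least $2$, which already gives $\rho(n,4;\BSI_q)\ge 1$ when $q\ge 3$, i.e.\ $\epsilon(n)\in[0,1]$. When $q=2$ I would in addition exclude pairs that are Type-B-confusable with $m=1$ (such a pair has intersection exactly $4$ by Proposition~\ref{char:sd}), so that every pair of codewords simultaneously has $\dH\ge 2$ and is not Type-B-confusable with $m=1$; Theorem~\ref{thm:confusable-optimal}(iv) then yields $\rho(n,4;\BSI_2)\ge\log_2 3-o(1)$, which gives the claimed bound on $\epsilon(n)=2-\rho(n,4;\BSI_2)$. \textbf{The main obstacle} is exactly this last inequality: unlike the other cases it is not elementary and relies entirely on the clique-cover analysis behind Theorem~\ref{thm:confusable-optimal}(iv) in the companion paper; everything else is bookkeeping on top of Proposition~\ref{char:sd}.
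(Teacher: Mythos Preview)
Your proposal is correct and follows essentially the same route as the paper: each regime of $N$ is reduced, via Proposition~\ref{char:sd}, to a structural condition on pairs of codewords, and the lower bounds are then read off from Theorem~\ref{thm:confusable-optimal} (parts (ii) and (iv)) or the Hamming bound. In fact your treatment of $N=2$ is more explicit than the paper's (which simply cites Corollary~\ref{cor:ecc}); your observation that $\nu(\C;\BSI_q)\le 1$ forces minimum Hamming distance $3$ is exactly the content hidden in that citation.
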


\begin{proof}
	The case for $N\in\{1,2\}$ follows from Corollary~\ref{cor:ecc} while the case for $N\ge \max\{5,q+2\}$ follows from Proposition~\ref{char:sd}.
	
%	For $N=2$, we claim that if $\C$ is an $(n,2;\BSI_q)$-reconstruction code, then $\C$ is an $(n,1;\BS)$-reconstruction code. Otherwise, there is pair of distinct words $\bx$ and $\by$ whose Hamming distance is at most two. 
%	Then Proposition~\ref{char:sd} implies that $|\BSI_q(\bx)\cap \BSI_q(\by)|\ge 2$, a contradiction. 
%	Hence, we have that $\rho(n,2;\BSI_q)\ge \rho(n,1;\BS_q)=\log_q n +\Theta(1)$.
%	
	When $N=3$, the codes in Theorem~\ref{thm:sd} have redundancy $\log_q\log_q n + O(1)$ and it remains to provide the corresponding lower bound.
	If $\C$ is an $(n,3;\BSI_q)$-reconstruction code, then we claim that every pair of distinct words $\bx$ and $\by$ are not Type-B-confusable. Suppose otherwise. Then Proposition~\ref{char:sd} implies that $|\BSI_q(\bx)\cap \BSI_q(\by)|\ge 3$, a contradiction. Again, applying Theorem~\ref{thm:confusable-optimal}, we have that $\rho(n,2;\BSI_q) \ge \log_q\log_q n - O(1)$.
	
	When $N=4$, the codes in Theorem~\ref{thm:even} have two redundant symbols for the cases $q=2$ and $q\ge 3$. 
	To obtain the lower bound, we claim that every pair of distinct words $\bx$ and $\by$ in a $(n,4;\BSD_2)$-reconstruction code have Hamming distance at least two and are not Type-B-confusable with $m=1$. Suppose otherwise. Then Proposition~\ref{char:sd} implies that $|\BSI_q(\bx)\cap \BSI_q(\by)|=4$, a contradiction. Hence, $\rho(n,4;\BSI_q)\ge 1$ for all $q$.
	When $q=2$, Theorem~\ref{thm:confusable-optimal} provides a stronger bound. 
	Specifically, we have that $\rho(n,4;\BSI_q) \ge  \log_2 3 - o(1)$.
	
	When $q\ge 3$ and $5\le N\le q+2$, from Proposition~\ref{char:sd}, we have that $\C$ is an $(n,N;\BSI_q)$-reconstruction code if and only if $\C$ is code with minimum distance two. Hence, $\rho(n,N;\BSI_q)=1$.
\end{proof}

\begin{theorem}
	Consider the error-ball $\Bedit_q$. We have the following.
	
	\begin{itemize}
		\item If $q=2$, we have that
		\[
		\rho(n,N;\Bedit_2) = 
		\begin{cases}
		\log_2 n +\Theta(1), & \mbox{if } N\in\{1,2\},\\
		\log_2 \log_2 n +\Theta(1), & \mbox{if } N\in\{3,4\},\\
		2-\epsilon(n), & \mbox{if } N=5,\\
		1-o(1), & \mbox{if } N=6,\\
		0, & \mbox{if } N\ge 7.
		\end{cases}
		\]
		
		Here, $0\le \epsilon(n)\le \log_2 (4/3)-o(1)$.
		
		\item If $q\ge 3$, we have that
		
		\[
		\rho(n,N;\Bedit_q) = 
		\begin{cases}
		\log_q n + O(\log_q\log_q n), & \mbox{if } N\in\{1,2\},\\
		\log_q \log_q n +\Theta(1), & \mbox{if } N\in\{3,4\},\\
		2 - \epsilon(n)  & \mbox{if } N \in \{5,6\}\\
		1, & \mbox{if } 7\le N\le q+3,\\
		0, & \mbox{if } N\ge q+4.
		\end{cases}
		\]
		Here, $0\le \epsilon(n)\le 1$. 
		%For $q\ge 3$ and $N\in\{1,2\}$, we remark that the estimate for $\rho(n,N;\Bedit_q)$ is not tight up to a constant.
	\end{itemize}
\end{theorem}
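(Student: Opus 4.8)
The plan is to establish the two displayed formulas case by case in $N$, running the arguments for $q=2$ and $q\ge 3$ in parallel, and in each regime matching an upper bound from an explicit construction of Section~\ref{sec:construction} against a lower bound obtained by re-expressing the reconstruction constraint $\nu(\C;\Bedit_q)<N$ as the absence of certain confusable pairs (via Proposition~\ref{char:edit}) and then invoking Theorem~\ref{thm:confusable-optimal}. First I would dispose of the extreme regimes: $N\in\{1,2\}$ is exactly Corollary~\ref{cor:ecc}; for large $N$, Proposition~\ref{char:edit} gives $\nu(\Sigma_q^n;\Bedit_q)=\max\{q+3,6\}$, which is $6$ when $q=2$ and $q+3$ when $q\ge3$, so the full space is already a reconstruction code once $N\ge 7$ (resp. $N\ge q+4$), giving $\rho=0$. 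For $q\ge 4$ and $7\le N\le q+3$, the single-parity-check code $\C_0$ of Proposition~\ref{prop:spc} has $\nu(\C_0;\Bedit_q)\le 6<N$, so $\rho\le1$; conversely any $(n,N;\Bedit_q)$-code with $N\le q+3$ cannot contain a pair at Hamming distance $1$, since such a pair has $|\Bedit_q(\bx)\cap\Bedit_q(\by)|=q+3\ge N$, whence the code has minimum Hamming distance at least $2$, at most $q^{n-1}$ codewords, and $\rho\ge1$. Hence $\rho=1$ there.

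For $N\in\{3,4\}$ the upper bound $\log_q\log_q n+O(1)$ is Corollary~\ref{cor:edit}. For the matching lower bound I would note that a Type-A-confusable pair $\bx,\by$ automatically satisfies $\dH(\bx,\by)\ge 2$ (the period-two block $\bc$ of the prescribed shape has length at least $2$), so Proposition~\ref{char:edit} forces $|\Bedit_q(\bx)\cap\Bedit_q(\by)|\ge 4\ge N$; thus an $(n,N;\Bedit_q)$-code contains no Type-A-confusable pair, and Theorem~\ref{thm:confusable-optimal}(i) gives $\rho\ge\log_q\log_q n-O(1)$.

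The remaining regime is $N\in\{5,6\}$, with $N=5$ and $N=6$ split out when $q=2$. The upper bounds come from Theorem~\ref{thm:even}: $\C_2$ (two redundant symbols) handles $N=5$ for $q=2$ and $N\in\{5,6\}$ for $q\ge3$, while $\C_1$ (one redundant symbol) handles $N=6$ for $q=2$. For the lower bounds I would again translate the constraint: for any $N\le 6$ an $(n,N;\Bedit_q)$-code has no Hamming-distance-$1$ pair (intersection $q+3\ge N$), hence minimum distance at least $2$ and $\rho\ge1$, which gives the $q\ge3$ statement (so $0\le\epsilon(n)\le1$). When $q=2$ one additionally forbids Type-B-confusable pairs with $m=1$: for $N=5$ such a pair has intersection $6\ge5$, so both hypotheses of Theorem~\ref{thm:confusable-optimal}(iv) are met and $\rho\ge\log_2 3-o(1)$, i.e.\ $\epsilon(n)\le\log_2(4/3)-o(1)$; for $N=6$ such a pair has intersection $6\ge6$, so Theorem~\ref{thm:confusable-optimal}(iii) applies and $\rho\ge1-o(1)$.

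The \textbf{main obstacle} is not really inside this proof: the substantive input, namely the clique-cover estimates behind Theorem~\ref{thm:confusable-optimal}, is imported from the companion paper~\cite{Chrisnata.arxiv.2020}. What remains here is careful bookkeeping---verifying, separately for $q=2$ and $q\ge3$ and for each value of $N$, exactly which confusability configuration is excluded and at which Hamming distances, so that the right clause of Proposition~\ref{char:edit} and the right clause of Theorem~\ref{thm:confusable-optimal} line up and yield the stated constant and the stated range for $\epsilon(n)$. The one place that needs genuine care is the $N=5,6$ boundary, where one must track whether the reconstruction condition excludes Type-B-confusability with $m=1$ or merely forces minimum distance two; everything else reduces to the trichotomy by Hamming distance already proved in Proposition~\ref{char:edit}.
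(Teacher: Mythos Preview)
Your proposal is correct and follows essentially the same case-by-case route as the paper. Two minor points: Corollary~\ref{cor:ecc} does not actually cover $\Bedit_q$ (nor $N=2$), so for $N\in\{1,2\}$ you need Theorem~\ref{thm:ecc}(iii) for the upper bound together with the observation---which the paper spells out---that an $(n,2;\Bedit_q)$-reconstruction code is necessarily an $(n,1;\BS_q)$-reconstruction code (since $|\BS_q(\bx)\cap\BS_q(\by)|\ge 1$ already forces it to be at least $2$); and for $N\in\{3,4\}$ you forbid Type-A-confusable pairs and invoke Theorem~\ref{thm:confusable-optimal}(i), whereas the paper forbids Type-B-confusable pairs and uses part (ii)---both routes are valid, since either kind of confusable pair yields $|\Bedit_q(\bx)\cap\Bedit_q(\by)|\ge 4$ via Proposition~\ref{char:edit}.
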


\begin{proof}
	The case for $N\ge \max\{7,q+4\}$ follows from Proposition~\ref{char:edit}.
	
	When $N\in\{1,2\}$, we proceed before to show that an $(n,N;\Bedit_q)$-reconstruction code is an $(n,1;\BS_q)$-reconstruction code. Hence, $\rho(n,N;\Bedit_2)\ge \log n +\Theta(1)$. We then consider two subcases. For $q=2$, the codes in Theorem~\ref{thm:ecc} are asymptotically optimal and so, $\rho(n,N;\Bedit_2)=\log_2 n +\Theta(1)$. When $q\ge 3$, the best known $(n,N;\Bedit_q)$-reconstruction codes are those in Theorem~\ref{thm:ecc} and this gives the corresponding upper bound.
	
	When $N\in\{3,4\}$, the codes in Corollary~\ref{cor:edit} have redundancy $\log_q\log_q n + O(1)$ and it remains to provide the corresponding lower bound.
	If $\C$ is an $(n,N;\BSI_q)$-reconstruction code, then we claim that every pair of distinct words $\bx$ and $\by$ are not Type-B-confusable. Suppose otherwise. Then Proposition~\ref{char:edit} implies that $|\Bedit_q(\bx)\cap \Bedit_q(\by)|\ge 3$, a contradiction. Again, applying Theorem~\ref{thm:confusable-optimal}, we have that $\rho(n,2;\BSI_q) \ge \log_q\log_q n - O(1)$.
	
	When $q=2$ and $N=5$, or $q\ge 3$ and $N\in\{5,6\}$, the codes in Theorem~\ref{thm:even} have two redundant symbols.
	To obtain the lower bound, we proceed as before to show that every pair of distinct words $\bx$ and $\by$ in a $(n,N;\Bedit_q)$-reconstruction code have Hamming distance at least two and are not Type-B-confusable with $m=1$. 
	Hence, $\rho(n,N;\Bedit_q)\ge 1$ for all $q$.
	When $q=2$, we have that $\rho(n,4;\BSI_q) \ge  \log_2 3 - o(1)$ from Theorem~\ref{thm:confusable-optimal}.
	
	When $q=2$ and $N=6$, the codes in Theorem~\ref{thm:even} have one redundant symbol.
	For 
	the lower bound, we proceed as before to show that every pair of distinct words $\bx$ and $\by$ in a $(n,N;\Bedit_q)$-reconstruction code have Hamming distance at least two and are not Type-B-confusable with $m=1$. 
	Hence, $\rho(n,N;\Bedit_q)\ge 1$ for all $q$.
	When $q=2$, we have that $\rho(n,4;\BSI_q) \ge  \log_2 3 - o(1)$ from Theorem~\ref{thm:confusable-optimal}.

	When $q\ge 4$ and $7\le N\le q+3$, from Proposition~\ref{char:sd}, we have that $\C$ is an $(n,N;\Bedit_q)$-reconstruction code if and only if $\C$ is code with minimum distance two. Hence, $\rho(n,N;\Bedit_q)=1$.
\end{proof}

\section{Bounded Distance Decoder for Reconstruction Code}
\label{sec:simulations}

In this section, we propose a simple bounded distance decoder that attempts to reconstruct a codeword given a number of noisy reads.
Unlike previous sections, instead of assuming at most one error in each read, 
we consider probabilistic channels where each symbol is deleted, inserted or substituted with certain probabilities.
Then given a fixed number of noisy reads, we perform simulations to approximate the probability of decoder failures for the codes constructed in the previous sections.

A formal description of the bounded distance decoder is as follows.

\vspace{2mm}

\noindent{\bf Bounded Distance Decoder for Reconstruction Codes}. We are given an $(n,N;\Bedit_q)$-reconstruction code $\C$.

{\sc Input}: $Y$, a multiset of $\Nsys$ output words from $\Sigma_q^*$.\\
{\sc Output}: $\dec(Y) = \widehat{\bx} \in \C$ or $\dec(Y)= {\sf fail}$

\begin{enumerate}[(I)]
	\item For each output word $\by$, we generate a list of words that are ``closest'' to $\by$. 
	Here, we discard words whose lengths are not in $\{n-1,n,n+1\}$. Specifically,\\
	for $\by \in Y$
	
	\hspace{10mm}if $\by\in \C$, we set $L(\by)\triangleq\{\by\}$;
	
	\hspace{10mm}else if $|\by| \in \{n-1,n,n+1\}$, we set $L(\by)\triangleq\Bedit_q(\by)\cap \C$;
	
	\hspace{10mm}else if $|\by| \notin \{n-1,n,n+1\}$, we set $L(\by)\triangleq\varnothing$.
	
	Note that $L(\by)\subset \C$.
	\item We set $\widehat{\bx}$ to be the unique codeword that appears in the most number of lists.
	In other words, we set $X \triangleq \arg\max_{\bx \in \C} |\{L(\by): \by\in Y,\, \bx\in L(\by)\}|$.
	If $|X|=1$, then $\dec(Y)\triangleq\widehat{\bx}$, the unique codeword in $X$. Otherwise, $\dec(Y)\triangleq{\sf fail}$.
\end{enumerate} 

Here, the set $Y$ of $\Nsys$ output words is generated from some codeword $\bx\in\C$ and 
each symbol in $\bx$ is either deleted with probability $p_d$, 
has an insertion with probability $p_i$ or
is substituted with another symbol with probability $p_s$.
%\begin{proposition}
%	Consider a $(n,N;\Bedit_q)$-reconstruction code $\C$.
%	Let $\bx\in \C$ and $Y\subseteq \Sigma_q^n$ be a set of output words resulting from $\bx$.
%	If $|Y\cap \Bedit_q(\bx)|$ contains $N$ distinct
%\end{proposition} 
To evaluate the performance of the reconstruction codes given in this paper, 
we assume certain values of $\Nsys$ and error probabilities, and 
perform computer simulations to approximate the probability of decoding failure,
that is, $\dec(Y)={\sf fail}$ or $\widehat{\bx}\ne \bx$. 

\begin{remark}
	Note that the decoder described in this section is {\em not} a maximum likelihood (ML) decoder.
	In a parallel work~\cite{Sabary.2020}, we studied the failure probability of an ML decoder for the case of two noisy reads under a number of different coding scenarios.
	A detailed theoretical analysis of the ML decoder for the codebooks in this paper is deferred to future work.
\end{remark}

\begin{figure}[!t]
	\centering
	\begin{tabular}{ll}
		(a) $\Nsys=5$ & (b) $\Nsys=5$\\
		\includegraphics[height=6cm]{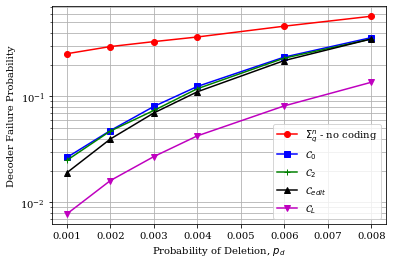} &
		\includegraphics[height=6cm]{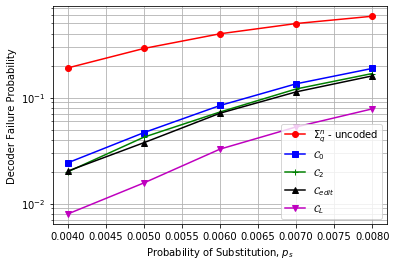}\\
		(c) $\Nsys=10$ & (d) $\Nsys=10$\\
		\includegraphics[height=6cm]{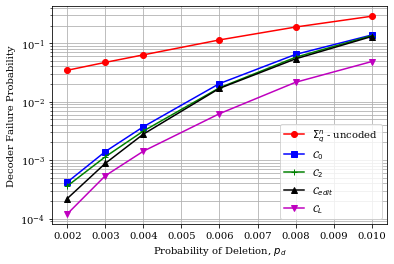} &
		\includegraphics[height=6cm]{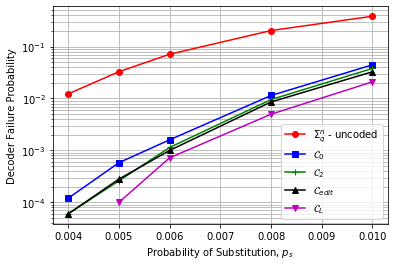}\\
		(e) $\Nsys=15$ & (f) $\Nsys=15$\\
		\includegraphics[height=6cm]{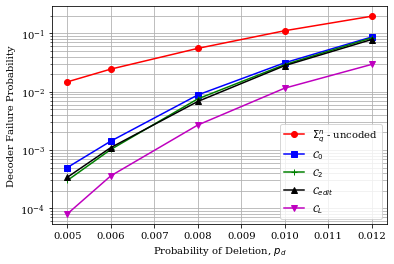} &
		\includegraphics[height=6cm]{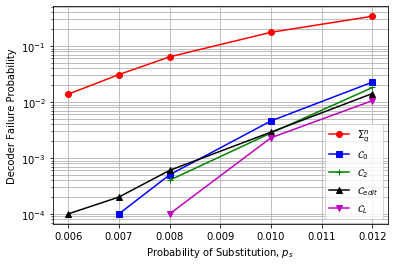}\\		
	\end{tabular}
	\caption{Decoder failure probabilities of reconstruction codes for varying read coverage and deletion probabilities. Note that $\Sigma_4^n$, $\C_0$, $\C_2$, $\Cedit$ and $\C_L$ are $(n,8;\Bedit_4)$-, $(n,7;\Bedit_4)$-, $(n,5;\Bedit_4)$-, $(n,3;\Bedit_4)$-, and $(n,1;\Bedit_4)$-reconstruction codes, respectively. Furthermore, the number of redundant symbols used is zero, one, two, $\approx 3$ and $\approx 8.248$, respectively.
	\vspace{10mm}
}
	\label{fig:simulations}
\end{figure}

For illustrative purposes, we consider the application of {\em DNA-based data storage} as described in Section~\ref{sec:intro}.
Hence, we set $q=4$, the number of nucleobases, 
and set the codeword length $n=152$, according to the capabilities of the current synthesis technologies. 
For the error probabilities, we consider a recent experiment conducted by Organick \etal \cite{Organick.2018} who stored 200MB of data in over 13 million DNA
strands and reported deletion, insertion, and substitution rates
to be $1.5 \times  10^{-3}$, $5.4 \times 10^{-4}$, and $4.5 \times 10^{-3}$, respectively.
In the same paper, the authors also surveyed previous experiments and provided a summary on the number of strands that were sampled for sequencing purposes.
The number of reads per strand vary from five to 51 (see \cite[Figure 1(e)]{Organick.2018}).
Therefore, for our simulations, we adopt the following setup.
\vspace{1mm}

\noindent{\bf Reconstruction codes}. We consider the codes $\Sigma_4^n$, $\C_0$, $\C_2$ and $\Cedit$ described in Section~\ref{sec:construction}.
When $n=152$, these codes use zero, one, two and $\approx 3$ redundant symbols, respectively. For $\Cedit$, we set the parameter $P=15$ for the construction in Corollary~\ref{cor:edit}. 
Note that $\Sigma_4^n$, $\C_0$, $\C_2$ and $\Cedit$ are $(n,8;\Bedit_4)$-, $(n,7;\Bedit_4)$-, $(n,5;\Bedit_4)$- and $(n,3;\Bedit_4)$-reconstruction codes, respectively. 
In addition to these codes, we consider an $(n,1;\Bedit_4)$-reconstruction, or equivalently, a quaternary single-edit-correcting code. When $n = 153$, the best known $(n,1;\Bedit_4)$-reconstruction $\C_L$
uses approximately $8.248$ redundant symbols. The quartenary code $\C_L$ is obtained via a modification of a binary single-edit-correcting code due to Levenshtein~\cite{Levenshtein.1966} and we refer the reader to \cite{Cai.2019} for the details of the construction.

\noindent{\bf Number of noisy reads}. Following \cite{Organick.2018}, we set $\Nsys\in \{5,10,15\}$.

\noindent{\bf Error probabilities}. Following \cite{Organick.2018}, we fix $p_i=6\times 10^{-3}$ and let
let $p_s$ and $p_d$ take values in the ranges $[5\times 10^{-3}, 1.2\times 10^{-2}]$ and $[2\times 10^{-3}, 10^{-2}]$, respectively.
%and let $p_s$ and =5\times 10^{-3}$ and let $p_d$ take values in the range $[2\times 10^{-3}, 10^{-2}]$.
\vspace{1mm}

For each reconstruction code, we pick 10,000 to 50,000 codewords uniformly at random and for each codeword $\bx$, we generate the set $Y$ according to the given error probabilities. After which, we apply the bounded distance decoder and verify whether the decoded word corresponds to $\bx$. The results are presented in Figure~\ref{fig:simulations}.
As expected, the codes that use more redundant symbols perform better in terms of the decoder failure rate. We observe that all codes (with at least one redundant symbol) outperform the uncoded case ($\Sigma_4^n$) with a reduction in decoder failure rate of one or two orders of magnitude. 
Of importance, for large numbers of reads $\Nsys$ and small deletion / substitution rates, we also observe that the coding gain of $\C_L$ (which uses $\approx 8.248$ redundant symbols) is not as significant. This suggests that for these system parameters, reconstruction codes such as $\C_0$ suffice to provide similar performance as classical error-correcting codes. In certain cases, the gain in information rate is significant. In our example where $n=152$, using $\C_0$ in lieu of $\C_L$ results in a gain in information rate of 4.6 percent.

\begin{remark}
	For the DNA-based coding scenario, we envision that the reconstruction codes proposed in this paper will be used as inner codes in a concatenated coding scheme.
	Concatenated coding schemes are commonly used in experiments (see for example, \cite{Organick.2018, Ehrlich.2017}) 
	as the current synthesis technologies are unable to sequence long codewords.
	However, to integrate with these coding schemes, one needs to address issues such as the unordered nature \cite{Lenz.2020} and biochemical constraints of \cite{Immink.2020} DNA strands. 
	Furthermore, even though the decoder proposed in this paper is a hard decision decoder, 
	modifications can be made so that a soft decision to passed to the outer code \cite{Zhang.2018}. 
	However, such issues are beyond the scope of this paper and are deferred to future work.
\end{remark}

\section{Conclusion}

We studied the sequence reconstruction problem in the context when the number of noisy reads $N$ is fixed.
Specifically, for a variety of error-balls $B$, we designed $(n,N;B)$-reconstruction codes for $2\le N \le \nu(\C;B)$ and 
derived their corresponding lower bounds.
Of significance, our code constructions use $o(\log n)$ bits of redundancy and 
in most cases are asymptotically optimal. 
Table~\ref{code:summary} summarizes all code constructions given in this paper.

The framework studied in this work is a natural extension of the classical error-correcting model where only one noisy read is available. 
Within this framework, there are several interesting research directions and 
some of which are partially addressed in our concurrent work \cite{Chrisnata.arxiv.2020, Sabary.2020}.
One direction is to extend our analysis to {the} case of more than one {error}, for example, more than one deletion.
Another direction is to anaylse the decoder failure probability of {an} ML decoder for the reconstruction codes in the paper.
A final direction is to adapt these {reconstruction algorithms} with other design considerations in specific coding scenarios.

%\todo{Show that $\rho(n,4;\BSI)>1$ and $\rho(n,5;\Bedit)>1$.}
%
%When $N>2$, we can similarly obtain the following equalities by applying Propositions~\ref{char:sd} and~\ref{char:edit}.
%
%\begin{proposition}\label{lowerbound-n34}
%Let $n>0$.
%\begin{enumerate}[(i)]
%\item $\rho(n,4;B)=\rho(n,3;B)$ for $B\in\{\BSD,\BSI, \Bedit\}$.
%\item $\rho(n,6;\Bedit)=\rho(n,5;\Bedit)$.
%\end{enumerate}
%\end{proposition}

% Generated by IEEEtran.bst, version: 1.14 (2015/08/26)

\end{document}